\renewcommand{\Function}[2]{%
	\csname ALG@cmd@\ALG@L @Function\endcsname{#1}{#2}%
	\def\jayden@currentfunction{#1}%
}
\newcommand{\funclabel}[1]{%
	\@bsphack
	\protected@write\@auxout{}{%
		\string\newlabel{#1}{{\jayden@currentfunction}{\thepage}}%
	}%
	\@esphack
}
\newcommand{\onlay}{{\rm ONLAY}}
\newtheorem{thm}{Theorem}[section]
\newtheorem{lem}[thm]{Lemma}
\newtheorem{prop}[thm]{Proposition}
\newtheorem{defn}{Definition}[section]
\newcommand{\dfnn}[2]{ \textbf{\emph{[#1]}} {#2}}
\renewcommand{\vec}[1]{\mathbf{#1}}
\newcommand{\eself}{\hookrightarrow^{s}}
\newcommand{\eref}{\hookrightarrow^{r}}
\newcommand{\eancestor}{\hookrightarrow^{a}} 
\newcommand{\eselfancestor}{\hookrightarrow^{sa}} 
\newcommand{\erefz}{\hookrightarrow}
\newcommand{\efork}{\pitchfork}
\newcommand{\hibefore}{\mapsto}
\newcommand{\hbefore}{\rightarrow}
\newcommand{\concur}{\parallel}
\def\BState{\State\hskip-\ALG@thistlm}
\title{\onlay: Online Layering for scalable asynchronous BFT system}
\author{Quan Nguyen}
\author{Andre Cronje}
\affil{FANTOM Lab\\ FANTOM Foundation}
\begin{document}
\maketitle

\begin{abstract}
	
This paper presents a new framework, namely \emph{\onlay}, for scalable asynchronous distributed systems.
In this framework, we propose a consensus protocol $L_{\phi}$, which is based on the Lachesis protocol~\cite{lachesis01}.

 At the core of $L_{\phi}$ protocol, it introduces to use layering algorithm  to achieve practical Byzantine fault tolerance (pBFT) in leaderless asynchronous Directed Acyclic Graph (DAG). Further, we present new online layering algorithms for the evolutionary DAGs across the nodes. Our new protocol achieves determistic scalable consensus in asynchronous pBFT by using assigned layers and asynchronous partially ordered sets with logical time ordering instead of blockchains. The partial ordering produced by $L_{\phi}$ is flexible but consistent across the distributed system of nodes. 

We then present the formal model of our layering-based consensus. The model is generalized that can be applied to abstract asynchronous DAG-based distributed systems.
	 
\end{abstract}

\keywords{Consensus algorithm \and Byzantine fault tolerance \and Online Layering \and Layer assignment \and Partially ordered sets \and Lachesis protocol \and OPERA chain \and Lamport timestamp \and Main chain \and Root \and Clotho \and Atropos \and Distributed Ledger \and Blockchain}

\newpage
\pagenumbering{arabic} 
\tableofcontents 
\newpage
\section{Introduction}\label{ch:intro}

Following the widespead success over cryptocurrences, recent advances in blockchain and its successors have offered secure decentralized consistent transaction ledgers in numerous domains including financial, logistics as well as health care sectors. There have been extensive work \cite{algorand16, algorand17, sompolinsky2016spectre, PHANTOM08} addressing some known limitations, such as long consensus confirmation time and high power consumption, of blockchain-powered distributed ledgers.

In distributed database systems, \emph{Byzantine} fault tolerance (BFT)~\cite{Lamport82} addresses the reliability of the system when up to a certain number (e.g., one-third) of the participant nodes may be compromised. Consensus algorithms~\cite{bcbook15} ensures the integrity of transactions between participants over a distributed network~\cite{Lamport82} and is equivalent to the proof of BFT in distributed database systems~\cite{randomized03, paxos01}. 

For deterministic, completely asynchronous system, Byzantine consensus is not guaranted  with unbounded delays~\cite{flp}. Though achieving consensus is completely feasible for nondeterministic system. 
In practical Byzantine fault tolerance (pBFT), all nodes can successfully reach a consensus for a block in the presence of a Byzantine node \cite{Castro99}. Consensus in pBFT is reached once a created block is shared with other participants and the share information is further shared with others \cite{zyzzyva07, honey16}.

There have been extensive research in consensus algorithms. 
Proof of Work (PoW)~\cite{bitcoin08}, used in the original Nakamoto consensus protocol in Bitcoin, requires exhausive computational work from participants for block generation. Proof Of Stake (PoS)~\cite{ppcoin12,dpos14} uses participants' stakes for generating blocks.
With recent technological advances and innovations, consensus algorithms  \cite{algorand16, algorand17, sompolinsky2016spectre, PHANTOM08} have addressed to improve the consensus confirmation time and power consumption over blockchain-powered distributed ledges. These approaches utilize directed acyclic graphs (DAG)~\cite{dagcoin15, sompolinsky2016spectre, PHANTOM08, PARSEC18, conflux18} to facilitate consensus. 
Examples of DAG-based consensus algorithms include Tangle~\cite{tangle17}, Byteball~\cite{byteball16}, and Hashgraph~\cite{hashgraph16}.
Lachesis protocol~\cite{lachesis01} presents a general model of DAG-based consensus protocols.

\subsection{Motivation}

Lachesis protocols, as introduced in our previous paper~\cite{lachesis01}, is a set of consensus protocols that create a directed acyclic graph for distributed systems. 
We introduced a Lachesis consensus protocol, called $L_0$, which is a DAG-based asynchronous non-deterministic protocol that guarantees pBFT. $L_0$ generates each block asynchronously and uses the OPERA chain (DAG) for faster consensus by confirming how many nodes share the blocks.

In BFT systems, an synchronous approach utilizes a broadcast voting and asks each node to vote on the validity of each block.  
Instead, we aim for an asynchronous system where we leverage the concepts of distributed common knowledge and network broadcast to achieve a local view with high probability of being a consistent global view. 
Each node receives transactions from clients and then batches them into an event block. The new event block is then communicated with other nodes through asynchronous event transmission. During communication, nodes share their own blocks as well as the ones they received from other nodes. Consequently, this spreads all information through the network.
The process is asynchronous and thus it can  increase throughput near linearly as nodes enter the network.

Lachesis protocols $L_0$ \cite{lachesis01} and $L_1$ \cite{fantom18} proposed new approaches which are better than the previous DAG-based approaches. However, both protocols $L_0$ and $L_1$ have posed some limitations in their algorithms, which
can be made simpler and more reliable.



In this paper we are interested in a new consensus protocol that is more reliable to address pBFT in asynchronous scalable DAG. 
Specifically, we are investigating to use the notion of \emph{graph layering} and \emph{hierarchical graph} in graph theory to develop an intuitive model of consensus.

Let $G$=($V$,$E$) be a directed acyclic graph. In graph theory, a layering of $G$ is a topological numbering $\phi$ of $G$ that maps  each vertex $v$ $\in$ $V$ of $G$ to an integer $\phi(v)$ such  that $\phi(v)$ $\geq$ $\phi(u)$ + 1 for every directed edge $(u,v)$ $\in$ $E$. 
That is, a vertex $v$ is laid at layer $j$, if $\phi(v)$=$j$, and the $j$-th layer of $G$ is $V_j$= $\phi^{-1}(j)$. In other words, a layering $\phi$ of $G$ partitions of the set of vertices $V$ into a finite number $l$ of non-empty disjoint subsets (called layers) $V_1$,$V_2$,$\dots$, $V_l$, such that $V$ = $\cup_{i=1}^{l}{V_i}$, and for every edge ($u$,$v$) $\in$ $E$, $u$ $\in$ $V_i$, $v$ $\in$ $V_j$, $1$ $\leq$ $i < j \leq l$.
For a layering $\phi$, $H$=($V$,$E$,$\phi$) is called a hierarchical (layered or levelled) graph. An $l$-layer hierarchical graph can be represented as $H$=($V_1$,$V_2$,$\dots$,$V_l$;$E$).

Figure~\ref{fig:dag-ex}(a) shows an example of block DAG, which is a local snapshot of a node in a three-node network. Figure~\ref{fig:dag-ex}(b) depicts the result of layering applied on the DAG. In the figure, there are 14 layers. Vertices of the same layer are horizontally aligned. Every edge points from a vertex of a higher layer to a vertex of a lower layer. The hierarchical graph depicts a clear structure of the DAG, in which the dependency between blocks are shown uniformly from top to bottom.

\begin{figure}
	\centering
	\subfloat[OPERA chain (block DAG)]{\includegraphics[width=0.32\linewidth]{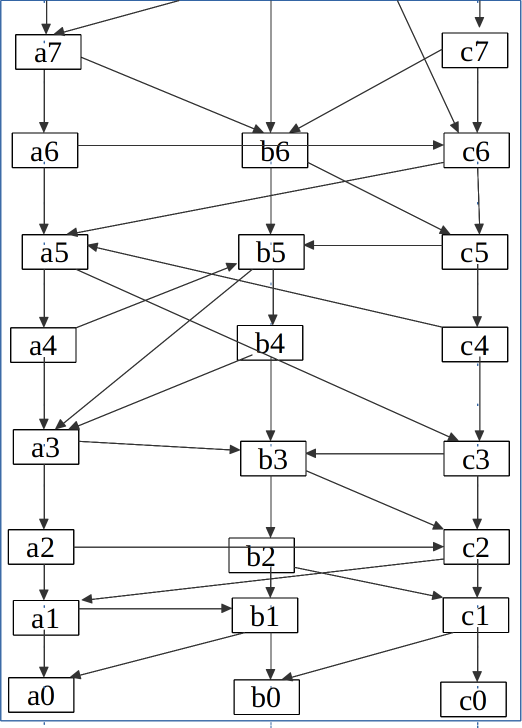}}
	\qquad\qquad
	\subfloat[H-OPERA chain (after layering)]{\includegraphics[width=0.3\linewidth]{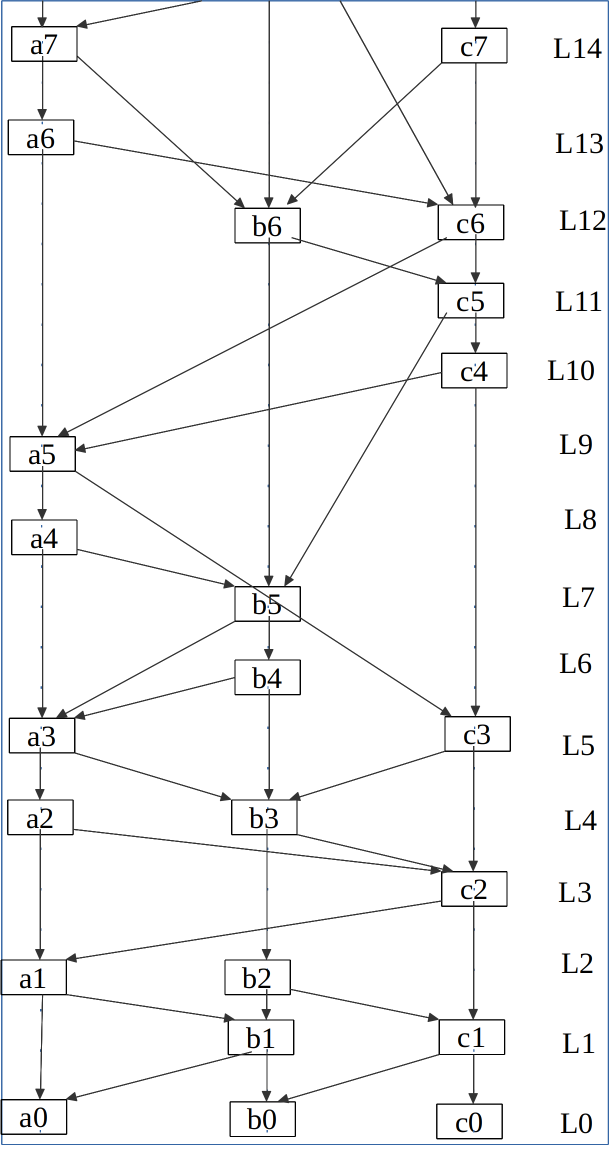}}
	\caption{An example of layered OPERA chain}
	\label{fig:dag-ex}
\end{figure}

The layering gives better structure of DAG. Thus, we aim to address the following questions:
1) Can we leverage the concepts of layering and hierarchical graph on the block DAGs?
2) Can we extend the layering algorithms to handle block DAGs that are evolving over time, i.e., on block creation and receiving? 
3) Can a model use layering to quickly compute and validate the global information and help quickly search for Byzantine nodes within the block DAG?
4) Is it possible to use hierarchical graph to reach consensus of the partical ordering of blocks at finality across the nodes?


\subsection{\onlay\ framework}

In this paper, we present \emph{\onlay}, a new framework for asynchronous distributed systems. \onlay\ leverages asynchronous event transmission for practical Byzantine fault tolerance (pBFT), similar to our previous Lachesis protocol~\cite{lachesis01}. 
The core idea of \onlay\ is to create a leaderless, scalable, asynchronous DAG.
By computing asynchronous partially ordered sets with logical time ordering instead of blockchains, \onlay\ offers a new practical alternative framework for distributed ledgers.

Specifically, we propose a consensus protocol $L_{\phi}$, which is based on Lachesis protocol~\cite{lachesis01}. $L_{\phi}$ protocol introduces an online layering algorithm  to achieve practical Byzantine fault tolerance (pBFT) in leaderless DAG. The protocol achieves determistic scalable consensus in asynchronous pBFT by using assigned layers and asynchronous partially ordered sets with logical time ordering instead of blockchains. The partial ordering produced by $L_{\phi}$ is flexible but consistent across the distributed system of nodes.  

We then present a formal model for the $L_{\phi}$ protocol that can be applied to abstract asynchronous DAG-based distributed system. The formal model is built upon the model of current common knowledge (CCK)~\cite{cck92}.

\begin{figure}[ht]
	\centering
	\includegraphics[width=\linewidth]{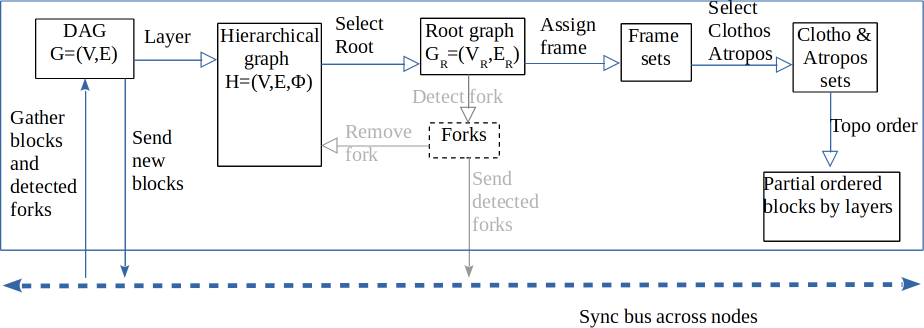}
	\caption{An overview of \onlay\ framework}
	\label{fig:onlay}
\end{figure}

Figure~\ref{fig:onlay} shows the overview of our \onlay\ framework.
In the framework, each node stores and maintain it own DAG. There are multiple steps, including layering, selecting roots, assigning frames, selecting Clothos and Atropos, and topologically ordering event blocks to determine consensus.
The major step of the framework compared to previous approaches is the layering step.

The main concepts of \onlay\ are given as follows:
\begin{description}
	\item[$\bullet$ Event block] An event block is a holder of a set of transactions created by a node and is then transported to other nodes. Event block includes signature, timestamp, transaction records and referencing hashes to previous (parent) blocks.
	\item[$\bullet$ $L_{\phi}$ protocol] sets the rules for event creation, communication and reaching consensus in \onlay.
	\item[$\bullet$ OPERA chain] is the local view of the DAG held by each node. This local view is used to determine consensus.
	\item[$\bullet$ Layering] Layering assigns every  block in OPERA chain a number so that every edge only points from high to low layers.
	\item[$\bullet$ L-OPERA chain] The L-OPERA chain is a hierarchical graph obtained from layering the DAG held by each node.
	\item[$\bullet$ Lamport timestamp] An event block $x$ is said \emph{Happened-before} $y$, if $y$ was created after $x$ was created. \emph{Lamport timestamp} is a timestamp assigned to each event block by a logical clock of a node. It is used to determine a partial order amongst the blocks.
	\item[$\bullet$ Root] An event block is called a \emph{root} if either (1) it is the first generated event block of a node, or (2) it can reach more than two-thirds of other roots.  A \emph{root set} $R_s$ contains all the roots of a frame. A \emph{frame}, denoted by $f$, is a natural number that is assigned to Root sets (and dependent event blocks). 
	\item[$\bullet$ Root graph] Root graph contains roots  as vertices and reachability between roots as edges.
	\item[$\bullet$ Clotho] A Clotho is a root satisfying that it is known by more than $2n/3$ nodes and more than $2n/3$ nodes know that information. 
	\item[$\bullet$ Atropos] An Atropos is a Clotho that is assigned with a consensus time.
	\item[$\bullet$ \onlay\ chain] \onlay\ chain is the main subset of the L-OPERA chain. It contains Atropos blocks and the subgraphs reachable from those Atropos blocks. 
\end{description}


\subsection{Contributions}

In summary, this paper makes the following contributions:
\begin{itemize}
\item We propose a new scalable framework, so-called \onlay, aiming for practical DAG-based distributed ledgers.
\item We introduce a novel consensus protocol $L_{\phi}$, which uses layer algorithm and root graphs, for faster root selection. $L_{\phi}$ protocol uses layer assignment on the DAG to achieve deterministic and thus more reliable consensus.
\item We define a formal model using continuous consistent cuts of a local view to achieve consensus via layer assignment.
\item We formalize our proofs that can be applied to any generic asynchronous DAG-based solution.
\end{itemize}

\subsection{Paper structure}

The rest of this paper is organised as follows.
Section~\ref{se:prelim} presents some preliminaries and background
for our \onlay\ framework. Section~\ref{se:consen} 
introduces our $L_{\phi}$ consensus protocol. The section describes our consensus algorithm that uses layering algorithm to achieve a more reliable and scalable solution to the consensus problem in BFT systems.
Section~\ref{se:onlay} presents important details of our \onlay\ framework.
Section~\ref{se:con} concludes.
Proof of Byzantine fault tolerance is described in Section~\ref{se:appendix}.

\section{Preliminaries}\label{se:prelim}

This section presents some background information as well as related terminologies used in \onlay\ framework.

\subsection{Basic Definitions}

The $L_{\phi}$ protocol sets rules for all nodes representing client machines that forms a network. 
In $L_{\phi}$, a (participant) node is a server (machine) of the distributed system.
Each node can create messages, send messages to, and receive messages from, other nodes. The communication between nodes is asynchronous.


\dfnn{Node}{Each machine participating in the protocol is called a \emph{node}. Let $n_i$ denote the node with the identifier of $i$. Let $n$ denote the total number of nodes.}
 


\dfnn{Process}{A process $p_i$ represents a machine or a \emph{node}. The process identifier of $p_i$ is $i$. A set $P$ = \{1,...,$n$\} denotes the set of process identifiers.}

\dfnn{Channel}{A process $i$ can send messages to process $j$ if there is a channel ($i$,$j$). Let $C$ $\subseteq$ \{($i$,$j$) s.t. $i,j \in P$\} denote the set of channels.}

The basic units of the protocol are called event blocks - a data structure created by a single node as a container to wrap and transport transaction records across the network. Each event block references previous event blocks that are known to the node. This makes a stream of DAG event blocks to form a sequence of history.

The history stored on each node can be represented by a directed acyclic graph $G$=($V$, $E$), where $V$ is a set of vertices and $E$ is a set of edges. Each vertex in a row (node) represents an event. Time flows bottom-to-top (or left-to-right) of the graph, so bottom (left) vertices represent earlier events in history. For a graph $G$, a path $p$ in $G$ is a sequence  of vertices ($v_1$, $v_2$, $\dots$, $v_k$) by following the edges in $E$.
Let $v_c$ be a vertex in $G$.
A vertex $v_p$ is the \emph{parent} of $v_c$ if there is an edge from $v_p$ to $v_c$.
A vertex $v_a$ is an \emph{ancestor} of $v_c$ if there is a path from $v_a$ to $v_c$.

\dfnn{Event block}{An event block is a holder of a set of transactions. The structure of an event block includes the signature, generation time, transaction history, and references to previous event blocks. The first event block of each node is called a \emph{leaf event}.}

Each node can create event blocks, send (receive) messages to (from) other nodes. 
Suppose a node $n_i$ creates an event $v_c$ after an event $v_s$ in $n_i$.  Each event block has exactly $k$ references. One of the references is self-reference, and the other $k$-1 references point to the top events of $n_i$'s $k$-1 peer nodes.

\dfnn{Top event}{An event $v$ is a top event of a node $n_i$ if there is no other event in $n_i$ referencing $v$.}



\dfnn{Ref}{An event $v_r$ is called ``ref" of event $v_c$ if the reference hash of $v_c$ points to the event $v_r$. Denoted by $v_c \eref v_r$. For simplicity, we can use $\erefz$ to denote a reference relationship (either $\eref$ or $\eself$).}

\dfnn{Self-ref}{An event $v_s$ is called ``self-ref" of event $v_c$, if the self-ref hash of $v_c$ points to the event $v_s$. Denoted by $v_c \eself v_s$.}

\dfnn{Self-ancestor}{An event block $v_a$ is self-ancestor of an event block $v_c$ if there is a sequence of events such that $v_c \eself v_1 \eself \dots \eself v_m \eself v_a $. Denoted by $v_c \eselfancestor v_a$.}

\dfnn{Ancestor}{An event block $v_a$ is an ancestor of an event block $v_c$ if there is a sequence of events such that $v_c \erefz v_1 \erefz \dots \erefz v_m \erefz v_a $. Denoted by $v_c \eancestor v_a$.}

For simplicity, we simply use $v_c \eancestor v_s$ to refer both ancestor and self-ancestor relationship, unless we need to distinguish the two cases.


\subsection{OPERA chain}

$L_{\phi}$ protocol uses the DAG-based structure, called the OPERA chain, which was introduced in Lachesis protocols~\cite{lachesis01}. 
For consensus, the algorithm examines whether an event block is reached by more than $2n/3$ nodes, where $n$ is the total number of participant nodes.

Let $G$=($V$,$E$) be a DAG. We extend with $G$=($V$,$E$,$\top$,$\bot$), where $\top$ is a pseudo vertex, called \emph{top}, which is the parent of all top event blocks, and 
$\bot$ is a pseudo vertex, called bottom, which is the child of all leaf event blocks.
With the pseudo vertices, we have $\bot$ happened-before all event blocks. Also all event blocks happened-before $\top$. That is, for all event $v_i$, $\bot \hbefore v_i$ and $v_i \hbefore \top$.

\dfnn{OPERA chain}{The OPERA chain is a graph structure stored on each node. The OPERA chain consists of event blocks and references between them as edges.}

The OPERA chain (DAG) is the local view of the DAG held by each node. 
OPERA chain is a DAG graph $G$=($V$,$E$) consisting of $V$ vertices and $E$ edges. Each vertex $v_i \in V$ is an event block. An edge ($v_i$,$v_j$) $\in E$ refers to a hashing reference from $v_i$ to $v_j$; that is, $v_i \erefz v_j$. This local view is used to identify Root, Clotho and Atropos vertices, and to determine topological ordering of the event blocks.

\dfnn{Leaf}{The first created event block of a node is called a leaf event block.}

\dfnn{Root}{An event block $v$ is a root if either (1) it is the leaf event block of a node, or (2) $v$ can reach more than $2n/3$ of the roots.}

\dfnn{Root set}{The set of all first event blocks (leaf events) of all nodes form the first root set $R_1$ ($|R_1|$ = $n$). The root set $R_k$ consists of all roots $r_i$ such that $r_i$ $\not \in $ $R_i$, $\forall$ $i$ = 1..($k$-1) and $r_i$ can reach more than 2n/3 other roots in the current frame, $i$ = 1..($k$-1).}

\dfnn{Frame}{Frame $f_i$ is a natural number that separates Root sets. The root set at frame $f_i$ is denoted by $R_i$.}

\dfnn{Creator}{If a node $n_i$ creates an event block $v$, then the creator of $v$, denoted by $cr(v)$, is $n_i$.}

\dfnn{Clotho}{A root $r_k$ in the frame $f_{a+3}$ can nominate a root $r_a$ as Clotho if more than 2n/3 roots in the frame $f_{a+1}$ dominate $r_a$ and $r_k$ dominates the roots in the frame $f_{a+1}$.}

\dfnn{Atropos}{An Atropos is a Clotho that is decided as final.}

Event blocks in the subgraph rooted at the Atropos are also final events. Atropos blocks form a Main-chain, which allows time consensus ordering and responses to attacks.

\subsection{Layering Definitions}\label{se:layering}

For a directed acyclic graph $G$=($V$,$E$), a layering is to assign a layer number to each vertex in $G$.

\dfnn{Layering}{A layering (or levelling) of $G$ is a topological numbering $\phi$ of $G$, $\phi: V \rightarrow Z$,  mapping the  set  of  vertices $V$ of $G$ to  integers  such  that $\phi(v)$ $\geq$ $\phi(u)$ + 1 for every directed edge ($u$, $v$) $\in E$.  If $\phi(v)$=$j$, then $v$ is a layer-$j$ vertex and $V_j= \phi^{-1}(j)$ is the jth layer of $G$.}

A layering $\phi$ of $G$ partitions the set of vertices $V$ into a finite number $l$ of \emph{non-empty} disjoint subsets (called layers) $V_1$,$V_2$,$\dots$, $V_l$, such that $V$ = $\cup_{i=1}^{l}{V_i}$. Each vertex is assigned to a layer $V_j$, where $1 \leq j \leq l$, such that every edge ($u$,$v$) $\in E$, $u \in V_i$, $v \in V_j$, $1 \leq i < j \leq l$.

\dfnn{Hierarchical graph}{
For a layering $\phi$, 
the produced graph $H$=($V$,$E$,$\phi$) is a \emph{hierarchical graph}, which is also called an $l$-layered directed graph and could be represented as
$H$=($V_1$,$V_2$,$\dots$,$V_l$;$E$).}

The spans of an edge $e$=($u$,$v$) with $u \in V_i$ and $v \in V_j$ is $j$-$i$ or $\phi(v)$-$\phi(u)$. 
If no edge in the hierarchical has a span greater than one then the hierarchical graph is \emph{proper}.

Let HG denote hierarchical graph.
Let $V^{+}(v)$ denote the outgoing neighbours of $v$; $V^{+}(v)$ =$\{u \in V | (v,u) \in E\}$. Let $V^{-}(v)$ denote the incoming neighbours of $v$; $V^{-}(v)$=$\{ u \in V | (u,v) \in E \}$.

\dfnn{Height}{The height of a hierarchical graph is the number of layers $l$.}

\dfnn{Width}{The width of a hierarchical graph is the number of vertices in the longest layer. that is, $max_{1 \leq i \leq l} |V_i|$}

There  are  three  common  criteria for a layering of a directed acyclic graph.
First, the hierarchical graph should be \emph{compact}. Compactness aims for minimizing the width and the height of the graph. Finding a layering by minimizing the height with respect to a given width is NP-hard.
Second, the  hierarchical  graph  should  be  proper, e.g.,  by  introducing dummy  vertices  into  the  layering  for  every  long  edge  ($u$,$v$) with $u \in V_i$ and $v \in V_j$ where $i <$ $j$-1. Each long edge ($u$,$v$), $u \in V_i$, $v \in V_j$ is replaced by a path ($u$,$v_1$,$v_2$,$\dots$,$v_l$,$v$) where a dummy vertex $v_p$, $i$+1 $\leq$ $p$ $\leq$ $j$-1, is inserted in each intermediate layer $V_{i+p}$. Third, the number of dummy vertices should be kept minimum so as to reduce the running time.

There are several approaches to DAG layering, which are described as follows.

\subsubsection{Minimizing The Height}

\dfnn{Longest Path layering}{
The longest path method  
layers vertex to layer $i$ where $i$ is the length of the longest path from a source.}

Longest path layering is a list scheduling algorithm produces hierarchical graph with the smallest possible height~\cite{bang2008digraphs,Sedgewick2011}.
The main idea is that given an acyclic graph we place the vertices on the $i$-th layer where $i$ is the length of the longest path to the vertex from a source vertex.

Algorithm \ref{algo:longestpath} shows the longest path layering algorithm. The algorithm picks vertices whose incoming edges only come from past layers ($Z$) and assigns them to the current layer $U$. When no more vertex satisfying that condition, the layer is incremented. The process starts over again until all vertices have been assigned a layer.

\begin{algorithm}[H]
	\caption{Longest Path Layering}\label{algo:longestpath}
	\begin{algorithmic}[1]
		\State Require:A DAG G=(V,E)
		\Function{LongestPathLayering}{}  
		\State $U \gets \emptyset$; $Z \gets \emptyset$; $l \gets 1$
		\While{$U \neq V$}
		\State Select vertex $v \in  V \setminus U $ with $V^{+}(v) \subseteq Z$
		\If{$v$ has been selected}
		\State $\phi(v) \gets l$
		\State $U \gets U \cup \{v\}$
		\EndIf
		\If{no vertex has been selected}
		\State $l \gets  l + 1$
		\State $Z \gets Z \cup U$
		\EndIf
		\EndWhile
		\EndFunction
	\end{algorithmic}
\end{algorithm}

The following algorithm similar to the above computes layerings of minimum height.
Firstly, all source vertices are placed in the first layer $V_1$.   
Then, the layer $\phi(v)$ for every remaining vertex $v$ is recursively defined by $\phi(v)$ = $max\{\phi(u) | (u, v) \in E\}$ + 1.  
This algorithm produces a layering where many vertices will stay close to the bottom, and hence the number of layers $k$ is kept minimized.   
The main drawback of this algorithm is that it may produce layers that are too wide. By using a topological ordering of the vertices~\cite{Mehlhorn84a}, the algorithm can be implemented in linear time $O$($|V|$+$|E|$).

\subsubsection{Fixed Width Layering}
The longest path layering algorithm minimizes the height, while  compactness of HG depends on both the width and the height. The problem of finding a layering with minimum height of general graphs is NP-complete if a fixed width is no less than three ~\cite{Garey1990}.
Minimizing the number of dummy vertices guarantees minimum height.

Now we present \emph{Coffman-Graham} (CG) algorithm, which considers a layering with a maximum width~\cite{Coffman1972}, 

\dfnn{Coffman-Graham algorithm}{Coffman-Graham is a layering algorithm in which a maximum width is given. Vertices are ordered by their distance from the source vertices of the graph, and are then assigned to the layers as close to the bottom as possible.}

Coffman-Graham layering algorithm was originated to solve multiprocessor scheduling to minimize the width (regardless of dummy vertices) as well as the height. The Coffman-Graham algorithm is currently the most commonly used layering method.

Algorithm~\ref{algo:coffmangraham} gives the Coffman-Graham algorithm. 
 The algorithm takes as input a reduced graph, i.e., no transitive edges are included in the graph, and a given width $w$. An edge $(u, v)$ is called transitive if a path ($u$=$v_1$,$v_2$,$\dots$,$v_k$=$v$) exists in the graph.
The Coffman-Graham algorithm works in two phases. First, it orders the vertices by their distance from the source vertices of the graph. In OPERA chain, the source vertices are the leaf event blocks. In the second phase, vertices are assigned to the layers as close to the bottom as possible. 

\begin{algorithm}[H]
	\caption{Coffman-Graham Layering}\label{algo:coffmangraham}
	\begin{algorithmic}[1]
		\State Require:A reduced DAG $G=(V,E)$
		\Function{CoffmanGraham}{W}
		\ForAll{$v \in V$}
		$\lambda(v) \gets \infty$
		\EndFor
		\For{$i$ $\gets$ 1 to $|V|$}
		\State Choose $v \in V$ with $\lambda(v)$= $\infty$ s.t. $|V^{-}(v)|$ is minimised
		\State $\lambda(v) \gets 1$
		\EndFor
		\State $l \gets 1; L_1 \gets \emptyset; U \gets \emptyset$
		
		\While{$U \neq V$}
		\State Choose $v \in V \setminus U$ such that $V^{+}(v) \subseteq U$ and $\lambda(v)$ is maximised
		
		\If{$|L_l| \leq W$ and $V^{+}(v) \subseteq L_1 \cup L2 \cup \dots \cup L_{l-1}$}
		\State $L_l \gets L_l \cup \{v\}$
		\Else
		\State	$l \gets l+ 1$
		\State  $L_l \gets \{v\}$
		\EndIf
		\State $U \gets U \cup \{v\}$
		\EndWhile
		\EndFunction
	\end{algorithmic}
\end{algorithm}

Lam and Sethi~\cite{Spinrad1985} showed that the number of layers $l$ of the computed layering with width $w$ is bounded by $l \leq (2-2/w).l_{opt}$, where $l_{opt}$ is the minimum height of all layerings with  width $w$. So, the Coffman-Graham algorithm is an exact algorithm for $w \leq 2$. In certain appplications, the notion of width does not consider dummy vertices.

\subsubsection{Minimizing the Total Edge Span}
The objective to minimize the total edge span (or edge length) is equivalent to minimizing the number of dummy vertices. It is a reasonable objective in certain applications such as hierarchical graph drawing. It can be shown that minimizing the number of dummy vertices guarantees minimum height. 
The problem of layering to minimize the number of dummy vertices can be modelled as an ILP, which has been proved to be very efficient experimentally, even though it does not guarantee a polynomial running time.

\subsection{Lamport timestamps}

Our Lachesis protocol $L_{\phi}$ relies on Lamport timestamps to define a topological ordering of event blocks in OPERA chain.  
By using Lamport timestamps, we do not rely on physical clocks to determine a partial ordering of events.

The ``happened before" relation, denoted by $\rightarrow$, gives a partial ordering of events from a distributed system of nodes. 
For a pair of event blocks $v_i$ and $v_j$, the relation "$\rightarrow$" satisfies: (1) If $v_i$ and $v_j$ are events of the same node $p_i$, and $v_i$ comes before $v_j$, then $v_i$ $\rightarrow$ $v_j$. (2) If $v_i$ is the send($m$) by one process and $v_j$ is the receive($m$) by another process, then $v_i$ $\rightarrow$ $v_j$. (3) If $v_i$ $\rightarrow$ $v_j$ and $v_j$ $\rightarrow$ $v_k$ then $v_i$ $\rightarrow$ $v_k$. 

\dfnn{Happened-Immediate-Before}{An event block $v_x$ is said Happened-Immediate-Before an event block $v_y$ if $v_x$ is a (self-) ref of $v_y$. Denoted by $v_x$ $\hibefore$ $v_y$.}

\dfnn{Happened-before}{An event block $v_x$ is said Happened-Before an event block $v_y$ if $v_x$ is a (self-) ancestor of $v_y$. Denoted by $v_x$ $\hbefore$ $v_y$.}

Happened-before relation is the transitive closure of happens-immediately-before.
``$v_x$ Happened-before $v_y$" means that the node creating $v_y$ knows event block $v_x$.  An event $v_x$ happened before an event $v_y$ if one of the followings happens: (a) $v_y \eself v_x$, (b) $v_y$ $\eref$ $v_x$,  or (c) $v_y$ $\eancestor$ $v_x$. The happened-before relation of events form an acyclic directed graph $G'$ = ($V$,$E'$) such that an edge ($v_i$,$v_j$) $\in$ $E'$ corresponding to an edge ($v_j$,$v_i$) $\in$ $E$.


\dfnn{Concurrent}{Two event blocks $v_x$ and $v_y$ are said concurrent if neither of them  happened before the other. Denoted by $v_x \concur v_y$.}

Given two vertices $v_x$ and $v_y$ both contained in two OPERA chains (DAGs) $G_1$ and $G_2$ on two nodes. We have the following:
(1) $v_x$ $\hbefore$ $v_y$ in $G_1$ if $v_x$ $\hbefore$ $v_y$ in $G_2$;  (2) $v_x$ $\concur$ $v_y$ in $G_1$ if $v_x$ $\concur$ $v_y$ in $G_2$.

\dfnn{Total ordering}{
	Let $\prec$ denote an arbitrary total ordering  of the nodes (processes) $p_i$ and $p_j$. 	\emph{Total ordering} is a relation $\Rightarrow$ satisfying the following: for any event $v_i$ in $p_i$ and any event $v_j$ in $p_j$, $v_i \Rightarrow v_j$ if and only if either (i) $C_i(v_i) < C_j(v_j)$ or (ii) $C_i(v_i)$=$C_j(v_j)$ and $p_i \prec p_j$.}

This defines a total ordering relation. The Clock Condition implies that if $v_i \rightarrow v_j$ then $v_i \Rightarrow v_j$. 

\subsection{State Definitions}

Each node has a local state, a collection of histories, messages, event blocks, and peer information, we describe the components of each.

\dfnn{State}{A (local) state of a process $i$ is denoted by $s_j^i$ consisting of a sequence of event blocks $s_j^i$=$v_0^i$, $v_1^i$, $\dots$, $v_j^i$.}

In a DAG-based protocol, each  event block $v_j^i$ is \emph{valid} only if the reference blocks exist before it. From a local state $s_j^i$, one can reconstruct a unique DAG. That is, the mapping from a local state  $s_j^i$ into a DAG is \emph{injective} or one-to-one. Thus, for \onlay, we can simply denote the $j$-th local state of a process $i$ by the DAG $g_j^i$ (often we simply use $G_i$ to denote the current local state of a process $i$).

\dfnn{Action}{An action is a function from one local state to another local state.}

Generally speaking, an action can be one of: a $send(m)$ action where $m$ is a message, a $receive(m)$ action, and an internal action. A message $m$ is a triple $\langle i,j,B \rangle$ where $i \in P$ is the sender of the message, $j \in P$ is the message recipient, and $B$ is the body of the message. 
In \onlay, $B$ consists of the content of an event block $v$. Let $M$ denote the set of messages.\\

Semantics-wise, there are two actions that can change a process's local state: creating a new event and receiving an event from another process.\\

\dfnn{Event}{An event is a tuple $\langle  s,\alpha,s' \rangle$ consisting of a state, an action, and a state. Sometimes, the event can be represented by the end state $s'$.}

The $j$-th event in history $h_i$ of process $i$ is $\langle  s_{j-1}^i,\alpha,s_j^i \rangle$, denoted by $v_j^i$.

\dfnn{Local history}{A local history $h_i$ of process $i$ is a (possibly infinite) sequence of alternating local states  --- beginning with a distinguished initial state. A set $H_i$ of possible local histories for each process $i$ in $P$.}

The state of a process can be obtained from its initial state and the sequence of actions or events that have occurred up to the current state. In the Lachesis protocol, we use append-only semantics. The local history may be equivalently described as either of the following: (1)
$h_i$ = $s_0^i$,$\alpha_1^i$,$\alpha_2^i$, $\alpha_3^i$ $\dots$, (2)
$h_i$ = $s_0^i$, $v_1^i$,$v_2^i$, $v_3^i$ $\dots$, (3)
$h_i$ = $s_0^i$, $s_1^i$, $s_2^i$, $s_3^i$, $\dots$.

In Lachesis, a local history is equivalently expressed as:
$h_i$ = $g_0^i$, $g_1^i$, $g_2^i$, $g_3^i$, $\dots$
where $g_j^i$ is the $j$-th local DAG (local state) of the process $i$.

\dfnn{Run}{Each asynchronous run is a vector of local histories. Denoted by
	$\sigma$ = $\langle h_1,h_2,h_3,...h_N \rangle$.}

Let $\Sigma$ denote the set of asynchronous runs. We can now use Lamport’s theory to talk about global states of an asynchronous system. A global state of run $\sigma$ is an $n$-vector of prefixes of local histories of $\sigma$, one prefix per process. The happens-before relation can be used to define a consistent global state, often termed a consistent cut, as follows.

\subsection{Consistent Cut}\label{sec:cck} 

An asynchronous system consists of the following sets: a set $P$ of process identifiers, a set $C$ of channels, a set $H_i$ of possible local histories for each process $i$, a set $A$ of asynchronous runs, a set $M$ of all messages.
Consistent cuts represent the concept of scalar time in distributed computation, it is possible to distinguish between a ``before'' and an ``after'', see CCK paper~\cite{cck92}.

\dfnn{Consistent cut}{A consistent cut of a run $\sigma$ is any global state such that if $v_x^i \rightarrow v_y^j$ and $v_y^j$ is in the global state, then $v_x^i$ is also in the global state. Denoted by $\vec{c}(\sigma)$.}

The concept of consistent cut formalizes such a global state of a run. A consistent cut consists of all consistent DAG chains. A received event block exists in the global state implies the existence of the original event block. Note that a consistent cut is simply a vector of local states; we will use the notation $\vec{c}(\sigma)[i]$ to indicate the local state of $i$ in cut $\vec{c}$ of run $\sigma$. 


The formal semantics of an asynchronous system is given via  the satisfaction relation $\vdash$. Intuitively $\vec{c}(\sigma) \vdash \phi$, ``$\vec{c}(\sigma)$ satisfies $\phi$,'' if fact $\phi$ is true in cut $\vec{c}$ of run $\sigma$.

We assume that we are given a function $\pi$ that assigns a truth value to each primitive proposition $p$. The truth of a primitive proposition $p$ in $\vec{c}(\sigma)$ is determined by $\pi$ and $\vec{c}$. This defines $\vec{c}(\sigma) \vdash p$.\\

\dfnn{Equivalent cuts}{Two cuts $\vec{c}(\sigma)$ and $\vec{c'}(\sigma')$ are equivalent  with respect to $i$ if: $$\vec{c}(\sigma) \sim_i \vec{c'}(\sigma') \Leftrightarrow \vec{c}(\sigma)[i] = \vec{c'}(\sigma')[i]$$}

\dfnn{$i$ knows $\phi$}{$K_i(\phi)$ represents the statement ``$\phi$ is true in all possible consistent global states that include $i$’s local state''. 
$$\vec{c}(\sigma) \vdash K_i(\phi) \Leftrightarrow \forall \vec{c'}(\sigma')   (\vec{c'}(\sigma') \sim_i \vec{c}(\sigma) \ \Rightarrow\ \vec{c'}(\sigma') \vdash \phi) $$}

\dfnn{$i$ partially knows $\phi$}{$P_i(\phi)$ represents the statement ``there is some consistent global state in this run that includes $i$’s local state, in which $\phi$ is true.''
$$\vec{c}(\sigma) \vdash P_i(\phi) \Leftrightarrow \exists \vec{c'}(\sigma) ( \vec{c'}(\sigma) \sim_i \vec{c}(\sigma) \ \wedge\ \vec{c'}(\sigma) \vdash \phi )$$}
		
\dfnn{Majority concurrently knows}{The next modal operator is written $M^C$ and stands for ``majority concurrently knows.''
The definition of $M^C(\phi)$ is as follows.

$$M^C(\phi) =_{def} \bigwedge_{i \in S} K_i P_i(\phi), $$ where $S \subseteq P$ and $|S| > 2n/3$.}

This is adapted from the ``everyone concurrently knows'' in CCK paper~\cite{cck92}.
In the presence of one-third of faulty nodes, the original operator ``everyone concurrently knows'' is sometimes not feasible.
Our modal operator $M^C(\phi)$ fits precisely the semantics for BFT systems, in which unreliable processes may exist.\\

\dfnn{Concurrent common knowledge}{The last modal operator is concurrent common knowledge (CCK), denoted by $C^C$. $C^C(\phi)$ is defined as a fixed point of $M^C(\phi \wedge X)$.}

CCK defines a state of process knowledge that implies that all processes are in that same state of knowledge, with respect to $\phi$, along some cut of the run. In other words, we want a state of knowledge $X$ satisfying: $X = M^C(\phi \wedge X)$.	
$C^C$ will be defined semantically as the weakest such fixed point, namely as the greatest fixed-point of $M^C(\phi \wedge X)$.It therefore satisfies:

$$C^C(\phi) \Leftrightarrow  M^C(\phi \wedge C^C(\phi))$$

Thus, $P_i(\phi)$ states that there is some cut in the same asynchronous run $\sigma$ including $i$’s local state, such that $\phi$ is true in that cut.\\

Note that $\phi$ implies $P_i(\phi)$. But it is not the case, in general, that $P_i(\phi)$ implies $\phi$ or even that $M^C(\phi)$ implies $\phi$. The truth of $M^C(\phi)$ is determined with respect to some cut $\vec{c}(\sigma)$. A process cannot distinguish which cut, of the perhaps many cuts that are in the run and consistent with its local state, satisfies $\phi$; it can only know the existence of such a cut.\\ 

\dfnn{Global fact}{Fact $\phi$ is valid in system $\Sigma$, denoted by $\Sigma \vdash \phi$, if $\phi$ is true in all cuts of all runs of $\Sigma$.
	$$\Sigma \vdash \phi 
	\Leftrightarrow (\forall \sigma \in \Sigma)(\forall\vec{c}) (\vec{c}(a) \vdash \phi)$$}
	
Fact $\phi$ is valid, denoted $\vdash \phi$, if $\phi$ is valid in all systems, i.e. 
	$(\forall \Sigma) (\Sigma \vdash \phi)$.\\

\dfnn{Local fact}{A fact $\phi$ is local to process $i$ in system $\Sigma$ if
	 $\Sigma \vdash (\phi \Rightarrow K_i \phi)$.}

\newpage
\section{$L_{\phi}$: Layering-based Consensus Protocol}\label{se:consen}
This section will then present the main concepts and algorithms of the $L_{\phi}$ protocol in our \onlay\ framework.
The key idea of our \onlay\ framework is to use layering algorithms on the OPERA chain. The assigned layers are then used to determine consensus of the event blocks.

For a DAG $G$=($V$,$E$), a layering $\phi$ of $G$ is a mapping $\phi: V \rightarrow Z$, such that $\phi(v)$ $\geq$ $\phi(u)$ + 1 for every directed edge ($u$,$v$) $\in E$.  If $\phi(v)$=$j$, then $v$ is a layer-$j$ vertex and $V_j$= $\phi^{-1}(j)$ is the $j$-th layer of $G$.  In terms of happened before relation, $v_i \hbefore v_j$ iff $\phi(v_i) < \phi(v_j)$.
Layering $\phi$ partitions the set of vertices $V$ into a finite number $l$ of \emph{non-empty} disjoint subsets (called layers) $V_1$,$V_2$,$\dots$, $V_l$, such that $V$ = $\cup_{i=1}^{l}{V_i}$. Each vertex is assigned to a layer $V_j$, where $1\leq j \leq l$, such that every edge ($u$,$v$) $\in E$, $u \in V_i$, $v \in V_j$, $1 \leq i < j \leq l$. Section~\ref{se:layering} gives more details about layering.

\subsection{H-OPERA chain}

Recall that OPERA chain, which is a DAG $G$=($V$,$E$) stored in each node.
We introduce a new notion of H-OPERA chain, which is built on top of the OPERA chain.
By applying a layering $\phi$ on the OPERA chain, one can obtain the hierarchical graph of $G$, which is called H-OPERA chain.

\begin{defn}[H-OPERA chain]
	An H-OPERA chain is the result hierarchical graph $H = (V,E, \phi)$.
\end{defn}

Thus, H-OPERA chain can also be represented by $H$=($V_1$,$V_2$,$\dots$,$V_l$;$E$).
In H-OPERA chain, vertices are assigned with layers such that each edge ($u$,$v$) $\in E$ flows from a higher layer to a lower layer i.e., $\phi(u)$ $>$ $\phi(v)$.

Note that, the hierarchical graph H-OPERA chain produced by \onlay\ is not proper. This is because there always exist a long edge ($u$,$v$) $\in E$ that spans multiple layers i.e. $\phi(u)$ - $\phi(v)$ $>$ 1.
Generally speaking, a layering can produce a proper hierarchical graph by introducing dummy vertices along every long edge.
However, for consensus protocol, such dummy vertices in the resulting proper hierarchical graph are also not ideal as they can increase the computational cost of the H-OPERA itself and of the following steps to determine consensus. Thus, in \onlay\ we consider layering algorithms that do not introduce dummy vertices.

A simple approach to compute H-OPERA chain is to consider OPERA chain as a static graph $G$ and then apply a layering algorithm such as either LongestPathLayer (LPL) algorithm or Coffman-Graham (CG) algorithm on the graph $G$.
LPL algorithm can achieve HG in linear time $O$($|V|$+$|E|$) with the minimum height. CG algorithm has a time complexity of $O$($|V|^2$), for a given $W$. The LPL and CG algorithms are given in  Section~\ref{se:layering}.

The above simple approach that takes a static $G$ works well when $G$ is small, but then may be not applicable when $G$ becomes sufficiently large. As time goes, the dynamic OPERA chain becomes larger and larger, and thus it takes more and more time to process.

\subsection{Online Layer Assignment}
To overcome the limitation of the above simple approach, we introduce online layering algorithms to apply on the evolunary OPERA chain to reduce the computational cost.

In \onlay, the OPERA chain in each node keeps evolving rather than being static. A node can introduce a new vertex and few edges when a new event block was created. The OPERA chain can also be updated with new vertices and edges that a node receives from the other nodes in the network.

\begin{defn}[Dynamic OPERA chain] 
	OPERA chain is a dynamic DAG which is evolving as new event block is created and received.	
\end{defn}

We consider a simple model of dynamic OPERA chain. Let DAG $G$=($V$,$E$) be the current OPERA chain of a node. Let $G'$=($V'$,$E'$)
denote the \emph{diff graph}, which consists of the changes to $G$ at a time, either at block creation or block arrival. We can assume that
the vertex sets $V$ and $V'$ are disjoint, similar to the edges $E$ and $E'$. That is,
 $V \cap V'$ = $\emptyset$ and $E \cap E'$ = $\emptyset$.
At each graph update, the updated OPERA chain becomes $G_{new}$=($V \cup V'$, $E \cup E'$).

An online version of layering algorithm takes as input a new change to OPERA chain and the current layering information and then efficiently computes the layers for  new vertices.

\begin{defn}[Online Layering]
Online layering is an incremental algorithm  to compute layering of a dynamic DAG $G$ efficiently as $G$ evolves.	
\end{defn}

Specifically, we propose two online layering algorithms. \emph{Online Longest Path Layering} (O-LPL) is an improved version of the LPL algorithm. \emph{Online Coffman-Graham} (O-CG) algorithm is a modified version of the original CG algorithm. The online layering algorithms assign layers to the vertices in diff graph $G'$=($V'$,$E'$) consisting of new self events and received unknown events.

\subsubsection{Online Longest Path Layering}

\begin{defn}[Online LPL] 
	Online Longest Path Layering (O-LPL) is a modified version of LPL to compute layering of a dynamic DAG $G$ efficiently.
\end{defn}	

Algorithm~\ref{algo:onlinelongestpath} gives our new algorithm to compute LPL layering for a dynamic OPERA chain. The O-LPL algorithm takes as input the following information: $U$ is the set of processed vertices, $Z$ is the set of already layered vertices, $l$ is the current height (maximum layer number), $V'$ is the set of new vertices, and $E'$ is the set of new edges.

\begin{algorithm}[H]
	\caption{Online Longest Path Layering}\label{algo:onlinelongestpath}
	\begin{algorithmic}[1]
		\State Require:A DAG $G$=($V$,$E$) and $G'$=($V'$,$E'$)
		\Function{OnlineLongestPathLayering}{$U$, $Z$, $l$, $V'$, $E'$}
		\State $V_{new}$ $\gets$ $V \cup V'$, $E_{new}$ $\gets$ $E \cup E'$
		\While{$U \neq V_{new}$}
		\State Select vertex $v \in  V_{new} \setminus U $ with $V_{new}^{+}(v) \subseteq Z$
		\If{$v$ has been selected}	
		\State $l' \gets max\{\phi(u) | u \in V^+_{new}(v) \}$ +1	
		\State $\phi(v)$ $\gets l'$ 
		\State $U \gets U \cup \{v\}$
		\EndIf
		\If{no vertex has been selected}		
		\State $l$ $\gets$  $l$ + 1
		\State $Z \gets Z \cup U$
		\EndIf
		\EndWhile
		\EndFunction
	\end{algorithmic}
\end{algorithm}

The algorithm assumes that the given value of width $W$ is sufficiently large. Section~\ref{sec:layerwidth} gives some discussions about choosing an appropriate value of $W$.
The algorithms has a worst case time complexity of $O$($|V'|$+ $|E'|$).

\subsubsection{Online Coffman-Graham Algorithm}

\begin{defn}[Online CG] 
	Online Coffman-Graham (O-CG) Algorithm is a modified version of the CG algorithm to compute layering of a dynamic DAG $G$ efficiently.
\end{defn}	

Algorithm~\ref{algo:onlinecoffmangraham} gives a new layering algorithm that is based on the original Coffman-Graham algorithm.
The algorithm takes as input the following: $W$ is the fixed maximum width, $U$ is the unprocessed vertices, $l$ is the current height (maximum layer), $\phi$ is the layering assignment, $V'$ is the set of new vertices, $E'$ is the set of new edges. The algorithms has a worst case time complexity of $O(|V'| ^2)$.

\begin{algorithm}[H]
	\caption{Online Coffman-Graham Layering}\label{algo:onlinecoffmangraham}
	\begin{algorithmic}[1]
		\State Require:A reduced DAG $G=(V,E)$ and $G'=(V',E')$
		\Function{OnlineCoffmanGraham}{$W$, $U$, $l$, $\phi$, $V'$, $E'$}
		\ForAll{$v \in V'$}
		$\lambda(v) \gets \infty$
		\EndFor
		\For{$i$ $\gets$ 1 to $|V'|$}
		\State Choose $v \in V'$ with $\lambda(v) = \infty$ s.t. $|V^{-}(v)|$ is minimised
		\State $\lambda(v) \gets 1$
		\EndFor
		\State $V_{new} \gets V \cup V'$; $E_{new} \gets E \cup E'$
		\While{$U \neq V_{new}$}
		\State Choose $v \in V_{new} \setminus U$ such that $V_{new}^{+}(v) \subseteq U$ and $\lambda(v)$ is maximised
		\State $l' \gets max\{\phi(u) | u \in V^+_{new}(v) \}$ +1			
		\If{$|\phi_{l'}| \leq W$ and $V_{new}^{+}(v) \subseteq \phi_1 \cup \phi_2 \cup \dots \cup \phi_{l'-1}$}
		\State $\phi_{l'} \gets \phi_{l'} \cup \{v\}$
		\Else
		\State	$l$ $\gets$ $l$+1
		\State  $\phi_l \gets \{v\}$
		\EndIf
		\State $U \gets U \cup \{v\}$
		\EndWhile
		\EndFunction
	\end{algorithmic}
\end{algorithm}

\subsection{Layer Width in BFT}\label{sec:layerwidth}
This section presents our formal analysis of the layering with respect to the Byzantine fault tolerance.

In distributed database systems, \emph{Byzantine} fault tolerance
addresses the functioning stability of the network in the presence of at most $n/3$ of Byzantine nodes, where $n$ is the number of nodes. A Byzantine node is a node that is compromised, mal-functioned, dies or adversarially targeted. Such nodes can also be called dishonest or faulty.

To address BFT of $L_\phi$ protocol, we present some formal analysis of the layering results of OPERA chain. 
For simplicity, let $\phi_{LP}$ denote the layering of $G$ obtained from LongestPathLayering algorithm. Let $\phi_{CG(W)}$ denote the layering of $G$ obtained from the Coffman-Graham algorithm with some fixed width $W$. 

\subsubsection{Byzantine free network}
We consider the case in which the network is free from any faulty nodes. In this setting, all nodes are honest and thus zero fork exists.

\begin{thm}
In the absense of dishonest nodes, the width of each layer $L_i$ is at most $n$.
\end{thm}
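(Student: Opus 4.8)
The plan is to argue by contradiction on the number of vertices that any single layer $L_i$ can hold. First I would recall the key structural fact from the Byzantine-free hypothesis: since no forks exist, each node $n_j$ produces a single, totally-ordered chain of events linked by self-reference, so the self-ancestor relation $\eselfancestor$ restricted to any one node is a linear order. The goal is to show that a layer $V_i = \phi^{-1}(i)$ can contain at most one event from each node, which immediately gives $|V_i| \le n$.

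\medskip

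Concretely, suppose for contradiction that some layer $V_i$ contains two distinct events $v$ and $v'$ created by the same node $n_j$, i.e. $cr(v) = cr(v') = n_j$. Because there are no forks, the events of $n_j$ form a single self-reference chain, so one of them is a self-ancestor of the other; without loss of generality say $v \eselfancestor v'$. By the \textbf{Self-ancestor} definition this means there is a nonempty chain $v' \eself v_1 \eself \dots \eself v$, and each $\eself$ edge is an edge of the DAG $G$. The layering property $\phi(y) \geq \phi(x) + 1$ for every edge $(x,y) \in E$ (equivalently, the happened-before characterization $v \hbefore v' \iff \phi(v) < \phi(v')$ stated in Section~\ref{se:consen}) then forces $\phi(v) < \phi(v')$ strictly along this chain. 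Hence $\phi(v) \ne \phi(v')$, contradicting the assumption that both lie in layer $V_i$.

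\medskip

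Having established that each layer holds at most one event per creator, I would finish by summing over the $n$ nodes: since every event block has a unique creator by the \textbf{Creator} definition, the events in $V_i$ inject into the set of $n$ node identifiers, so $|V_i| \le n$. This holds for every layer index $i$, which is exactly the claimed bound on the width.

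\medskip

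I expect the main obstacle to be making the ``no forks implies a single self-reference chain'' step fully rigorous, since the paper has not formally defined a fork at this point in the text. The cleanest route is to take as the working definition of a fork the existence of two distinct events $v, v'$ of the same creator with neither $v \eselfancestor v'$ nor $v' \eselfancestor v$ (i.e. two concurrent self-events by one node); then ``no forks'' directly yields that any two same-creator events are $\eselfancestor$-comparable, which is precisely what the argument above consumes. Everything else is an immediate application of the layering inequality, so once the fork assumption is pinned down the proof is short.
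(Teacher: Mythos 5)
Your proof is correct and takes essentially the same route as the paper's: both reduce the width bound to the pigeonhole observation that a layer with more than $n$ events must contain two events of the same creator, and both derive a contradiction with the no-fork hypothesis. The only difference is presentational --- the paper wraps the argument in an (effectively unused) induction on layers and leaves implicit the step you spell out explicitly, namely that two same-creator events in one layer cannot be $\eselfancestor$-comparable, since the layering strictly increases along self-ref edges, so they would constitute a fork.
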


Since there are $n$ nodes, there are $n$ leaf vertices. We can prove the theorem by induction. The width of the first layer $L_1$ is $n$ at maximum, otherwise there exists a fork, which is not possible. We assume that the theorem holds for every layer from 1 to $i$. That is, the $L_i$ at each layer has the width of at most $n$. We will prove it holds for layer $i+1$ as well.
Since each honest node can create at most one event block on each layer, the width at each layer is at most $n$.
We can prove by contradiction. Suppose there exists a layer $|\phi_{i+1}| > n$. Then there must exist at least two events $v_p$, $v_q$ on layer $\phi_{i+1}$ such that $v_p$ and $v_q$ have the same creator, say node $n_i$. That means $v_p$ and $v_q$ are the forks of node $n_i$. It is a contradiction with the assumption that there is no fork. Thus, the width of each layer $|\phi_{i+1}| \leq n$. The theorem is proved.
  
\begin{thm}
	LongestPathLayering(G) and CoffmanGraham(G,n) computes the same layer assignment. That is, $\phi_{LP}$ and $\phi_{CG(n)}$ are the same.
\end{thm}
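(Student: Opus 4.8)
The plan is to show that both algorithms compute one and the same canonical layering, namely the down-pushed longest-path assignment $\phi^{*}$ defined recursively by $\phi^{*}(v) = 1 + \max\{\phi^{*}(u) : u \in V^{+}(v)\}$, with the convention $\max \emptyset = 0$ so that every leaf receives layer $1$. This $\phi^{*}$ is well-defined and unique by induction on the length of the longest reference-path from $v$ down to a leaf, and it is exactly the layering in which each vertex sits one level above its highest-layered reference --- that is, the lowest layer any valid layering could possibly assign it. Proving $\phi_{LP} = \phi^{*} = \phi_{CG(n)}$ then gives the theorem.

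First I would verify that $\phi_{LP} = \phi^{*}$. In Algorithm~\ref{algo:longestpath} a vertex $v$ becomes selectable precisely when $V^{+}(v) \subseteq Z$, and $Z$ accumulates, layer by layer, exactly the vertices already assigned to layers $1,\dots,l-1$. Hence $v$ is first selectable at the iteration where $l-1 = \max\{\phi^{*}(u) : u \in V^{+}(v)\}$, so it receives $\phi_{LP}(v) = l = \phi^{*}(v)$; the refined recurrence $\phi(v) = \max\{\phi(u) \mid (u,v)\in E\}+1$ and the online variant make this identity explicit. A short induction on $l$ upgrades this observation to a full proof.

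The substantive half is $\phi_{CG(n)} = \phi^{*}$, and here the preceding theorem does the heavy lifting. That theorem shows every layer of a valid layering of a Byzantine-free OPERA chain has width at most $n$, because two events of the same node are related by self-ancestry and are therefore strictly separated in layer. Consequently, when Algorithm~\ref{algo:coffmangraham} is run with $W = n$, the width test $|L_l| \leq W$ can never be the reason a new layer is opened: any layer that \textsc{CoffmanGraham} is filling will end with at most $n$ vertices. With the width constraint rendered inert, I would argue that CG opens a new layer for $v$ only when a reference of $v$ already lies in the current layer, which forces $v$ into layer $1 + \max\{\phi(u) : u \in V^{+}(v)\} = \phi^{*}(v)$; and since CG places each vertex as close to the bottom as possible, no vertex is ever pushed strictly above $\phi^{*}(v)$ either.

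The hard part will be making this last step airtight: showing that the ``current layer'' counter $l$ never overshoots $\phi^{*}(v)$ before $v$ is placed. This amounts to the invariant that CG assigns all vertices of $\phi^{*}$-layer $j$ before any vertex of layer $>j$ acquires a layer, which in turn rests on the label $\lambda$ of the first phase together with the fact that a vertex is eligible only once all its references are placed and a reference always occupies a strictly lower layer. I would establish this invariant by induction on $j$, using the width bound to guarantee that every eligible vertex whose references all sit below $l$ can legally join $L_l$. The most delicate point --- and the one genuinely sensitive to how ties in the labeling phase are resolved --- is verifying that no shallow vertex (e.g. an isolated leaf) can be stranded and later dropped into a layer higher than the one $\phi^{*}$ prescribes; this is where I expect the proof to require the greatest care.
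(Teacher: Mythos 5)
Your overall strategy is the same as the paper's one-sentence argument: the width theorem makes the constraint $W=n$ inert, and CG should then collapse to the down-pushed longest-path layering $\phi^{*}$. The first half of your plan ($\phi_{LP}=\phi^{*}$) is fine. But the step you flag as ``most delicate'' is not merely delicate --- it is a genuine gap, and for Algorithm~\ref{algo:coffmangraham} exactly as written the invariant you hope to prove by induction (all $\phi^{*}$-layer-$j$ vertices are placed before any vertex of $\phi^{*}$-layer $>j$ acquires a layer) is \emph{false} under arbitrary tie-breaking, so no induction can establish it. Concretely, take $n=3$ with leaves $\ell_1,\ell_2,\ell_3$ and one further event $e$ of node~1 referencing $\ell_1$ and $\ell_2$. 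Nothing in the labelling phase prevents the outcome $\lambda(\ell_3)<\lambda(\ell_1),\lambda(\ell_2)<\lambda(e)$, since the three leaves have identical keys and ties are broken arbitrarily. Phase two then places $\ell_1,\ell_2$ into $L_1$; at that moment both $\ell_3$ and $e$ are eligible, and $e$ has the larger label, so $e$ is chosen. Its references are not contained in $L_1\cup\dots\cup L_{l-1}=\emptyset$ (still $l=1$), so the counter advances and $L_2=\{e\}$. Now $\ell_3$ is chosen: $|L_2|\leq W$ and $V^{+}(\ell_3)=\emptyset\subseteq L_1$, so $\ell_3$ joins $L_2$. Hence $\phi_{CG(n)}(\ell_3)=2\neq 1=\phi_{LP}(\ell_3)$, on a perfectly legitimate fork-free OPERA chain.

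The defect sits in the algorithm's control flow rather than in your analysis: the layer counter advances whenever the \emph{currently chosen} (maximum-$\lambda$) vertex cannot join layer $l$, even though other eligible vertices still could, which is exactly the ``stranding'' you anticipated. The repair that makes both the theorem and your proof go through is to open a new layer only when \emph{no} eligible vertex can be placed in the current layer (i.e., run CG in its slot-filling form). With that rule your induction closes cleanly and without any appeal to how $\lambda$-ties are resolved: the vertices eligible for the current layer always form an antichain (eligibility forces all ancestors to be already placed), and in a fork-free OPERA chain an antichain contains at most one event per node --- two events of the same node are comparable via self-ancestry --- so it has size at most $n=W$, the width test never fires, and every eligible vertex is absorbed into the current layer, giving $\phi_{CG(n)}=\phi^{*}$. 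Note that the paper itself never confronts this issue: its proof is the single remark that each layer contains at most $n$ blocks, which, as your counterexample-shaped worry correctly suggests, is not by itself sufficient for the algorithm as printed.
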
 

\begin{prop}
For each vertex $v \in G$, $\phi_{LP}(v)$ = $\phi_{CG(n)}(v)$.
\end{prop}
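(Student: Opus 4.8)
The plan is to reduce this pointwise statement to the structural fact, proved in the theorem above, that in a Byzantine-free network every layer of the longest-path layering contains at most $n$ vertices. The key observation is that both algorithms place each vertex \emph{as low as possible}: LongestPathLayering assigns $\phi_{LP}(v) = \max\{\phi_{LP}(u) \mid u \in V^{+}(v)\} + 1$ (with leaves, having $V^{+}(v) = \emptyset$, receiving layer $1$), which is precisely one above the highest layer occupied by an out-neighbour; whereas CoffmanGraham with width $W$ places $v$ at the \emph{lowest} layer $l$ for which $V^{+}(v) \subseteq L_1 \cup \dots \cup L_{l-1}$ \emph{and} $|L_l| < W$. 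Thus the only way the two can disagree is if the width constraint forces CG to push some vertex strictly above its longest-path layer, so the entire argument comes down to showing that with $W = n$ this constraint is never binding.

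First I would record the characterisation of $\phi_{LP}$ as the pointwise-minimum valid layering: for any layering $\psi$ and any vertex $v$ one has $\psi(v) \geq \phi_{LP}(v)$, because $\phi_{LP}(v) - 1$ equals the length of the longest directed path from $v$ to a leaf, and every edge along such a path forces a strict increase of one layer. In particular CG can never place $v$ \emph{below} $\phi_{LP}(v)$; it always places $v$ at some layer $\geq \phi_{LP}(v)$, with equality unless the width cap intervenes.

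The heart of the proof is a strong induction on the layer index $l$ showing that the set of vertices CG assigns to layer $l$ is exactly $S_l := \{v : \phi_{LP}(v) = l\}$. For the base case, leaves are precisely the vertices with no out-neighbours, and both algorithms assign them layer $1$. For the inductive step, assume the claim for all layers below $l$. Any $v \in S_l$ has all of its out-neighbours in layers $< l$ (by the recurrence for $\phi_{LP}$), so by the induction hypothesis they are already placed in CG at layers $< l$, with at least one sitting at layer exactly $l-1$; hence $l$ is the lowest layer satisfying the reachability condition for $v$. Conversely, no vertex $w$ with $\phi_{LP}(w) > l$ can be placed at layer $l$, since $w$ has an out-neighbour in $S_l$ that, by the same step, must already occupy layer $l$ and therefore blocks $w$. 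Consequently the candidates for layer $l$ are exactly $S_l$, and by the width theorem above $|S_l| \leq n = W$, so the cap is slack and every vertex of $S_l$ is admitted to layer $l$. This simultaneously shows CG assigns layer $l$ to precisely $S_l$ and that the constraint never fires, closing the induction and giving $\phi_{CG(n)}(v) = \phi_{LP}(v)$ for every $v$.

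The step I expect to be the main obstacle is the interaction between CG's $\lambda$-labelling/processing order and the width cap: a priori CG might, following its ordering, begin filling a layer with vertices in a sequence that overflows it before the ``correct'' vertices arrive. The clean way around this is exactly the induction above, which never reasons about the order in which CG visits eligible vertices within a layer — it uses only that a vertex becomes eligible once its out-neighbours are placed and that the candidate set $S_l$ for each layer has size at most $n$. I would therefore phrase the inductive step so that it is independent of the tie-breaking rule used to select among eligible vertices, relying solely on the pointwise-minimality of $\phi_{LP}$ and the width bound from the preceding theorem.
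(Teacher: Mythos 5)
Your strategy is the same as the paper's one\-/line argument: the preceding theorem gives that every longest\-/path layer $S_l=\{v:\phi_{LP}(v)=l\}$ has at most $n$ vertices, so with $W=n$ the width cap should never bind and Coffman--Graham should collapse to longest\-/path layering. The pointwise\-/minimality of $\phi_{LP}$, and the half of your induction saying that no vertex $w$ with $\phi_{LP}(w)>l$ can land in layer $l$, are sound. The genuine gap is in the other half, ``every vertex of $S_l$ is admitted to layer $l$,'' and it is exactly the obstacle you name and then claim to sidestep. Your argument rests on describing CoffmanGraham as placing each vertex at the \emph{lowest} layer compatible with its out-neighbours and the cap. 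But Algorithm~\ref{algo:coffmangraham} is sequential: it maintains a single current layer, never reopens a closed layer, and among eligible vertices (those with $V^{+}(v)\subseteq U$) it selects the one with $\lambda$ maximised. Being a feasible candidate for layer $l$ is therefore not enough; $v$ must still be \emph{processed} while the current layer equals $l$, and neither the width bound nor eligibility guarantees that.

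Here is a fork-free instance where the step fails. Take $n=3$ nodes with leaves $x_1,x_2,x_3$ and one further event $a$ with $V^{+}(a)=\{x_1,x_2\}$, while $x_3$ is referenced by nobody. Then $\phi_{LP}$ puts $x_1,x_2,x_3$ in layer $1$ and $a$ in layer $2$. In the labelling phase, $x_3$ and $a$ both have $V^{-}=\emptyset$, so the tie is broken arbitrarily and we may get $\lambda(a)>\lambda(x_3)$ (with the pseudocode as literally written, every $\lambda$ equals $1$, so the phase-two choice is entirely arbitrary anyway). Phase two then places $x_1,x_2$ in $L_1$; at that point both $x_3$ and $a$ are eligible and $a$ is chosen, which closes $L_1$ and opens $L_2=\{a\}$; finally $x_3$ is appended to $L_2$. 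So $\phi_{CG(3)}(x_3)=2\neq 1=\phi_{LP}(x_3)$, even though every layer stayed strictly below the cap $W=n=3$. The width theorem alone cannot close this hole: you need an ordering property of the second phase, e.g.\ that vertices are processed in non-decreasing distance from the sources --- which is how the paper's prose defines CG (``ordered by their distance from the source vertices \dots\ assigned to the layers as close to the bottom as possible'') --- and under that reading your induction does go through. In fairness, the paper's own proof, a single sentence asserting the claim follows ``since each layer contains no more $n$ event blocks,'' glosses over exactly the same point; your attempt is a faithful elaboration of it, but as written it proves the proposition only for the idealized, backfilling version of Coffman--Graham, not for the algorithm the proposition is about.
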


In the absence of forks, proofs of the above theorem and proposition are easily obtained, since each layer contains no more $n$ event blocks.

\subsubsection{1/3-BFT}

Now, let us consider the layering result for the network in which has at most $n/3$  dishonest nodes.

Without loss of generality, let $w_p$ be the probability that a node can create fork.
Let $w_c$ be the maximum number of forks a faulty node can create at a time. The number of forked events a node can create at a time is $w_p w_c$.
The number of fork events at each layer is given by 
$$W_{fork} =\frac{1}{3} N w_p w_c$$
Thus, the maximum width of each layer is given by:
$$W_{max} = N + W_{fork} = N + \frac{1}{3} N w_p w_c$$

Thus, if we set the maximum width high enough to tolerate the potential existence of forks, we can achieve BFT of the layering result. The following theorem states the BFT of LPL and CG algorithms.

\begin{thm}
	LongestPathLayering($G$) and CoffmanGraham($G$, $W_{max}$) computes the same layer assignment. That is, $\phi_{LP}$ and $\phi_{CG(W_{max})}$ are identical.
\end{thm}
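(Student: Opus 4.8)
The plan is to show that once the width cap is set to $W_{max}$, the width constraint inside Coffman--Graham never fires, so the algorithm degenerates into Longest Path Layering and the two assignments coincide vertex by vertex. This is exactly the argument behind the Byzantine-free case $W = n$, now carried out with the fork-inflated bound $W_{max} = N + \frac{1}{3} N w_p w_c$ in place of $n$.

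First I would record the recurrence that both algorithms actually use. For a vertex $v$ all of whose out-neighbours are already layered, each computes $l'(v) = \max\{\phi(u) \mid u \in V^{+}(v)\} + 1$, with a leaf (having $V^{+}(v) = \emptyset$) assigned $l' = 1$. LPL always sets $\phi_{LP}(v) = l'(v)$; CG sets $\phi_{CG(W)}(v) = l'(v)$ as well, \emph{unless} layer $l'(v)$ is already full, in which case $v$ is bumped upward. Hence the theorem reduces to the single claim: running CG with $W = W_{max}$, no vertex is ever bumped.

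Second I would invoke the width bound established just above this theorem, namely that in the $1/3$-BFT setting every layer of the longest-path layering holds at most $W_{max}$ vertices, and then run a strong induction on the order in which CG processes its vertices. Since CG processes $v$ only once $V^{+}(v) \subseteq U$, the induction hypothesis gives $\phi_{CG}(u) = \phi_{LP}(u)$ for every out-neighbour $u$, so the $l'(v)$ computed by CG equals $\phi_{LP}(v)$ and the ancestor condition $V^{+}(v) \subseteq \phi_1 \cup \dots \cup \phi_{l'-1}$ holds automatically. The only vertices currently occupying CG-layer $l'(v)$ are, again by the hypothesis, previously processed vertices $u$ with $\phi_{LP}(u) = l'(v)$; there are at most $|V^{LP}_{l'(v)}| - 1 \leq W_{max} - 1 < W_{max}$ of them, since the slot for $v$ itself is still unfilled. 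The width check therefore passes, $v$ lands at $l'(v) = \phi_{LP}(v)$, and the induction closes, giving $\phi_{CG(W_{max})} = \phi_{LP}$ everywhere; specializing to a fork-free network ($W_{max} = n$) recovers the earlier Theorem and Proposition as a corollary.

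The hard part will not be the induction, which is routine, but justifying that $W_{max}$ is a genuine deterministic ceiling on every layer's width rather than an expected fork count: the preceding derivation of $W_{max} = N + \frac{1}{3} N w_p w_c$ is phrased through a fork probability $w_p$ and a per-step fork bound $w_c$, so I would need to argue (or assume as a worst-case hypothesis) that these parameters bound the actual number of same-creator events landing on any single layer, so that no layer can exceed $W_{max}$. If that bound is only probabilistic, the conclusion weakens to ``identical with high probability,'' and I would state it that way. Minor loose ends I would also pin down — none of which affect the argument once the width bound is secured — are the pseudocode conventions: the $\le W$ versus $< W$ off-by-one in the capacity test, the value of $\max$ over an empty out-neighbour set for leaves, and the precise ``as close to the bottom as possible'' placement rule for the static CG variant.
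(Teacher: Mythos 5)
Your proposal is correct and takes essentially the same route as the paper: the paper's entire ``proof'' is the one-line remark that, with the assumed bound $W_{max}$ on the width of every layer, the theorem ``can be easily proved'' --- i.e.\ the width test in Coffman--Graham never fires, so it degenerates into longest-path layering, which is precisely the induction you spell out. Your flagged caveats (that $W_{max}$ is derived via fork probabilities and must be read as a deterministic worst-case bound, and that the pseudocode's placement rule needs pinning down) are gaps in the paper's own treatment, not in your argument relative to it.
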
 

\begin{prop}
	For each vertex $v \in G$, $\phi_{LP}(v) = \phi_{CG(W_{max})}(v)$.
\end{prop}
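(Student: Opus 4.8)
The plan is to prove the stronger, vertex-wise statement by induction on the order in which the Coffman--Graham procedure processes the vertices, showing that at every step CG assigns the vertex exactly the layer that longest-path layering would. The whole argument rests on two facts: the recursive characterization of LPL, namely $\phi_{LP}(v) = \max\{\phi_{LP}(u) \mid u \in V^{+}(v)\} + 1$ with the leaf events sitting on layer $1$ (using the convention $\max \emptyset + 1 = 1$); and the width bound established in the $1/3$-BFT analysis above, that under $\phi_{LP}$ every layer contains at most $W_{max} = n + \frac{1}{3} n\, w_p w_c$ vertices.

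First I would set up the induction. The CG selection rule picks a vertex $v$ only when $V^{+}(v) \subseteq U$, i.e. when all of $v$'s outgoing neighbours have already been assigned layers; hence when $v$ is processed we may assume, by the inductive hypothesis, that every $u \in V^{+}(v)$ already carries $\phi_{CG(W_{max})}(u) = \phi_{LP}(u)$. The algorithm then computes the candidate layer $l' = \max\{\phi(u) \mid u \in V^{+}_{new}(v)\} + 1$, which by the hypothesis equals $\max\{\phi_{LP}(u) \mid u \in V^{+}(v)\} + 1 = \phi_{LP}(v)$. The accompanying placement condition $V^{+}_{new}(v) \subseteq \phi_1 \cup \dots \cup \phi_{l'-1}$ holds automatically, since every outgoing neighbour lives on a layer strictly below $l'$ by the very definition of $l'$.

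Next I would show the width test $|\phi_{l'}| \le W_{max}$ never fails, so CG is forced onto layer $l'$ rather than being pushed up to a fresh top layer. By the inductive hypothesis every already-placed vertex sits on its LPL layer, so the vertices currently occupying layer $l' = \phi_{LP}(v)$ form a subset of the LPL-layer-$l'$ vertices excluding $v$ itself; their number is therefore at most $W_{max} - 1 \le W_{max} = W$. Hence the test passes, CG places $v$ on $l' = \phi_{LP}(v)$, and the induction closes. Running this over all vertices yields $\phi_{LP}(v) = \phi_{CG(W_{max})}(v)$ for every $v$, which also establishes the companion theorem that the two layerings are identical.

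The delicate point, and the step I expect to be the main obstacle, is pinning down the width bound so that the choice $W = W_{max}$ genuinely makes the CG width constraint slack. In the Byzantine-free setting this was immediate, since each layer holds at most $n$ blocks (one per honest node); here I must argue that the fork contribution at any layer is capped by $\frac{1}{3} n\, w_p w_c$, so that no honest-plus-forked layer can exceed $W_{max}$. A secondary subtlety is that CG's choice of which eligible vertex to process next is not unique; I would observe that the induction above is insensitive to this tie-breaking, because the candidate layer $l'$ depends only on the already-fixed layers of $v$'s outgoing neighbours and not on the processing order, so every admissible CG run produces the same assignment as $\phi_{LP}$.
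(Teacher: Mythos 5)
Your induction hinges on the claim that when CG processes $v$ it ``computes the candidate layer $l' = \max\{\phi(u) \mid u \in V^{+}(v)\} + 1$'' and places $v$ there if the width test passes. That placement rule belongs to the paper's \emph{online} variant O-CG (Algorithm 4), not to the algorithm the proposition actually names: $\phi_{CG(W_{max})}$ is defined as the output of CoffmanGraham($G$, $W_{max}$), i.e.\ the batch Algorithm 2. That algorithm never computes $l'$. It maintains a single current layer $L_l$, inserts the chosen vertex $v$ into $L_l$ whenever $|L_l| \leq W$ and $V^{+}(v) \subseteq L_1 \cup \dots \cup L_{l-1}$, and otherwise opens a brand-new top layer; it can never place a vertex on a layer below the current one. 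So the pivotal step of your argument --- ``the width test never fails, hence CG is forced onto layer $l'$'' --- does not describe anything Algorithm 2 does, and the inductive step does not go through for the algorithm in the statement.

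This is not merely presentational, because for Algorithm 2 the conclusion is genuinely tie-breaking dependent, which is exactly where your final order-insensitivity claim fails. Consider a fork-free OPERA chain on three nodes with leaf events $a_1, b_1, c_1$, where nodes $a$ and $b$ exchange events building a tall tower ($a_2$ references $a_1,b_1$; $b_2$ references $b_1,a_2$; $a_3$ references $a_2,b_2$; and so on) while nothing ever references $c_1$. All three leaves are sources, so they tie in phase 1 and $c_1$ may legitimately receive the smallest label $\lambda$; since phase 2 always picks the eligible vertex of maximum $\lambda$, the vertex $c_1$ remains eligible but unchosen while every tower event, having an out-neighbour on the current top layer, forces a new layer. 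When $c_1$ is finally processed, the tests $|L_l| \leq W$ and $\emptyset \subseteq L_1 \cup \dots \cup L_{l-1}$ pass trivially, so Algorithm 2 places $c_1$ on the current top layer $l$, whereas $\phi_{LP}(c_1) = 1$. Hence $\phi_{CG(W_{max})} \neq \phi_{LP}$ for that admissible run, no matter how large the width is. Your width analysis itself is fine, and your whole argument is essentially a correct proof for O-CG (Algorithm 4), whose rule really is $l' = \max + 1$; to be fair, the paper's own ``proof'' is a one-sentence assertion that never confronts this issue. But to prove the proposition as stated you would either have to reinterpret the batch algorithm as ``place $v$ on the lowest non-full layer above all of $V^{+}(v)$'' (the scheduling formulation of Coffman--Graham), or add a constraint on the processing order --- and the example shows no such argument exists for Algorithm 2 as written.
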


With the assumption about the maximum number of forks at each layer, the above theorem and proposition can be easily proved.

\subsection{Root Graph}\label{se:rootgraph}

We propose to use a new data structure, called \emph{root graph}, which is used to determine frames.

\begin{defn}[Root graph] 
A root graph $G_R$=($V_R$, $E_R$) is a directed graph consisting of vertices as roots and edges represent their reachability.
\end{defn}

In the root graph $G_R$, the set of roots $V_R$ is a subset of $V$. The set of edges $E_R$ is the reduced edges from $E$, in that ($u$,$v$) $\in$ $E_R$ only if $u$ and $v$ are roots and $u$ can reach $v$ following edges in $E$ i.e., $v \hbefore u$.

A root graph contains the $n$ genesis vertices i.e., the $n$ leaf event blocks. 
A vertex $v$ that can reach at least 2/3$n$ + 1 of the current set of all roots $V_R$ is itself a root. For each root $r_i$ that new root $r$ reaches, we include a new edge ($r$, $r_i$) into the set of root edges $E_R$.
Note that, if a root $r$ reaches two other roots $r_1$ and $r_2$ of the same node, then we retain only one edge ($r$, $r_1$) or ($r$, $r_2$) if $\phi(r_1) >\phi(r_2)$. This requirement makes sure each root of a node can have at most one edge to any other node.

Figure~\ref{fig:daglayerrootframe-3} shows an example of the H-OPERA chain, which is the resulting hierarchical graph from a laying algorithm. Vertices of the same layer are placed on a horizontal line. There are 14 layers which are labelled as L0, L1, ..., to L14. Roots are shown in red colors.
 Figure~\ref{fig:rootgraph-3} gives an example of the root graph constructed from the H-OPERA chain.

\begin{figure}
	\centering			
	\subfloat[Hierarchical graph with frame assignment]{\label{fig:daglayerrootframe-3}
		\includegraphics[width=0.28\linewidth]{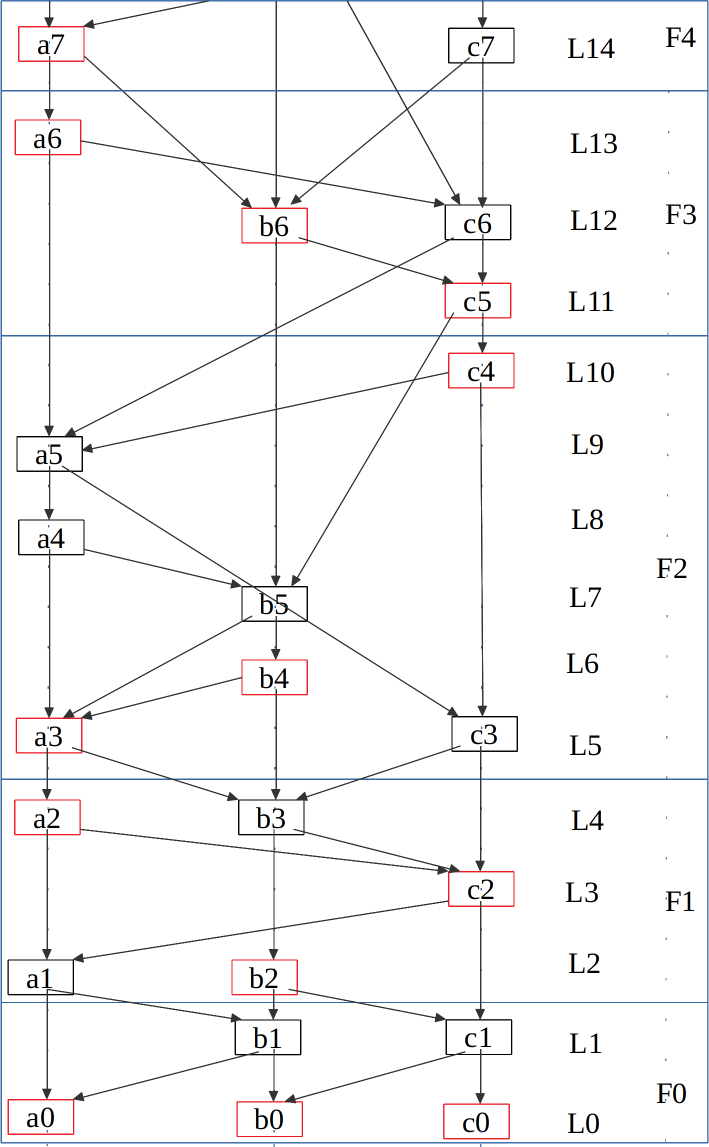}}
	\quad
	\subfloat[Root graph]{\label{fig:rootgraph-3}\includegraphics[width=0.28\linewidth]{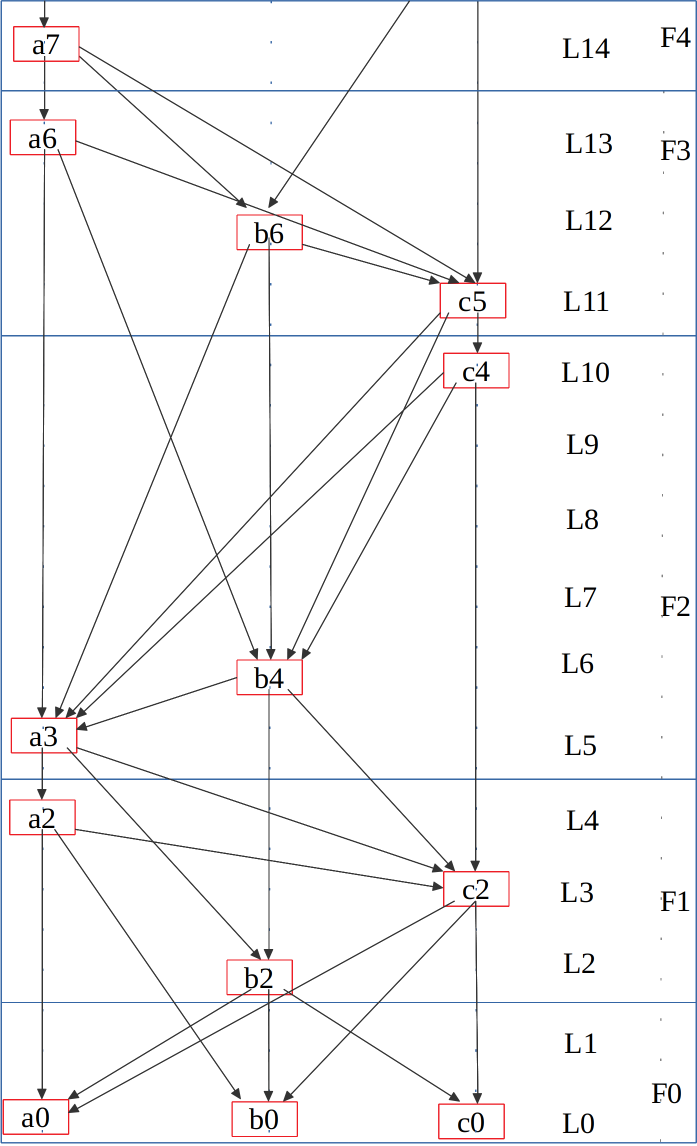}}
	\quad
	\subfloat[Root-layering of a root graph]{\label{fig:rootgraph-3-layer}\includegraphics[width=0.3\linewidth]{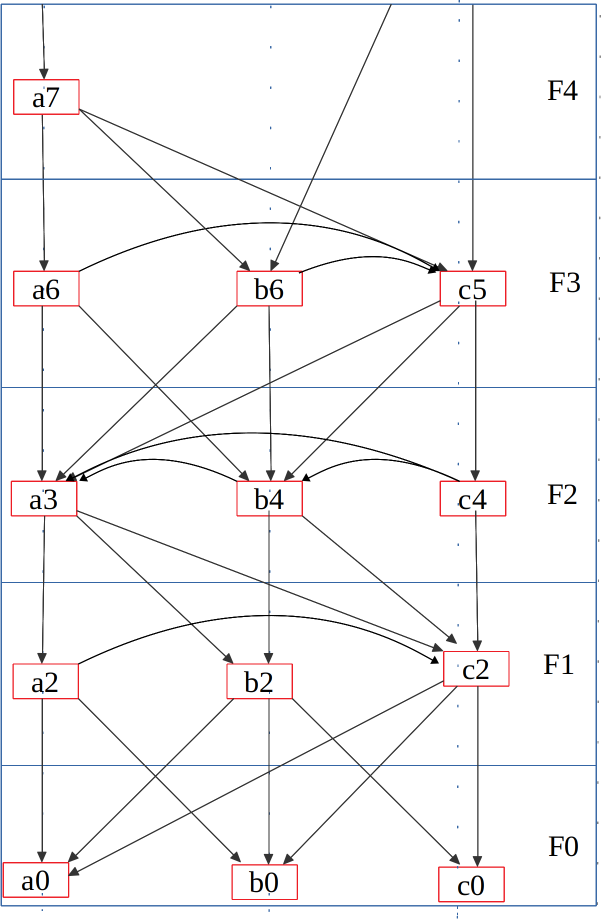}}
	\caption{An example of hierachical graph and root graph computed from the OPERA chain in Figure~\ref{fig:dag-ex}}
	\label{fig:layerframe}	
\end{figure}

Figure~\ref{fig:5} shows a H-OPERA chain (hierarchical graph resulted from layering an OPERA chain) on a single node in a 5-node network.
Figure~\ref{fig:rootgraph-5} depicts the root graph from the H-OPERA chain. There are 14 layers in the example.

\begin{figure}
	\centering
	\subfloat[Hierarchical graph]{\label{fig:5}
	\includegraphics[width=0.27\linewidth]{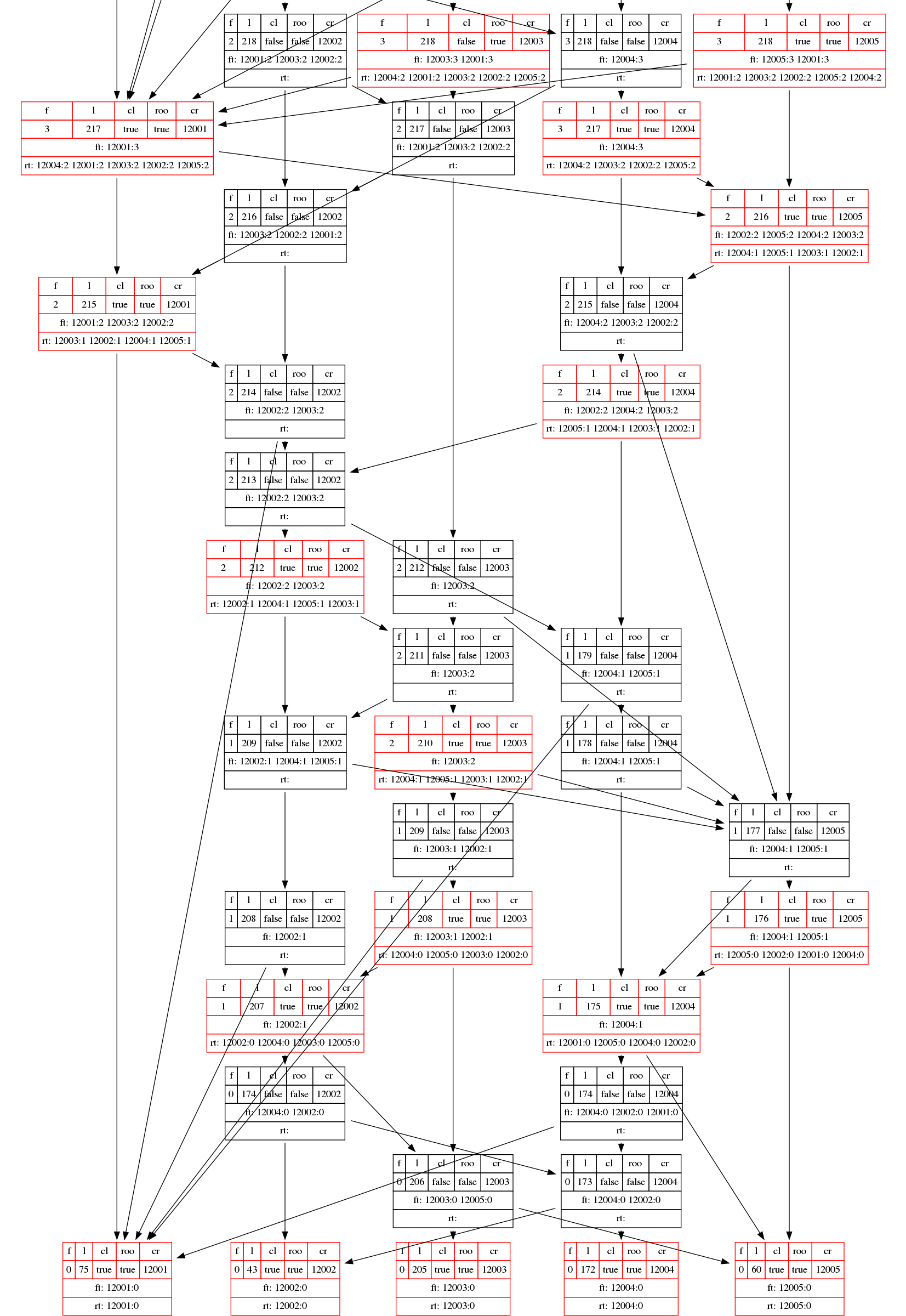}}
\quad
\subfloat[Root graph]{\label{fig:rootgraph-5}
	\includegraphics[width=0.3\linewidth]{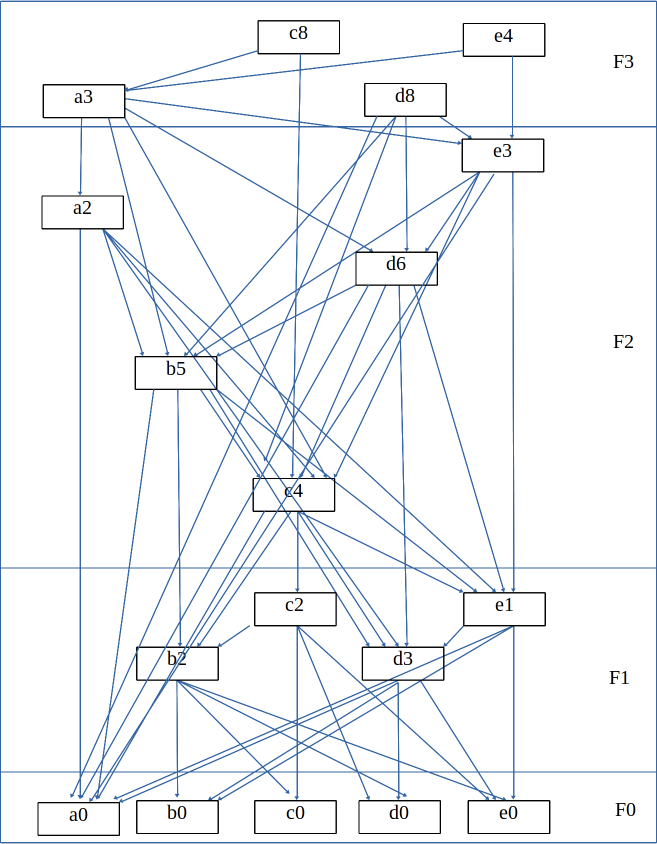}}
\quad
\subfloat[Root-layering of root graph]{\label{fig:rootgraph-5-layer}
	\includegraphics[width=0.35\linewidth]{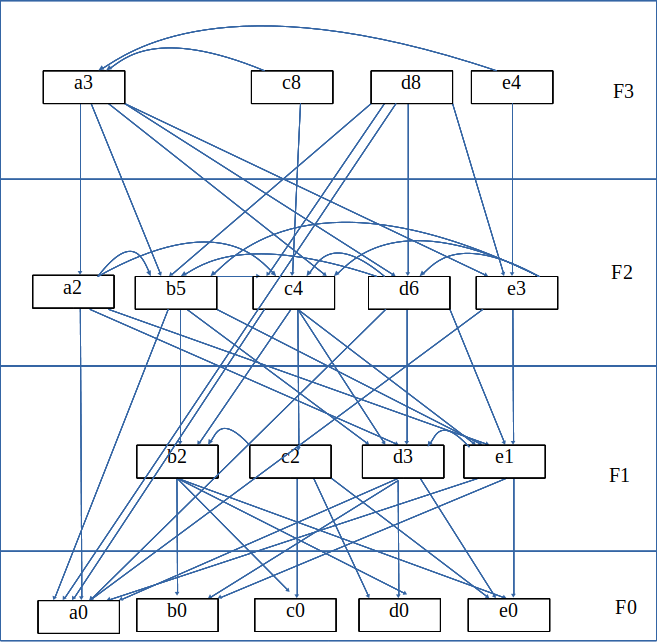}}
	\caption{An example of OPERA chain on a node in a 5-node network. (a) hierarchical graph, (b) root graph, (c) layering root frame for frame assignment.}	
\end{figure}

\subsection{Frame Assignment}\label{se:frameassignment}

We then present a determinstic approach to  frame assignment, which assigns each event block a unique frame number.
First, we show how to assign frames to root vertices via the so-called \emph{root-layering}. Second, we then assign non-root vertices to the determined frame numbers with respect to the topological ordering of the vertices.

\begin{defn}[Root-layering]
A root layering assigns each root with a unique layer (frame) number.	
\end{defn}

Here, we model root layering problem as a modified version of the original layering problem.
For a root graph $G_R=(V_R, E_R)$, root-layering $\phi_R$ is a layering that assigns each root vertex $v_r$ of $V_R$ with an integer $\phi_R(v_r)$ such that:
\begin{itemize}
	\item $\phi_R(u) \geq \phi_R(v) + 1$, for each edge $e=(u,v) \in E_R$.
	\item if $u$ reaches at least $2/3n + 1$ of the roots of frame $i$, then $\phi_R(v)$ = $i +1$.
\end{itemize}

If $\phi_R(v)=j$, then $v$ is a frame-$j$ vertex and $V_j= \phi_R^{-1}(j)$ is the $j$-th frame of $G$. In terms of happened before relation, $v_i \hbefore v_j$ iff $\phi_R(v_i) < \phi_R(v_j)$.
Similarly to the original layering problem, root layering $\phi_R$ partitions the set of vertices $V_R$ into a finite number $l$ of \emph{non-empty} disjoint subsets (called frames) $V_1$,$V_2$,$\dots$, $V_l$, such that $V_R$ = $\cup_{i=1}^{l}{V_i}$. Each vertex is assigned to a root-layer $V_j$, where $1\leq j \leq l$, such that every edge $(u, v) \in E$, $u \in V_i$, $v \in V_j$, $1 \leq i < j \leq l$.

Figure~\ref{fig:rootgraph-3-layer} depicts an example of root layering for the root graph in Figure~\ref{fig:rootgraph-3}. There are five frames assigned to the root graph. The frames are labelled as F0, F1, F2, F3 and F4.
Figure~\ref{fig:rootgraph-5-layer} depicts an example of root layering for the root graph in Figure~\ref{fig:rootgraph-5}.
There are four frames whose labels are F0, F1, F2 and F3.

We now present our new approach that uses root layering information to assign frames to non-root vertices.

\begin{defn}[Frame assignment]
	A frame assignment assigns every vertex $v$ in H-OPERA chain $H$ a frame number $\phi_F(v)$ such that 
	\begin{itemize}
		\item $\phi_F(u) \geq \phi_F(v)$ for every directed edge $(u,v)$ in $H$;
		\item for each root $r$, $\phi_F(r)= \phi_R(r)$;
		\item for every pair of $u$ and $v$: if $\phi(u) \geq \phi(v)$ then $\phi_F(u) \geq \phi_F(v)$;  if $\phi(u) = \phi(v)$ then $\phi_F(u) = \phi_F(v)$.
	\end{itemize}
\end{defn}


Figure~\ref{fig:wholelayerassign} shows an example of the H-OPERA chain with frame assignment of the OPERA chain in Figure~\ref{fig:daglayerrootframe-3}. Each vertex shows the assigned frame number, Lamport timestamp, consensus timestamp, flags to show whether they are atropos, root or clotho, and its creator's identifier.
\begin{figure}
	\centering
	\includegraphics[width=0.3\linewidth]{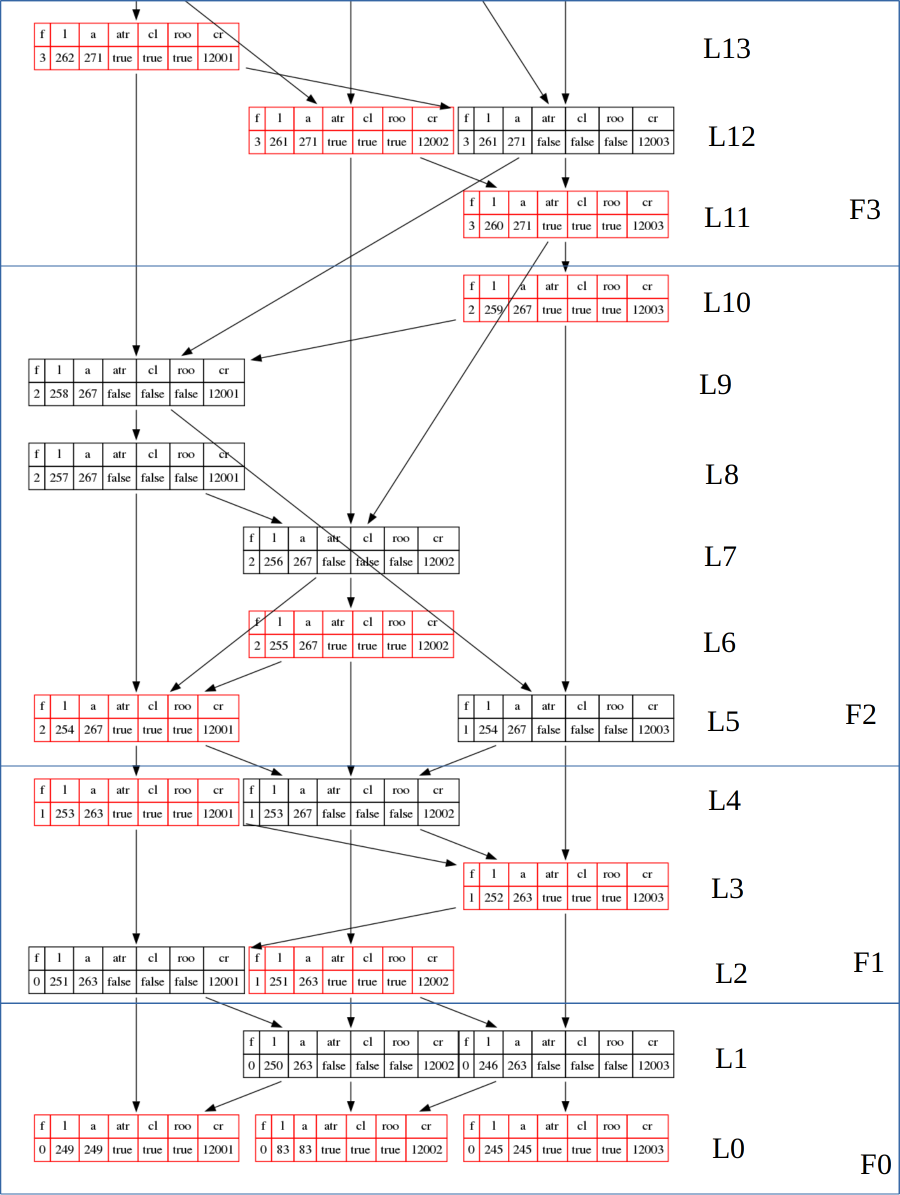}
	\caption{An example of H-OPERA with frame assignment in $L_\phi$ protocol.}
	\label{fig:wholelayerassign}
\end{figure}

\subsection{Consensus}

The protocol $L_{\phi}$ uses several novel concepts such as OPERA chain, H-OPERA chain, Root graph, Frame assignment to define a deterministic solution for the consensus problem.
We now present our formal model for the consistency of knowledge across the distributed network of nodes. We use the consistent cut model, described in Section~\ref{sec:cck}. For more details, see the CCK paper~\cite{cck92}.

For an OPERA chain $G$, let $G[v]$ denote the subgraph of $G$ that contains nodes and edges reachable from $v$.

\begin{defn}[Consistent chains]
	For two chains $G_1$ and $G_2$, they are consistent if for any event $v$ contained in both chains, $G_1[v] = G_2[v]$. Denoted by $G_1 \sim G_2$.
\end{defn}

For any two nodes, if their OPERA chains contain the same event $v$, then they have the same $k$ hashes contained within $v$. 
A node must already have the $k$ references of $v$ in order to accept $v$. Thus, both OPERA chains must contain $k$ references of $v$. Presumably, the cryptographic hashes are  always secure and thus references must be the same between nodes. By induction, all ancestors of $v$ must be the same. When two consistent chains contain the same event $v$, both chains contain the same set of ancestors for $v$, with the same reference and self-ref edges between those ancestors. Consequently, the two OPERA chains are consistent.\\

\begin{defn}[Global OPERA chain]
	A global consistent chain $G^C$ is a chain such that $G^C \sim G_i$ for all $G_i$.
\end{defn}

Let $G \sqsubseteq G'$ denote that $G$ is a subgraph of $G'$. Some properties of $G^C$ are given as follows:
\begin{enumerate}[(1)]
	\item $\forall G_i$ ($G^C \sqsubseteq G_i$).
	\item
	$\forall v \in G^C$ $\forall G_i$ ($G^C[v] = G_i[v]$).
	\item
	($\forall v_c \in G^C$) ($\forall v_p \in G_i$) (($v_p \hbefore v_c) \Rightarrow v_p \in G^C$).
\end{enumerate}

The layering of consistent OPERA chains is consistent itself.
\begin{defn}[Consistent layering]
	For any two consistent OPERA chains $G_1$ and $G_2$,  layering results $\phi^{G_1}$ and $\phi^{G_2}$ are consistent if $\phi^{G_1}(v) = \phi^{G_2}(v)$, for any  vertex $v$ common to both chains.	 Denoted by $\phi^{G_1} \sim \phi^{G_2}$.
\end{defn}

\begin{thm}	
	For two consistent OPERA chains $G_1$ and $G_2$, the resulting H-OPERA chains using layering $\phi_{LPL}$ are consistent.
\end{thm}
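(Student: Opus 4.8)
The plan is to show that the longest-path layer of any vertex is a \emph{local} quantity, depending only on the subgraph of events reachable from it, and then to read off the conclusion directly from the definition of consistency. The starting point is the recursive characterization of longest-path layering that is implicit in Algorithm~\ref{algo:longestpath} and made explicit in its online version (Algorithm~\ref{algo:onlinelongestpath}): for every vertex $v$,
\[
\phi_{LP}(v) = 1 + \max\{\phi_{LP}(u) \mid u \in V^{+}(v)\},
\]
with the convention $\max \emptyset = 0$, so that source vertices (the leaf events, which reference nothing) receive layer $1$. Since in the OPERA chain the out-neighbours $V^{+}(v)$ are exactly the events directly referenced by $v$ — its parents — this recurrence unwinds entirely within $G[v]$, the subgraph reachable from $v$.

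First I would record that $\phi_{LP}(v)$ is well-defined and independent of the order in which the algorithm selects vertices. Because $G$ is acyclic, the out-neighbour relation is well-founded, so the displayed recurrence has a unique solution, and any execution respecting the precondition $V^{+}(v) \subseteq Z$ (all out-neighbours already layered) must return it. A newly assigned layer-$l$ vertex is placed in $U$ but not yet in $Z$, so two vertices sharing the same layer can never be out-neighbours of one another; same-layer vertices form an antichain, and the order of processing within a layer is immaterial. Concretely, $\phi_{LP}(v)$ equals the number of vertices on the longest directed path in $G$ starting at $v$, and every such path lies inside $G[v]$.

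The central step is a \emph{locality lemma}, proved by induction on the length of the longest directed path from $v$: for any OPERA chain $G$ containing $v$, the value $\phi_{LP}^{G}(v)$ is a function of $G[v]$ alone. The base case is a source vertex, where $\phi_{LP}^{G}(v)=1$ regardless of $G$. For the inductive step I would use that $G[v]$ determines both $V^{+}(v)$ and, for each $u \in V^{+}(v)$, the subgraph $G[u]$ (since $G[u] \sqsubseteq G[v]$ and $u$ has a strictly shorter longest outgoing path); by the induction hypothesis each $\phi_{LP}^{G}(u)$ depends only on $G[u]$, hence so does their maximum, and therefore so does $\phi_{LP}^{G}(v)$. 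Combining this with the consistency hypothesis finishes the argument: by the definition of consistent chains, $G_1 \sim G_2$ gives $G_1[v] = G_2[v]$ for every vertex $v$ common to both, so the locality lemma yields $\phi_{LP}^{G_1}(v) = \phi_{LP}^{G_2}(v)$, which is precisely the condition $\phi_{LP}^{G_1} \sim \phi_{LP}^{G_2}$ in the definition of consistent layering. Hence the two H-OPERA chains are consistent.

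I expect the main obstacle to be the locality lemma, and in particular the point that the theorem relies on LPL assigning the \emph{tightest} admissible value $1 + \max_{u \in V^{+}(v)} \phi_{LP}(u)$ rather than merely some assignment satisfying $\phi(v) \ge \phi(u)+1$. Many distinct topological numberings satisfy that inequality, and a looser one would generally not be a function of $G[v]$ — its value could depend on vertices unreachable from $v$, so that two consistent chains could produce different layers. The proof must therefore exploit that the longest-path rule is fully pinned down by the reachable subgraph; this is exactly where acyclicity, the order-independence of the algorithm, and the containment $G[u]\sqsubseteq G[v]$ all enter.
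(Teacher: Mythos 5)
Your proposal is correct and follows essentially the same route as the paper: the paper's own argument is exactly that $G_1 \sim G_2$ gives $G_1[v] = G_2[v]$, that the height (longest-path length) of $v$ is therefore the same in both chains, and hence that $\phi_{LPL}$ assigns $v$ the same layer. Your contribution is merely to make rigorous what the paper asserts in three lines — the recursive characterization of $\phi_{LP}$, its order-independence, and the locality lemma proved by induction on $G[v]$ — which is a worthwhile tightening but not a different proof.
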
 

The theorem states that for any event $v$ contained in both OPERA chains, $\phi_{LPL}^{G_1}(v) = \phi_{LPL}^{G_2}(v)$. Since $G_1 \sim G_2$, we have $G_1[v]= G_2[v]$. Thus, the height of $v$ is the same in both $G_1$ and $G_2$. Thus, the assigned layer using $\phi_{LPL}$ is the same for $v$ in both chains.

\begin{prop}[Consistent root graphs]
	Two root graphs $G_R$ and $G'_R$ from two consistent H-OPERA chains are consistent.
\end{prop}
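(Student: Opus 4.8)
The plan is to reduce the consistency of the two root graphs to the consistency of the underlying H-OPERA chains together with the consistency of their layerings, both of which are already available: $G_1 \sim G_2$ gives $G_1[v] = G_2[v]$ for every common vertex $v$, and the preceding theorem gives $\phi^{G_1}(v) = \phi^{G_2}(v)$ for every common $v$. The central observation is that everything defining a root graph --- which vertices belong to $V_R$ and which pairs belong to $E_R$ --- is a function only of the reachable subgraphs $G[v]$, and these subgraphs are identical in the two chains. Concretely I would adopt for root graphs the same notion of consistency as for chains, namely $G_R \sim G'_R$ iff $G_R[v] = G'_R[v]$ for every root $v$ common to both, and then establish this equality directly.

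First I would prove that roothood is a local property: for any vertex $v$ common to both chains, $v$ is a root of $G_1$ if and only if it is a root of $G_2$. This is the delicate step, because the definition of a root is recursive --- $v$ is a root when it is a leaf or reaches more than $2n/3$ roots --- so the argument must be a well-founded induction. I would induct on the layer number $\phi(v)$ (equivalently on the happened-before ordering), whose value is common to both chains by the consistent-layering theorem. The base case is the leaf events, which are roots in both chains by definition. For the inductive step, note that every root $r$ that $v$ reaches satisfies $r \hbefore v$ and hence lies at a strictly lower layer and inside $G[v]$; since $G_1[v] = G_2[v]$, the set of vertices reachable from $v$ is identical in the two chains, and by the induction hypothesis each such vertex has the same roothood in both. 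Therefore $v$ reaches exactly the same set of roots in both chains, so the threshold $2n/3$ is crossed in $G_1$ precisely when it is crossed in $G_2$, and the roothood of $v$ agrees.

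It then remains to match the edges. For two common roots $u$ and $v$, the candidate edge $(u,v)$ is present exactly when $u$ reaches $v$ in the DAG, and reachability from $u$ depends only on $G[u]$, which is the same in both chains. The tie-breaking rule that keeps at most one edge from $u$ to any single creator compares the layer values $\phi(r_1)$ and $\phi(r_2)$; by the consistent-layering theorem these values agree across the two chains, so the same edge is retained in both. Hence $E_R$ and $E'_R$ restricted to the common roots coincide, which together with the agreement of $V_R$ yields $G_R[v] = G'_R[v]$ for every common root $v$, and therefore $G_R \sim G'_R$. The main obstacle, as noted, is discharging the circular definition of roots, and the layer-indexed induction --- legitimate precisely because every reached root sits at a strictly lower, commonly-assigned layer --- is what resolves it cleanly.
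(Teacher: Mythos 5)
Your proof is correct, and it is actually more complete than what the paper offers. The paper states this proposition bare: the only supporting argument in its vicinity is the proof of the neighbouring proposition that consistent OPERA chains are \emph{root consistent}, which is a well-founded induction on frame numbers (base case $j=0$: the leaf root sets agree; step: a frame-$f_{k+1}$ root $v$ in $G_1$ reaches a set $S$ of more than $2n/3$ frame-$f_k$ roots, and since $G_1[v]=G_2[v]$ every member of $S$ lies in $G_2$ and is, by the induction hypothesis, a frame-$f_k$ root there too). Your induction does the same job but is indexed by the layer $\phi(v)$ rather than by frames; this is equally well-founded because every root reached by $v$ lies strictly below it, and the layer values agree across the chains by the consistent-layering theorem. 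The two inductions are interchangeable --- the paper's has the minor advantage of not needing the layering theorem at all, while yours keeps the argument inside the layering machinery in which the proposition is phrased. Where you genuinely go beyond the paper is the second half: the paper never argues that the \emph{edge} sets of the two root graphs agree, even though the proposition concerns root graphs, not merely root sets. Your observation that reachability between common roots is determined by $G[u]$, and that the creator-wise tie-breaking rule compares $\phi$-values that are identical in the two chains, closes exactly that gap. One small caution: to align with the paper's frame-based root definition, your induction hypothesis should carry the frame number along with roothood (i.e., ``$r$ is a root \emph{of the same frame} in both chains''), since the $2n/3$ threshold counts roots of the previous frame rather than roots in general; this is a trivial strengthening that your argument absorbs unchanged.
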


\begin{defn}[Consistent root]
	Two chains $G_1$ and $G_2$ are root consistent, if for every $v$ contained in both chains, $v$ is a root of $j$-th frame in $G_1$, then $v$ is a root of $j$-th frame in $G_2$.
\end{defn}

\begin{prop}
	For any two consistent OPERA chains $G_1$ and $G_2$, they are root consistent. 
\end{prop}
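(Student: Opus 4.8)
The plan is to prove the statement by strong induction on the frame number $j$, exploiting the fact that the root status and frame number of a vertex $v$ are completely determined by its reachable subgraph $G[v]$. Recall that $G_1 \sim G_2$ means $G_1[v] = G_2[v]$ for every $v$ common to both chains; hence the vertices and edges reachable from $v$, together with all induced reference and self-reference relations, are identical in the two chains. Since by the Root definition a vertex is a root precisely when it is a leaf or reaches more than $2n/3$ of the roots of the appropriate earlier frames, the entire recursive determination of rootness depends only on this reachable subgraph.

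First I would dispose of the base case $j = 1$. A vertex $v$ is a leaf (the first event block of its node) exactly when it has no self-ref, a property visible entirely within $G[v]$. Because $G_1[v] = G_2[v]$, such a $v$ is a leaf, and therefore a root of the first frame, in $G_1$ if and only if it is in $G_2$. For the inductive step I would take as hypothesis the symmetric statement that every shared vertex is a root of some frame $k \le j$ in $G_1$ if and only if it is a root of that same frame $k$ in $G_2$. Now let $v$ be a shared vertex that is a root of frame $j+1$ in $G_1$. By the root-layering condition this means $v$ reaches more than $2n/3$ of the frame-$j$ roots in $G_1$. Every such reachable root lies in $G_1[v] = G_2[v]$, and by the induction hypothesis each of these is a root of the same frame $j$ in $G_2$; conversely no shared vertex reachable from $v$ can be a frame-$j$ root in $G_2$ without already being one in $G_1$. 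Thus $v$ reaches exactly the same set of frame-$j$ roots in $G_2$, so $v$ satisfies the frame-$(j+1)$ root condition in $G_2$ as well, and $v$ is a root of frame $j+1$ in $G_2$, closing the induction.

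To upgrade root status coincidence to frame number coincidence, I would invoke the preceding proposition that the two root graphs are consistent: since the root-layering $\phi_R$ is a deterministic function of the root graph, consistent root graphs yield the same $\phi_R$-value, and hence the same frame, on each shared root. This is the analogue, at the level of the root graph, of the consistent-layering theorem already established for $\phi_{LPL}$, and it ensures we obtain $\phi_R^{G_1}(v) = \phi_R^{G_2}(v)$ rather than merely matching rootness.

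I expect the main obstacle to be making the recursion airtight, because the Root definition is self-referential: a root is defined via reaching other roots. The induction must therefore be ordered strictly by frame, so that at each step the roots whose reachability is being counted are exactly those whose status the induction hypothesis has already fixed, in both chains simultaneously. The delicate point is that the threshold condition ``more than $2n/3$ roots of frame $j$'' be evaluated over identical root sets in $G_1$ and $G_2$; this is precisely where the identity $G_1[v] = G_2[v]$ does the work, guaranteeing that the reachable frame-$j$ roots cannot differ between the two chains.
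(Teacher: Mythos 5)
Your proof is correct and follows essentially the same route as the paper: induction on the frame number, with the leaf root set as the base case and the identity $G_1[v] = G_2[v]$ used to transfer the more-than-$2n/3$ reachability condition frame by frame. One remark: your induction already yields frame-number coincidence directly, so the final step invoking root-graph consistency is redundant --- and would risk circularity if it were load-bearing, since consistency of the root graphs itself presupposes agreement on which vertices are roots.
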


By consistent chains, if $G_1 \sim G_2$ and $v$ belongs to both chains, then $G_1[v]$ = $G_2[v]$.
We can prove the proposition by induction. For $j$ = 0, the first root set is the same in both $G_1$ and $G_2$. Hence, it holds for $j$ = 0. Suppose that the proposition holds for every $j$ from 0 to $k$. We prove that it also holds for $j$= $k$ + 1.
Suppose that $v$ is a root of frame $f_{k+1}$ in $G_1$. 
Then there exists a set $S$ reaching 2/3 of members in $G_1$ of frame $f_k$ such that $\forall u \in S$ ($u\hbefore v$). As $G_1 \sim G_2$, and $v$ in $G_2$, then $\forall u \in S$ ($u \in G_2$). Since the proposition holds for $j$=$k$, 
As $u$ is a root of frame $f_{k}$ in $G_1$, $u$ is a root of frame $f_k$ in $G_2$. Hence, the set $S$ of 2/3 members $u$ happens before $v$ in $G_2$. So $v$ belongs to $f_{k+1}$ in $G_2$.

Thus, all nodes have the same consistent root sets, which are the root sets in $G^C$. Frame numbers are consistent for all nodes.\\




\begin{prop}[Consistent Clotho]
A root $r_k$ in the frame $f_{a+3}$ can nominate a root $r_a$ as Clotho if more than 2n/3 roots in the frame $f_{a+1}$ dominate $r_a$ and $r_k$ dominates the roots in the frame $f_{a+1}$.
\end{prop}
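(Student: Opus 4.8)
The plan is to read this proposition in the spirit of the consistency results that precede it --- Consistent root graphs, Consistent root, and the induction on frame numbers --- and to prove that the Clotho nomination is \emph{consistent across nodes}: whenever $r_k$ nominates $r_a$ as Clotho in one OPERA chain $G_1$, the same nomination occurs in any consistent chain $G_2$ (i.e. $G_1 \sim G_2$) that also contains $r_k$ and $r_a$. The displayed statement merely recalls the nomination rule, so the content to establish is that every ingredient of that rule is invariant under $G_1 \sim G_2$.

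First I would fix the interpretation of ``dominate'' as the reachability / happened-before relation $\hbefore$ restricted to roots, consistent with the way the root-graph edges were defined (an edge $(u,v)\in E_R$ records $v \hbefore u$). The central observation is then that the nomination decision of $r_k$ depends only on the subgraph $G[r_k]$ reachable from $r_k$. By the Global OPERA chain property that $G^C[v] = G_i[v]$ for every common $v$, and since $r_k$ lies in both chains, we have $G_1[r_k] = G_2[r_k]$. Hence the set of roots reachable from $r_k$, their frame numbers, and the reachability relations among them are literally identical in $G_1$ and $G_2$.

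Next I would discharge the two clauses of the rule separately. For the clause ``more than $2n/3$ roots in frame $f_{a+1}$ dominate $r_a$'': by the Consistent root proposition the root set $R_{a+1}$ and the frame membership of $r_a$ agree on all vertices common to both chains, and each witnessing root that dominates $r_a$ and is itself dominated by $r_k$ lies inside $G_1[r_k] = G_2[r_k]$; therefore the count of such witnesses is the same in both chains, so the $2n/3$ threshold is crossed in $G_1$ iff it is crossed in $G_2$. For the clause ``$r_k$ dominates the roots in frame $f_{a+1}$'': each such root is again reachable from $r_k$ and so lies in the common subgraph, and reachability is preserved, so this clause too holds in $G_1$ iff in $G_2$. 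Combining, the Boolean condition defining the nomination evaluates identically, which yields the consistency conclusion.

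I expect the main obstacle to be the counting clause, because a priori one chain may have observed more roots than the other, so a naive global tally of ``roots dominating $r_a$'' need not match. The fix, and the step that needs the most care, is to argue that only witnesses lying in $G[r_k]$ are relevant to $r_k$'s decision --- i.e. that the nomination is a function of $G[r_k]$ alone --- and then invoke $G_1[r_k] = G_2[r_k]$ to equate the two counts. One subtlety to check carefully is the root-graph reduction rule (retaining a single edge between two roots of the same creator), which must also be confirmed to be determined by $G[r_k]$, so that the induced frame numbers $\phi_R$ used in the threshold are themselves consistent.
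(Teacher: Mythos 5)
Your reading of the proposition as a cross-chain consistency claim is the right one, and your argument is sound, but it takes a noticeably different (and tighter) route than the paper. The paper never proves this proposition in the main text at all --- the statement is a verbatim restatement of its Clotho definition --- and the substance is deferred to the appendix, where the corresponding lemma (``for any root set $R$, all nodes nominate the same root into Clotho'') is proved by citing the OPERA-consistency theorem (``all nodes grow up into the same consistent chain $G^C$'') plus equality of ``flag table'' values, a device inherited from the earlier Lachesis papers and never defined here. Your proof shares the skeleton (chain consistency plus determinism of the nomination rule) but executes it \emph{locally}: you need only $G_1[r_k]=G_2[r_k]$, which is immediate from the definition of $G_1 \sim G_2$, together with the Consistent-root proposition to equate frame numbers, rather than the far stronger hypothesis that both nodes hold one identical global chain. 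That localization is what makes the argument work in the realistic asynchronous situation where one chain has seen strictly more events than the other, and your observation that every witness for the $2n/3$ count lies inside $G[r_k]$ (because the rule requires $r_k$ to dominate the witnesses) is precisely the step the paper glosses over. One caveat to make explicit if you write this up: ``the roots in the frame $f_{a+1}$'' must be read as the witness set dominated by $r_k$, not as all frame-$(a{+}1)$ roots present in the chain; under the latter reading the consistency claim could fail, since $G_2$ may contain frame-$(a{+}1)$ roots outside $G_2[r_k]$ that $r_k$ does not dominate.
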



\begin{prop}[Consistent Atropos]
An Atropos is a Clotho that is decided as final. Event blocks in the subgraph rooted at the Atropos are also final events. Atropos blocks form a Main-chain, which allows time consensus ordering and responses to attacks.
\end{prop}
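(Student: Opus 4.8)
The plan is to establish consistency of Atropos selection across nodes by building on the already-established consistency of Clotho selection (the Consistent Clotho proposition) together with the consistent-chains property $G_1 \sim G_2 \Rightarrow G_1[v] = G_2[v]$. Concretely, I want to show that for any event $v$ contained in two consistent OPERA chains $G_1$ and $G_2$, if $v$ is an Atropos in $G_1$ then $v$ is an Atropos in $G_2$, and moreover it is assigned the same consensus time in both chains. Once this is in hand, the claim that the subgraph rooted at an Atropos consists of consistent final events follows immediately from $G_1[v] = G_2[v]$, and the Main-chain formed by the Atropos blocks is therefore identical across all honest nodes, yielding a consistent consensus ordering.

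First I would invoke the Consistent Clotho proposition: any Clotho $c$ common to both $G_1$ and $G_2$ is a Clotho in both, since its nomination depends only on root and frame membership, which are already consistent by the Consistent root proposition. Hence the two nodes agree on the Clotho candidates over their common subgraph. Second, I would argue that the finality decision --- promoting a Clotho to an Atropos --- and the consensus time assignment are \emph{deterministic} functions of the subgraph $G^C$ reachable from the Clotho together with the later roots that vote on it. Because $G_1[c] = G_2[c]$ for any common Clotho $c$, and because the voting roots of subsequent frames are themselves consistent, both nodes feed identical inputs into the decision procedure and therefore produce the same verdict and the same timestamp. Third, I would conclude via the subgraph-consistency property: every event block reachable from an Atropos lies in $G^C$, so these final events and their reference and self-ref edges coincide in both chains, giving an identical Main-chain.

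The hard part will be step two: the excerpt does not spell out the precise algorithm that decides finality and assigns a consensus time to a Clotho, so the crux of the argument is to show that this decision depends only on the consistent portion of the DAG --- the global OPERA chain $G^C$ --- and never on node-local information that may differ between $G_1$ and $G_2$. I expect this to reduce to an induction on frame number analogous to the proof of Consistent root: one shows that the set of roots reaching a Clotho (and thus casting the deciding votes) is the same in both chains frame by frame, so the accumulated decision is forced to agree. The remaining obstacle is purely the time assignment, where I would verify that the consensus timestamp is computed as a deterministic aggregate (for instance a median or minimum) over a set of event blocks that is itself consistent; given consistency of that underlying set, equality of the assigned times follows at once.
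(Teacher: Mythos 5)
Your proposal follows essentially the same route as the paper: the paper likewise reduces Atropos consistency to consistency of the underlying OPERA/H-OPERA chains and Clothos (its OPERA-consistency theorem and the appendix proposition that all nodes have the same Clothos and Atropos) plus determinism of the finality and timestamp procedures given those consistent inputs (its Finality theorem). The only divergence is your hedge that the consensus time might be a median or minimum aggregate; in the paper it is assigned from the deterministic topological sort keyed on layer, Lamport timestamp and hash, so the ``hard part'' you flag is discharged immediately once Clotho and layering consistency are established.
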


\begin{prop}[Consistent Main-chain]
 \emph{Main-chain} is a special sub-graph of the OPERA chain that stores  Atropos vertices.	
\end{prop}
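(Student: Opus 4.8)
The plan is to reduce the consistency of the Main-chain to the consistency results already established for roots, Clothos, and Atropos vertices, since the Main-chain is by definition the sub-graph of the OPERA chain whose vertex set is exactly the Atropos set together with the sub-graphs reachable from those Atropos blocks. Concretely, for two consistent OPERA chains $G_1 \sim G_2$ I would show that their Main-chains $M_1$ and $M_2$ agree on every common vertex, i.e. for any Atropos $v$ contained in both chains, $M_1[v] = M_2[v]$, and then specialise to the global chain $G^C$.

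First I would fix the notion of consistency for Main-chains by analogy with the \emph{Consistent layering} and \emph{Consistent root} definitions: $M_1$ and $M_2$ are consistent iff for every vertex $v$ common to both, the reachable sub-graphs coincide. The key lever is the \emph{Consistent Atropos} proposition together with \emph{Consistent Clotho}: a block $v$ decided as final in $G_1$ is decided as final in $G_2$ whenever $v \in G_2$, because the Clotho nomination (the $2n/3$ domination conditions across frames $f_{a+1}$ and $f_{a+3}$) and the ensuing finality decision depend only on the roots and frames of the consistent sub-graph $G[v]$, which by the \emph{Consistent chains} property satisfies $G_1[v] = G_2[v]$. Hence the Atropos vertex set of $M_1$ and that of $M_2$ coincide on the common part of the two chains.

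Second, once the Atropos sets agree, consistency of the reachable sub-graphs follows immediately from \emph{Consistent chains}: for any Atropos $v$ present in both chains, $G_1[v] = G_2[v]$, so the $\onlay$/Main sub-graph rooted at $v$ is identical in both. Chaining this over all Atropos vertices yields $M_1 \sim M_2$, and taking $G^C$ as one argument shows that every honest node's Main-chain agrees with the global Main-chain obtained from $G^C$. An induction on frame number, mirroring the proof of the \emph{Consistent root} proposition, would formalise the propagation of agreement frame by frame.

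The hard part will be justifying that the finality (Atropos) decision is a function only of the consistent sub-graph and the deciding roots, rather than of transient, node-local information. In particular I must argue that if one node has gathered enough roots in frames $f_{a+1}$, $f_{a+2}$, $f_{a+3}$ to decide $r_a$ as an Atropos, then any other node whose chain also contains those deciding roots reaches the identical decision, so that disagreement can only reflect one node not having decided yet, never two nodes deciding differently. This monotone, deterministic-decision property is precisely what upgrades the per-vertex agreement into genuine consistency of the entire Main-chain across the network.
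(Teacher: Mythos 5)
Your overall strategy---reducing Main-chain consistency to consistency of the Clotho/Atropos decisions plus the per-vertex property $G_1[v]=G_2[v]$---is close in spirit to what the paper actually does: in the appendix the paper passes to the global consistent chain $G^C$ (via its OPERA-consistency and H-OPERA-consistency theorems), observes that Clotho/Atropos selection is a deterministic function of that common chain, and concludes that all nodes store the same Atropos blocks in their Main-chains. But your central justification in the second paragraph is wrong, and the error sits exactly at the load-bearing step. You claim that ``the Clotho nomination \dots and the ensuing finality decision depend only on the roots and frames of the consistent sub-graph $G[v]$.'' They do not. In this paper $G[v]$ is the subgraph of \emph{ancestors} of $v$ (its past). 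Roothood is past-determined, which is why the paper's Consistent-root induction works vertex by vertex; but the Clotho/Atropos status of a root $r_a$ is decided by roots in frames $f_{a+1},\dots,f_{a+3}$ that \emph{dominate} $r_a$, i.e.\ by descendants of $r_a$, none of which lie in $G[r_a]$. Consequently your claim that ``a block $v$ decided as final in $G_1$ is decided as final in $G_2$ whenever $v \in G_2$'' is false as stated: $G_2$ may contain $v$ but not yet the nominating roots, in which case $G_2$ has not decided $v$ at all. Your plan to ``mirror the Consistent-root induction'' therefore does not transfer, because that induction exploits precisely the past-determinedness that Clotho-hood lacks. The step is also circular as written: the ``key lever'' you invoke (the Consistent Clotho and Consistent Atropos propositions) is the very agreement property being established here.

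The repair is the property you defer to the end as ``the hard part,'' and it must be proven rather than assumed: anchor the decision at the \emph{nominator}, not at $v$. If $r_k$ in frame $f_{a+3}$ nominates $r_a$, then every datum entering the decision ($r_a$ itself, the frame-$f_{a+1}$ roots, and the domination relations among them) is an ancestor of $r_k$, hence lies in $G[r_k]$; so the decision is a deterministic function of $G[r_k]$, and any two chains containing $r_k$ decide identically because $G_1[r_k]=G_2[r_k]$, with frame numbers agreeing by the Consistent-root proposition. A chain not containing $r_k$ has simply not decided yet, so disagreement is impossible and agreement is eventual---which is exactly why the paper states the result on the global chain $G^C$ rather than per pair of local states. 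With that anchoring lemma made explicit, your per-vertex decomposition goes through; without it, the proposal does not close.
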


An event block is called a \emph{root} if the event block is linked to more than two-thirds of previous roots. A leaf vertex is also a root itself. With root event blocks, we can keep track of ``vital'' blocks that $2n/3$ of the network agree on.  
Each participant node computes to the same Main chain, which is the consensus of all nodes, from its own event blocks.


\subsection{Topological sort}\label{se:toposort}

In this section, we present an approach to ordering event blocks that reach finality. The new approach is deterministic subject to any deterministic layering algorithm. 

After layers are assigned to achieve H-OPERA chian, root graphs are constructed. Then frames are determined for every event block. Once certain roots become Clothos, which are known by by 2/3$n$ + 1 of the roots that are in turn known by 2/3$n$ + 1 of the nodes. The final step is to compute the ordering of event blocks that are final. 

Figure~\ref{fig:crust} shows H-OPERA chain of event blocks. The Clothos/Atropos vertices are shown in red. For each Atropos vertex, we compute the subgraph under it. These subgraphs are shown in different colors. We process Atropos from lower layer to high layer.

\begin{figure}
	\centering
	\includegraphics[width=0.3\linewidth]{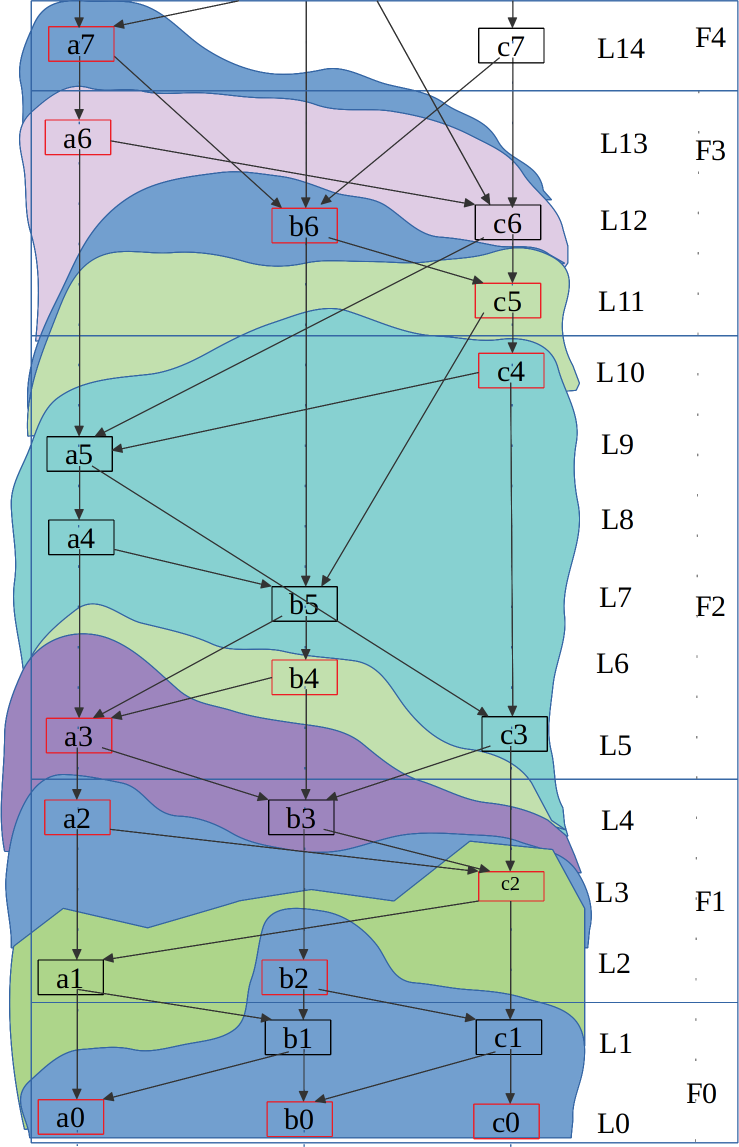}
	\caption{Peeling the subgraphs in topological sorting}
	\label{fig:crust}
\end{figure}

Our algorithm for topological ordering the consensed event blocks is given in Algorithm~\ref{algo:topoordering}.
The algorithm requires that H-OPERA chain and layering assignment  from $\phi$ are precomputed. The algorithm takes as input the set of Atropos vertices, denoted by $A$.
We first order the atropos vertices using $SortVertexByLayer$ function, which sorts vertices based on their layer, then lamport timestamp and then hash information of the event blocks. Second, we process every Atropos in the sorted order, and then compute the subgraph $G[a] = (V_a, E_a)$ under each Atropos. The set of vertices $V_u$ contains vertices from $V_a$ that is not yet process in $U$.
We then apply $SortVertexByLayer$ to order vertices in $V_u$ and then append the ordered result into the final ordered list $S$.

\begin{algorithm}[H]
	\caption{TopoSort}\label{algo:topoordering}
	\begin{algorithmic}[1]
		\State Require: H-OPERA chain $H$, $\phi$, $\phi_F$.
		\Function{TopoSort}{$A$}
		\State $S \leftarrow$ empty list
		\State $U \leftarrow \emptyset$
		\State $Q \leftarrow$ SortVertexByLayer($A$)
		\For{Atropos $a \in Q$}
		\State Compute the graph $G[a] = (V_a, E_a)$
		\State $V_u \leftarrow V_a \setminus U$
		\State $T \leftarrow SortVertexByLayer(V_u)$
		\State Append $T$ into the end of the ordered list $S$.
		\EndFor
		\EndFunction
	
		\Function{SortVertexByLayer}{V}		
		\State Sort the vertices in $V_u$ by layer, lamport timestamp and hash in that order.
		\EndFunction
	\end{algorithmic}
\end{algorithm}

With the final ordering computed using the above algorithm, we can assign the consensus time to the finally ordered event blocks.

\newpage
\section{\onlay\ Framework}\label{se:onlay}

We now present our \onlay\ framework, which is a practical DAG-based solution to distributed ledgers.

Algorithm~\ref{al:main} shows the main function, which is the entry point to launch \onlay\ framework. There are two main loops in the main function and they run asynchronously in parallel. The first loop attempts to create new event blocks and then communicate with other nodes about the new blocks.
The second loop will accept any incoming sync request from other nodes. The node will retrieve the incoming event blocks and will send responses that consist of its known events.

\begin{algorithm}[H]
\caption{\onlay\ Main Function}\label{al:main}
\begin{algorithmic}[1]
	\Function{Main Function}{}
	\BState \emph{loop}:
	\State $\{n_i\}$ $\leftarrow$ $k$-PeerSelectionAlgo()
	\State Request sync to each node in $\{n_i\}$
	\State (SyncPeer) all known events to each node in $\{n_i\}$
	\State Create new event block: newBlock($n$, ${n_i}$)
	\State (SyncOther) Broadcast out the message
	\State Apply layering to obtain H-OPERA chain
	\State Root selection
	\State Compute root graph
	\State Compute global state
	\State Clotho selection
	\State Atropos selection
	\State Order vertices (and assign consensus time)
	\State Compute Main chain
	\BState \emph{loop}:
	\State Accepts sync request from a node
	\State Sync all known events by Lachesis protocol
	\EndFunction
\end{algorithmic}
\end{algorithm}

Specifically, in the first loop, a node makes synchronization requests to $k$ other nodes and it then creates event blocks. Once created, the blocks are broadcasted to all other nodes. 
In line 3, a node runs the Node Selection Algorithm, to select $k$ other nodes that it will need to communicate with. In line 4 and 5, the node sends synchronization requests to get the latest OPERA chain from the other nodes. 
Once receiving the latest event blocks from the responses, the node creates new event blocks (line 6). The node then broadcasts the created event block to all other nodes (line 7). 
After new event block is created, the node updates its OPERA chain, then will then apply layering (line 8). After layering is performed, H-OPERA chain is obtained. It then checks whether the block is a root (line 9) and then computes the root graph (line 10). In line 11, it computes global state from its H-OPERA chain. Then the node decides which roots are Clothos (line 12) and which then becomes Atropos (line 13). Once Atropos vertices are confirmed, the algorithm runs a topological sorting algorithm to get the final ordering for vertices at consensus (line 14).
The main chain is constructed from the Atropos vertices that are found (line 15).

\subsection{Peer selection algorithm}

In order to create a new event block, a node needs to synchronize with $k$ other nodes to get their latest top event blocks. The peer selection algorithm computes such set of $k$ nodes that a node will need to make a synchronization request.

There are multiple ways to select $k$ nodes from the set of $n$ nodes. An simple approach can use a random selection from the pool of $n$ nodes. A more complex approach is to define a cost model for the node selection. In general, $L_{\phi}$ protocol does not depend on how peer nodes are selected.

In \onlay, we have tried a few peer selection algorithms, which are described as follows:
(1) Random: randomly select a peer from $n$ peers;
(2) Least Used: select the least use peer(s).
(3) Most Used (MFU): select the most use peer(s).
(4) Fair: select a peer that aims for a balanced distribution.
(5) Smart: select a peer based on some other criteria, such as successful throughput rates, number of own events.

\subsection{Peer synchronization}
Now we describe the steps to synchronize events between the nodes, as presented in Algorithm~\ref{al:syncevents}.
Each node $n_1$ selects a random peer $n_2$ and sends a sync request indicating the local known events of $n_1$ to $n_2$. Upon receiving the sync request from $n_1$, $n_2$ will perform an event diff with its own known events and will then return the unknown events to $n_1$.

\begin{algorithm}[H]
	\caption{EventSync}\label{al:syncevents}
	\begin{algorithmic}[1]
		\Function{sync-events()}{}		
		\State $n_1$ selects a random peer $n_2$ to synchronize with
		\State $n_1$ gets local known events (map[int]int)
		\State $n_1$ sends a RPC Sync request to peer
		\State $n_2$ receives a RPC Sync request
		\State $n_2$ does an EventDiff check on the known map (map[int]int)		
		\State $n_2$ returns the unknown events to $n_1$		
		\EndFunction
	\end{algorithmic}
\end{algorithm}

The algorithm assumes that a node always needs the events in topological ordering (specifically in reference to the lamport timestamps). Alternatively, one can simply use a fixed incrementing index or layer assignment to keep track of the top event for each node.


\subsection{Node Structure}
This section gives an overview of the node structure in \onlay. A node in \onlay\ system is a machine that participates with other nodes in the network.

\begin{figure}
	\centering
	\includegraphics[width=0.5\linewidth]{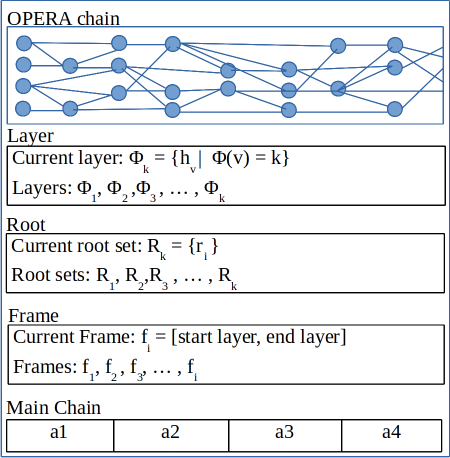}
	\caption{\onlay\ node structure}
	\label{fig:nodestructure}
\end{figure}

Figure~\ref{fig:nodestructure} shows the structure of a node. 
Each node consists of an OPERA chain of event blocks, which is the local view of the DAG. In each node, it stores the layering information (H-OPERA chain), roots, root graph, frames, Clotho set and Atropos set.
The top event block indicates the most recently created event block by this node.
Each frames contains the root set of that frame. A root can become a Clotho if it is known by 2/3$n$ + 1 of the roots, which are then known by another root. When a root becomes Clotho, its Clotho flag is set on. A Clotho is then promoted to an Atropos, once it is assigned with a final ordering number. The Main-chain is a data structure storing hash values of the Atropos blocks. 

\subsection{Event block creation}
Like previous Lachesis protocol, $L_\phi$ protocol allows every node to create an event block. A new event block refers to the top event blocks of $k$ other nodes by using their hash values. 
One of the references is a self-ref that references to an event block of the same node. 
The other $k$-1 references refer to the top event nodes of other $k$ -1 nodes.
An event block can only be created if the referencing event blocks exist.

\subsection{Root selection}

There is a Root selection algorithm given in Fantom paper~\cite{fantom18}.
In \onlay, we propose a new approach that uses root graph and frame assignment (see Section~\ref{se:rootgraph} and Section~\ref{se:frameassignment}).
 
The root graph $G_R = (V_R, E_R)$ contains vertices as  roots $V_R  \subseteq V$, and the set of edges $E_R$  the reduced edges from $E$, such that $(u,v) \in E_R$ only if $u$ and $v$ are roots and there is a path from $u$ to $v$ following edges in $E$.

A root graph contains the $n$ genesis vertices i.e., the $n$ leaf event blocks. 
A vertex $v$ that can reach at least 2/3$n$ + 1 of the current set of all roots $V_R$ is itself a root. For each root $r_i$ that new root $r$ reaches, we include a new edge ($r$, $r_i$) into the set of root edges $E_R$.
Note that, if a root $r$ reaches two other roots $r_1$ and $r_2$ of the same node, and $\phi(r_1) >\phi(r_2)$ then we include only one edge ($r$, $r_1$) in $E_R$. This contraint ensures that each root of a node can have at most one edge to any other node.

The steps to build the root graph out of an H-OPERA chain is given in Algorithm~\ref{al:buildrootgraph}.
The algorithm processes from layer 0 to maximum layer $l$. Line 3, $R$ is the set of active roots, we will update the set after finding new roots at each layer. Line 4 and 5 set the initial value of the vertex set and edge set of the root graph. For each layer $i$, it runs the steps from line 8 to 18. $Z$ is the set of new roots at layer $i$, in which $Z$ is initially set to empty (Line 8). We then process each vertex $v$ in the layer $\phi_i$ (line 9). We compute the set $S$ of all active roots in $R$ that $v$ can reach. If more than 2/3$n$ of $R$ member is reached, we can promote $v$ to be a new root (line 11). For every root $v_i$ in $S$ that is reachable from $v$, we add an add $(v,v_i)$ in the $E_R$.
In line 14-15, when $v$ becomes a new root, we add $v$ into $V_R$ of the root graph, and also into $Z$ which keeps track of the new roots found at layer $i$. 
The second inner loop will update the current active set $R$ with the new roots found at layer $i$ (line 16-18).

\begin{algorithm}[H]
	\caption{Root graph algorithm}\label{al:buildrootgraph}
	\begin{algorithmic}[1]
		\State Require: H-OPERA chain
		\State Output: root graph $G_R=(V_R,E_R)$
		\State{$R \leftarrow$ set of leaf events}
		\State $V_R \leftarrow R$
		\State $E_R \leftarrow \emptyset$

		\Function{buildRootGraph}{$\phi$, $l$}
		\For{each layer $i$=1..$l$}
			\State$Z \leftarrow \emptyset$
			\For{each vertex $v$ in layer $\phi_i$}
				\State $S \leftarrow$ the set of vertices in $R$ that $v$ reaches
				\If{$|S| > 2/3n$}
				\For{each vertex $v_i$ in $S$}
					\State $E_R \leftarrow E_R \cup \{(v,v_i)\}$
				\EndFor
				\State $V_R \leftarrow V_R \cup \{v\}$
				\State $Z \leftarrow Z \cup \{v\}$ 
				\EndIf
			\EndFor
			
			\For{each vertex $v_j$ in $Z$}
				\State Let $v_{old}$ be a root in $R$ such that $cr(v_j) = cr(v_{old})$		
				\State $R \leftarrow R \setminus \{v_{old}\} \cup \{v_j\}$
			\EndFor
		\EndFor
		\EndFunction
	\end{algorithmic}
\end{algorithm}

The algorithm always updates the active set $R$ of roots at each layer. The number of active roots in $R$ is always $n$, though the active roots can be from different layers.

After the root graph is constructed, we can assign frame numbers to the roots as specified in  Section~\ref{se:frameassignment}). Frames are assigned to the roots using root-layering algorithm.
When a new root reaches more than 2/3$n$ of the roots of a frame $i$, the root is assigned to the next frame.


\subsection{Fork detection and removal}

When a node $n$ receives Sync request from another node, $n$ will check if the received events would cause a fork. In a 1/3-BFT, we proved in our previous paper~\cite{fantom18} that there exists an honest node that sees the fork. It will remove one of the forked events from its OPERA chain, and then notifies the fork to all other nodes.

When a root is selected, it is certain that if there exists a fork in the subgraph under the root event block, the fork should be detected and removed.

\begin{prop}[Fork-free]
For any Clotho $v_c$, the subgraph $G[v_c]$ is fork-free.
\end{prop}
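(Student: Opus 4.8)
The plan is to argue by contradiction, leveraging the quorum that certifies a Clotho together with the fork-detection guarantee carried over from the previous paper~\cite{fantom18}. First I would suppose that $G[v_c]$ contains a fork, i.e.\ a pair of concurrent events $a \concur a'$ with the same creator $cr(a) = cr(a')$, both reachable from $v_c$ (so $v_c \eancestor a$ and $v_c \eancestor a'$). The goal is to show that the very conditions that make $v_c$ a Clotho force an honest node to hold both $a$ and $a'$, which the fork-detection mechanism forbids.

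Next I would unpack what it means for $v_c$ to be a Clotho. By the Clotho definition, more than $2n/3$ roots of the frame immediately above $v_c$ dominate (reach) $v_c$; equivalently, $v_c$ is known by more than $2n/3$ of the nodes. Since at most $n/3$ of the nodes are Byzantine, a simple counting argument (intersection of the confirming quorum with the honest set) shows that at least one confirming root, say $r_h$, is created by an honest node $n_h$ and satisfies $r_h \eancestor v_c$.

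The crux is then a transitivity-plus-consistency step. Because $r_h \eancestor v_c$ together with $v_c \eancestor a$ and $v_c \eancestor a'$, transitivity of the ancestor relation yields $r_h \eancestor a$ and $r_h \eancestor a'$, so both forked events lie in $G[r_h]$. By the DAG acceptance rule an event is admitted only after its references are present, hence $n_h$'s local chain genuinely contains the complete subgraph $G[v_c]$, and in particular both $a$ and $a'$. This is exactly where the \emph{Consistent chains} property is invoked: any honest node that holds $v_c$ holds an identical copy of $G[v_c]$, so neither branch of the fork can be silently absent. I would then close with the fork-detection result: in a $1/3$-BFT setting every fork is seen by an honest node, which removes one of the two conflicting events before certifying anything above them. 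Thus $n_h$ cannot retain both $a$ and $a'$, contradicting the fact that its chain contains the full $G[v_c]$; hence $G[v_c]$ is fork-free.

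The main obstacle I expect is the middle step: rigorously justifying that an honest confirming node must hold the \emph{entire} subgraph under $v_c$ — and therefore both forked events — rather than a pruned version in which one branch has already been dropped. This requires combining the acceptance rule (references must pre-exist) with the consistency lemma so that a confirming honest node's view of $G[v_c]$ cannot omit one fork branch while still legitimately certifying $v_c$. Once that is pinned down, the quorum counting and the appeal to fork detection are routine.
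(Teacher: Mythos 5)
Your proof is correct and takes essentially the same route as the paper's own argument: the paper likewise combines the Clotho quorum (being known by more than $2n/3$ roots/nodes) with the fork-detection-and-removal guarantee carried over from~\cite{fantom18} to conclude that any fork under a Clotho would already have been detected and removed by honest participants. Your write-up is in fact more explicit than the paper's terse justification — the quorum-intersection step producing an honest confirming root $r_h$, and the ancestor-closure step forcing that node to hold all of $G[v_c]$ (hence both fork branches), are left implicit in the paper.
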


When a root $r$ becomes a Clotho, the subgraph $G[r]$ has already detected forks, if any. The fork was already removed and the node already synchronized with other nodes about the fork. By definition of Clotho, the $r$ is known by more than 2/3$n$ events, which are in turn known by another root event.
Thus, more than 2/3$n$ knew about and removed the detected forks that were synchronized.

\begin{prop}[Fork-free Global chain]
	The global consistent OPERA chain $G^C$ is fork-free.
\end{prop}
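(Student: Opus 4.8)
The plan is to argue by contradiction, leveraging the preceding Fork-free proposition for Clotho subgraphs together with the structural properties (1)--(3) of $G^C$. Suppose $G^C$ contained a fork: two event blocks $v_p$ and $v_q$ with the same creator $cr(v_p) = cr(v_q) = n_j$ that are concurrent, i.e. $v_p \concur v_q$. The goal is to derive a contradiction in every case, so that no such pair can exist in $G^C$.

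First I would characterize membership in $G^C$ in terms of Clothos. Since $G^C$ is the global consistent chain agreed upon by all nodes, I claim every event block $v \in G^C$ lies in the subgraph $G[v_c]$ of some Clotho $v_c$: an event enters the globally consistent, common-knowledge part of the run only once it is reachable from a root that has been promoted to a Clotho (known by more than $2n/3$ of the roots, which are in turn known by more than $2n/3$ of the nodes). I would make this precise using the CCK semantics of Section~\ref{sec:cck}, tying ``$v \in G^C$'' to ``$v$ is dominated by a Clotho.'' With this in hand I split into two cases. If both $v_p$ and $v_q$ lie in the subgraph $G[v_c]$ of a single Clotho $v_c$, then the preceding Fork-free proposition gives an immediate contradiction, since $G[v_c]$ is fork-free.

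The remaining case is the genuine obstacle: $v_p$ and $v_q$ are finalized under two distinct Clotho vertices, so neither single Clotho subgraph exhibits the fork. Here I would invoke the global consistency property (1), $G^C \sqsubseteq G_i$ for every node $i$: if the fork $\{v_p, v_q\}$ were in $G^C$, both branches would be present in every local chain $G_i$. But by the fork detection and removal mechanism (and the $1/3$-BFT guarantee from~\cite{fantom18} that at least one honest node observes any fork), some honest node detects the fork and removes one branch, say $v_q$, from its OPERA chain, so $v_q \notin G_i$ for that honest node $i$. This contradicts $v_q \in G^C \sqsubseteq G_i$, completing the argument.

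The main obstacle is precisely this cross-Clotho case: ruling out a fork whose two branches are finalized under different Clotho vertices, since the Fork-free proposition alone only controls forks within a single Clotho subgraph. The argument must additionally use that $G^C$ is common to all honest nodes and that any fork is seen and pruned by at least one honest node before both conflicting branches can reach the globally consistent state. A careful version would also verify that pruning one branch does not disturb the relation $G^C \sim G_i$, i.e. that the removed branch never belonged to the agreed-upon common knowledge in the first place, so that the characterization of $G^C$ via Clothos remains intact after fork removal.
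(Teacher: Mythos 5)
Your proposal is correct and shares its skeleton with the paper's argument: both reduce the claim to the decomposition of $G^C$ into Clothos and their subgraphs, and both invoke the preceding Fork-free proposition to rule out any fork contained in a single $G[v_c]$. The genuine difference lies in the cross-Clotho case, which you rightly identify as the real obstacle. The paper's entire treatment of that case is the sentence ``we can prove by processing the Clothos in topological order from lowest to highest,'' which gestures at an induction but never explains why a fork cannot have its two branches under two \emph{distinct} Clothos. Your argument fills exactly this hole: by property (1), a fork in $G^C$ would be present in every local chain $G_i$, while fork detection plus the $1/3$-BFT overlap argument (each branch, lying below a Clotho, is known to more than $2n/3$ of the nodes, so the two $2n/3$-sets intersect in at least one honest node that sees both branches and prunes one) shows that some honest node's chain omits one branch, contradicting $G^C \sqsubseteq G_i$. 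This is the same intersection reasoning the paper itself uses in its appendix fork-detection lemma, so your completion is in the paper's spirit; what it buys is an actual proof of the case the paper waves at. Your closing caveat --- that one must verify the pruned branch never belonged to $G^C$ in the first place, so that the Clotho characterization of $G^C$ survives removal --- is also well taken; that is essentially the content of the paper's Fork Lemma (Lemma~\ref{lem:fork}), and citing it explicitly would finish the job.
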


Recall that the global consistent chain $G^C$ consists of the finalised events i.e., the Clothos and its subgraphs that get final ordering at consensus. Those finally ordered Clothos are Atropos vertices.
From the above proposition, for every Clotho $v_c$, there is no fork in its subgraph  $G[v_c]$. Thus, we can prove by processing the Clothos in topological order from lowest to highest. 

\subsection{Compute global state}

From the local view H-OPERA chain of its own, a node can estimate the subgraph of $H$ that other nodes know at the time. The intuition is that a node $n_i$, from its history $H$, conservatively knows that the other node $n_j$ may just know the event blocks under the top event of $n_j$.
Let $t_i$ denote the top event of a node $n_i$ in the local chain $H$. From the local state $H$, a node can determine all top events of all nodes in $H$.

Algorithm~\ref{al:globalstate} shows an algorithm that a node can estimate a global state from a local state of a node.  The steps show how to compute the global state $H^C$ from the local state $H$. 
The set of processed nodes $S$ is initially empty (line 2). Let $L$ be the stack of all the layers so far.
The processed set $S$ has less than $n$ nodes, we process the layers one at a time in the while loop.  We pop the stack $L$ to get the highest layer $L_i$ (line 4). For each block $v$ in $L_i$, $v$ is removed from $L_i$ if the the creator of $v$ is already processed (line 5-7). After that, the set $L_i$ contains only the event blocks whose creators do not belong to $S$. Line 8 randomly picks a block $v$ from the remainder in $L_i$. Then the processed set $S$ is appended with the creator of $v$. Once all nodes are processed, the global state of the node is the induced subgraph from the remainder nodes in $L$, which excludes the nodes removed by the algorithm.

\begin{algorithm}[H]
	\caption{Estimate a Global State from Local State}\label{al:globalstate}
	\begin{algorithmic}[1]
		\Procedure{globalstate(G,L)}{}
		\State S $\leftarrow \emptyset$				
		\While{$|S| < n$}
			\State $L_i$ $\leftarrow$  pop($L$)
			\For{$v \in L_i$}
				\If{$cr(v) \in S$}
					\State{Remove $v$ from $L_i$}
				\EndIf
			\EndFor
			\State {Remove $v$ (randomly) from $L_i$}
			\State{ $S \leftarrow S \cup {cr(v)}$}
		\EndWhile
		\State{$G^C$ $\leftarrow$ induced graph from remainder nodes in $L$}
		\EndProcedure
	\end{algorithmic}
\end{algorithm}

When a node $n_i$ has an event block $v$, it knows that node all other nodes $n_j$ may not concurrently know the event block $v$. The above algorithm determines the highest event block $v$ that $n_i$ know for sure all the other nodes, from $n_i$'s view, already knows it. One can repeat the algorithm several times, to determine $n_i$ knows that $n_j$ knows that $n_k$ knows a certain event $v$.

\subsection{Clotho selection}

For practical BFT, consensus is reached once an event block is known by more than 2/3$n$ of the nodes and that information is further known by 2/3$n$ of the nodes.
Similarly, we define the consensus of a root, at which condition the root becomes a Clotho, if it can be reached by more than 2/3$n$ of the roots and the information is known by another 2/3$n$ of the roots.


For a root $r$ at frame $f_i$, if there exists a root $r'$ at a frame $f_j$ such that $j$ $\geq$ $i$+2, then the root $r$ reaches pBFT consensus. The root $r'$ is called the \emph{nominator}, which nominates $r$ to become a Clotho.

\begin{prop}[Global Clotho]
	For any two honest nodes $n_i$ and $n_j$, if $c$ is a Clotho in $H_i$ of $n_i$, then $c$ is also a Clotho in $H_j$ of $n_j$.
\end{prop}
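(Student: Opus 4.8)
The plan is to reduce the statement to the chain of consistency results already established for OPERA chains, layerings, root graphs, and frame assignments, and then to close the argument using the determinism of the Clotho decision. The governing idea is that Clotho status is not a property of a node's global view but a deterministic function of a single finite \emph{determining subgraph}, so two honest nodes that have reconstructed that subgraph must reach the same verdict.

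First I would unfold the Clotho definition: $c$ being a Clotho in $H_i$ means there is a nominating root $r'$ in $H_i$ with $\phi_F(r') \geq \phi_F(c) + 2$ such that more than $2n/3$ of the roots of the intermediate frame dominate (are reached by) $c$, and $r'$ in turn dominates those intermediate roots. The key observation is that this entire certificate lives inside $H_i[r']$, the subgraph reachable from $r'$: it mentions only $c$, $r'$, the roots of the frames strictly between them, and the reachability relations among them. Next I would invoke consistency. Since $n_i$ and $n_j$ are honest, by the \emph{Consistent chains} definition $H_i[r'] = H_j[r']$ whenever $r'$ is common to both views; by the \emph{Consistent layering} theorem $\phi_{LPL}$ agrees on every common vertex; and by the \emph{Consistent root graphs} proposition together with the root-consistency proposition, the set of roots, their frame numbers $\phi_F$, and the domination edges inside this subgraph coincide across the two views. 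Hence the certificate witnessing $c$'s Clotho status in $H_i$ is witnessed verbatim in $H_j$, so $c$ is a Clotho in $H_j$.

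The hard part will be asynchrony: a priori $H_j$ need not yet contain the nominator $r'$, in which case $c$ may still be only a root in $n_j$'s current view, so the implication cannot be purely structural. I would handle this by reading the proposition as a statement about the stable decision rather than any transient intermediate view, and by arguing that the protocol's reliable gossip guarantees $n_j$ eventually receives exactly the events comprising $H_i[r']$. This uses the acceptance rule: an event is accepted only after its $k$ references are present, so ancestors always arrive first and the determining subgraph is reconstructed identically rather than partially. Once $H_j[r'] = H_i[r']$ holds, the previous paragraph applies unchanged and the verdict is identical; forks cannot perturb the count of intermediate roots because, by the \emph{Fork-free} proposition, $G[c]$ and the determining subgraph are fork-free, so the ``more than $2n/3$'' threshold is evaluated against the same fork-free root set in both views. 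The determinism of $L_{\phi}$ is precisely what makes this agreement exact, and completes the argument.
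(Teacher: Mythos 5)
Your proposal is correct and follows essentially the same route as the paper: the paper likewise reduces Clotho agreement to the consistency results for OPERA/H-OPERA chains, root sets and frames across honest nodes (the OPERA-consistency and H-OPERA-consistency theorems and the root-consistency lemma in the appendix) plus the determinism of the nomination rule, concluding that all nodes nominate the same roots as Clothos. Your write-up is somewhat more careful than the paper's own --- you localize the Clotho certificate inside $H_i[r']$ and explicitly handle the asynchrony caveat that the paper buries inside its ``all nodes grow up into the same consistent chain $G^C$'' theorem --- but the underlying argument is the same.
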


\subsection{Atropos Selection}

After a new Clotho is found, we will assign the final \emph{consensus time} for it. When a Clotho is assigned with a consensus time, it becomes an Atropos.

In order to assign consensus time for an Clotho, Atropos selection process first sorts the Clothos based on topological ordering. Section~\ref{se:toposort} gives more details on the topological ordering and consensus time assignment.

\begin{prop}[Global Atropos]
	For any two honest nodes $n_i$ and $n_j$, if $a$ is an Atropos in $H_i$ of $n_i$, then $a$ is also an Atropos in $H_j$ of $n_j$.
\end{prop}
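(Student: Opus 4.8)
The plan is to reduce the claim to the already established \emph{Global Clotho} proposition together with the determinism of the consensus-time assignment. By definition an Atropos is a Clotho that has been assigned a consensus time by the topological sorting procedure of Section~\ref{se:toposort}. Hence if $a$ is an Atropos in $H_i$, then in particular $a$ is a Clotho in $H_i$, and by the Global Clotho proposition $a$ is also a Clotho in $H_j$. It therefore remains to show that the consensus time assigned to $a$ exists in $H_j$ and equals the one assigned in $H_i$, so that $a$ is promoted to an Atropos in $H_j$ with the identical consensus time.

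First I would make precise that the consensus time of $a$ is a deterministic function of the finalized region of the chain. The $\mathrm{TopoSort}$ algorithm computes the ordering by repeatedly applying $\mathrm{SortVertexByLayer}$, which sorts vertices by their layer $\phi$, then by Lamport timestamp, then by hash. Each of these keys is consistent across honest nodes for every event common to both chains: the \emph{consistent layering} theorem gives $\phi^{G_i}(v)=\phi^{G_j}(v)$; the Lamport timestamp is determined by the happened-before relation, which is consistent between consistent chains (Section~\ref{sec:cck}); and the cryptographic hash of an event is immutable. Moreover, since $a$ is common to both chains and $G_i \sim G_j$, the consistent-chains property yields $G_i[a]=G_j[a]$, so the subgraph peeled under $a$ during the sort is exactly the same object in the two views.

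Next I would run an induction on the Atropos vertices in increasing layer order, the order in which $\mathrm{TopoSort}$ processes them. The inductive hypothesis is that, just before $a$ is processed, the already-ordered set $U$ and the list $S$ agree in the two views. The base case uses that the lowest-layer Atropos is common and that its subgraph is identical by consistency. For the inductive step, the next Atropos chosen is determined solely by $\mathrm{SortVertexByLayer}$ on the common set of Atropos, whose status is fixed by the \emph{Consistent root}, \emph{Global Clotho}, and fork-free results; since the sort keys and the peeled subgraphs coincide, the vertices appended to $S$ (and hence to $U$) are identical. Carrying the hypothesis up to $a$ shows that $a$ receives the same position, and therefore the same consensus time, in both $H_i$ and $H_j$.

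The hard part will be the inductive step, specifically justifying that the already-processed set $U$, the finalized prefix on which the placement of $a$ depends, is the same in the two local views even though the two nodes may hold different ``frontier'' blocks near the top of their DAGs. This is where I would lean most heavily on the \emph{Fork-free Global chain} proposition and on the consistency of the root graph and frame assignment: these guarantee that every Atropos of layer at most $\phi(a)$, together with its reachable subgraph, lies in the common consistent region $G^C$ and is therefore identical across nodes. Once the finalized prefix is shown to be common and the sort keys are shown to be deterministic, the equality of the assigned consensus times, and hence the Atropos status of $a$ in $H_j$, follows.
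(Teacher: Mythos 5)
Your proposal is correct and follows essentially the same route as the paper: the paper likewise reduces Atropos consistency to Clotho consistency (all nodes share the same Clotho set via the global consistent chain $G^C$) plus the Finality theorem, whose proof is exactly your second step --- consistent layering makes the layering-based topological sort, and hence the assigned consensus times, identical across nodes. Your version merely fills in details the paper leaves implicit, notably the explicit induction over the order in which \textsc{TopoSort} processes Atropos vertices and the observation that all three sort keys (layer, Lamport timestamp, hash) are consistent between honest nodes.
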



After Atropos consensus time is computed, the Clotho is nominated to Atropos and each node stores the hash value of Atropos and Atropos consensus time in Main-Chain (blockchain). The Main-chain is used for time order between event blocks. The proof of Atropos consensus time selection is shown in the section~\ref{se:proof}.

\subsection{Transaction confirmations}

Here are some steps for a transaction to reach finality in our system. First, when a user submits a transaction into a node, a successful submission receipt will be issued to the client as a confirmation of the submitted transaction. Second, the node will batch the user submitted transaction(s) into a new event block amended into a DAG structure of event blocks and then will broadcast the event block to all other nodes of the system. The node will send a receipt confirming that the containing event block identifier is being processed. Third, when the event block is known by majority of the nodes (e.g., it becomes a Root event block), or being known by such a Root block, the node will send a confirmation that the event block is acknowledged by a majority of the nodes. Fourth, our system will determine the condition at which a Root event block becomes a Clotho for being further acknowledged by a majority of the nodes. A confirmation is then sent to the client to indicate that the event block has come to the semi-final stage as a Clotho or being confirmed by a Clotho. Fifth, after the Clotho stage, we will determine the consensus timestamp for the Clotho and its dependent event blocks. Once an event block gets the final consensus timestamp, it is finalized and a final confirmation will be issued to the client that the transaction has been successfully finalized. Thus, there will be a total of five confirmations to be sent and the fifth receipt is the final confirmation of a successful transaction.

There are some cases that a submitted transaction can fail to reach finality. Examples include a transaction does not pass the validation, e.g., insufficient account balance, or violation of account rules. The other kind of failure is when the integrity of DAG structure and event blocks is not complied due to the existence of compromised or faulty nodes. In such unsuccessful cases, the event blocks are marked for removal and detected issues are notified to all nodes. Receipts of the failure will be sent to the client.


\section{Conclusion}\label{se:con}
In this paper, we present a new framework, namely \onlay, for a scalable asynchronous distributed system with practical BFT. We propose a new consensus protocol, called $L_\phi$, to address a more reliable consensus compared to predecedent DAG-based approaches.
The new consensus protocol uses the well-known concept of layering of DAG to achieve a deterministic consensus on the OPERA chain.
 
We further propose new approaches to optimizing the OPERA chain such as using H-OPERA, root graph, root layering, improved frame assignment, improved Clotho and Atropos selection, and Main-chain. These new models, data structures and algorithm deliver a more intuitive model and a faster more reliable consensus in a distributed system.

We have presented a formal definitions and semantics for $L_\phi$ protocol in Section~\ref{se:prelim}.
Our formal proof of pBFT for our Lachesis protocol is given in Section~\ref{se:appendix}. Our work extends the formal foundation established in Fantom paper~\cite{fantom18}, which is the first that studies concurrent common knowledge sematics~\cite{cck92} in DAG-based protocols.

\newpage
\section{Appendix}\label{se:appendix}

This section gives further details about the $L_\phi$ protocol and then shows the proof of BFT of the protocol. We then present the formal semantics of  $L_\phi$ using the concurrent common knowledge that can be applied to general model of DAG-based approaches.

\subsection{Basic Definitions}

Each node stores a local view of the history built from the $L_\phi$ protocol. We cover some basic definitions used in our previous Lachesis papers~\cite{lachesis01,fantom18}. We also include definitions of layering, which our \onlay\ framework and $L_\phi$ protocol is based on.

\begin{defn}[node]
	Each machine, which is a participant in the Lachesis protocol, is called a node. \end{defn}

Let $n$ be the number of nodes in a network running \onlay.
\begin{defn}[peer node]
	A node $n_i$ has $n$-1 peer nodes.
\end{defn}

The history can be represented by a directed acyclic graph $G=(V, E)$, where $V$ is a set of vertices and $E$ is a set of edges. Each vertex in a row (node) represents an event. Time flows left-to-right of the graph, so left vertices represent earlier events in history.
A path $p$ in $G$ is a sequence  of vertices ($v_1$, $v_2$, $\dots$, $v_k$) by following the edges in $E$.
Let $v_c$ be a vertex in $G$.
A vertex $v_p$ is the \emph{parent} of $v_c$ if there is an edge from $v_p$ to $v_c$.
A vertex $v_a$ is an \emph{ancestor} of $v_c$ if there is a path from $v_a$ to $v_c$.

\subsubsection{OPERA chain}

\begin{defn}[event block]
	Each node can create event blocks, send (receive) messages to (from) other nodes.
\end{defn}

Suppose a node $n_i$ creates an event $v_c$ after an event $v_s$ in $n_i$.  Each event block has exactly $k$ references. One of the references is self-reference, and the other $k$-1 references point to the top events of $n_i$'s $k$-1 peer nodes.

\begin{defn}[OPERA chain]
	OPERA chain is a DAG graph $G = (V, E)$ consisting of $V$ vertices and $E$ edges. Each vertex $v_i \in V$ is an event block. An edge $(v_i,v_j) \in E$ refers to a hashing reference from $v_i$ to $v_j$; that is, $v_i \erefz v_j$.
\end{defn}

\begin{defn}[vertex]
	An event block is a vertex of the OPERA chain.
\end{defn}

\begin{defn}[Leaf]
	The first created event block of a node is called a leaf event block.
\end{defn}

\begin{defn}[top event]
	An event $v$ is a top event of a node $n_i$ if there is no other event in $n_i$ referencing $v$.
\end{defn}

\begin{defn}[self-ref]
	An event $v_s$ is called ``self-ref" of event $v_c$, if the self-ref hash of $v_c$ points to the event $v_s$. Denoted by $v_c \eself v_s$.
\end{defn}

\begin{defn}[ref]
	An event $v_r$ is called ``ref" of event $v_c$ if the reference hash of $v_c$ points to the event $v_r$. Denoted by $v_c \eref v_r$.
\end{defn}

For simplicity, we can use $\erefz$ to denote a reference relationship (either $\eref$ or $\eself$).

\begin{defn}[self-ancestor]
	An event block $v_a$ is self-ancestor of an event block $v_c$ if there is a sequence of events such that $v_c \eself v_1 \eself \dots \eself v_m \eself v_a $. Denoted by $v_c \eselfancestor v_a$.
\end{defn}

\begin{defn}[ancestor]
	An event block $v_a$ is an ancestor of an event block $v_c$ if there is a sequence of events such that $v_c \erefz v_1 \erefz \dots \erefz v_m \erefz v_a $. Denoted by $v_c \eancestor v_a$.
\end{defn}

For simplicity, we simply use $v_c \eancestor v_s$ to refer both ancestor and self-ancestor relationship, unless we need to distinguish the two cases.

\begin{defn}[creator] If a node $n_x$ creates an event block $v$, then the creator of $v$, denoted by $cr(v)$, is $n_x$.
\end{defn}

We introduce pseudo vertices, \emph{top} and \emph{bot}, of the OPERA chain $G$.
\begin{defn}[pseudo top]
	A pseudo vertex, called top, is the parent of all top event blocks. Denoted by $\top$.
\end{defn}
\begin{defn}[pseudo bottom]
	A pseudo vertex, called bottom, is the child of all leaf event blocks. Denoted by $\bot$.
\end{defn}

With the pseudo vertices, we have $\bot$ happened before all event blocks. Also all event blocks happened before $\top$. That is, for all event $v_i$, $\bot$ $\hbefore$ $v_i$ and $v_i$ $\hbefore$ $\top$.

Here, we introduce a new idea of subgraph rooted at a vertex $v$.

\begin{defn}[subgraph] 
	For a vertex $v$ in a DAG $G$, let $G[v] = (V_v,E_v)$ denote an induced subgraph of $G$ such that $V_v$ consists of all ancestors of $v$ including $v$, and $E_v$ is the induced edges of $V_v$ in $G$.
\end{defn}

\subsubsection{Happened-Before relation}
	\begin{defn}[Happened-Immediate-Before]
		An event block $v_x$ is said Happened-Immediate-Before an event block $v_y$ if $v_x$ is a (self-) ref of $v_y$. Denoted by $v_x \hibefore v_y$.
	\end{defn}
	
	\begin{defn}[Happened-Before]	
		An event block $v_x$ is said Happened-Before an event block $v_y$ if $v_x$ is a (self-) ancestor of $v_y$. Denoted by $v_x \hbefore v_y$.
	\end{defn}
	
	The happens-before relation is the transitive closure of happens-immediately-before.
	An event $v_x$ happened before an event $v_y$ if one of the followings happens: (a) $v_y \eself v_x$, (b) $v_y \eref v_x$,  or (c) $v_y \eancestor v_x$.
	We come up with the following proposition:
	\begin{prop}[Happened-Immediate-Before OPERA]
		$v_x \hibefore v_y$ iff $v_y \erefz v_x$ iff edge $(v_y, v_x)$ $\in E$ of OPERA chain.
	\end{prop}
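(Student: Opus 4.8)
The plan is to prove the chain of biconditionals by straightforward unfolding of the relevant definitions, since each of the three formulations is tied to the next by a single definitional equivalence. No combinatorial or inductive argument is required; the entire content lies in correctly tracking the orientation of the reference arrows. I would state the proof as two equivalences and then chain them.

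First I would establish $v_x \hibefore v_y \Leftrightarrow v_y \erefz v_x$. By the definition of Happened-Immediate-Before, $v_x \hibefore v_y$ holds exactly when $v_x$ is a (self-)ref of $v_y$. I would then rewrite this in arrow notation: by the definition of \emph{ref}, the statement ``$v_x$ is a ref of $v_y$'' means the reference hash of $v_y$ points to $v_x$, i.e. $v_y \eref v_x$; likewise by the definition of \emph{self-ref}, the self-referencing case gives $v_y \eself v_x$. Combining the two cases and recalling that $\erefz$ abbreviates either $\eref$ or $\eself$, this is precisely $v_y \erefz v_x$.

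Next I would establish $v_y \erefz v_x \Leftrightarrow (v_y, v_x) \in E$. This is immediate from the definition of the OPERA chain, which asserts that an edge $(v_i, v_j) \in E$ corresponds to a hashing reference $v_i \erefz v_j$; instantiating $v_i = v_y$ and $v_j = v_x$ yields the equivalence. Chaining the two then gives the full statement $v_x \hibefore v_y \Leftrightarrow v_y \erefz v_x \Leftrightarrow (v_y, v_x) \in E$.

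The only point requiring care — and the closest thing to an obstacle — is the reversal of direction between the happened-before flow and the reference/edge orientation: the relation $v_x \hibefore v_y$ reads from the older event to the newer one, whereas both $v_y \erefz v_x$ and the edge $(v_y, v_x)$ point from the newer event back to the event it references. Keeping this orientation consistent across all three formulations is the substance of the argument; once the arrows are aligned, every equivalence holds by definition.
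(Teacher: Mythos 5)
Your proof is correct and matches the paper's treatment: the paper states this proposition without proof precisely because it is a direct unfolding of the definitions of Happened-Immediate-Before, (self-)ref, and the OPERA chain's edges, which is exactly what you do, including the careful handling of the arrow reversal between $v_x \hibefore v_y$ and $v_y \erefz v_x$.
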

	\begin{lem}[Happened-Before Lemma]
		$v_x \hbefore v_y$ iff $v_y \eancestor v_x$.
	\end{lem}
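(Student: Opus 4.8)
The plan is to prove the two implications by unfolding the definition of $\hbefore$ as the transitive closure of the Happened-Immediate-Before relation $\hibefore$, and then to translate each elementary link through the immediately preceding proposition, which identifies $u \hibefore w$ with $w \erefz u$ (equivalently, with the edge $(w,u) \in E$). Since the ancestor relation $\eancestor$ is by definition exactly a finite chain of $\erefz$-steps, the whole argument reduces to the standard fact that the transitive closure of a relation coincides with reachability along finite chains of that relation; the only real content is matching up the directions and the combined notations correctly.

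For the forward direction, suppose $v_x \hbefore v_y$. Because $\hbefore$ is the transitive closure of $\hibefore$, there is a finite sequence $v_x = c_0 \hibefore c_1 \hibefore \dots \hibefore c_n = v_y$. Applying the proposition to each step gives $c_{i+1} \erefz c_i$ for every $0 \leq i < n$. Reading this chain starting from $v_y$ yields $v_y = c_n \erefz c_{n-1} \erefz \dots \erefz c_0 = v_x$, which is precisely a witnessing sequence for $v_y \eancestor v_x$ in the combined ancestor / self-ancestor sense.

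For the converse, suppose $v_y \eancestor v_x$. By the definition of the ancestor relation there is a sequence $v_y \erefz w_1 \erefz \dots \erefz w_m \erefz v_x$. Translating each $\erefz$-link back through the proposition gives $w_1 \hibefore v_y$, $w_2 \hibefore w_1$, \dots, $v_x \hibefore w_m$, i.e. a $\hibefore$-chain $v_x \hibefore w_m \hibefore \dots \hibefore w_1 \hibefore v_y$. Since $\hbefore$ is the transitive closure of $\hibefore$, this chain witnesses $v_x \hbefore v_y$, completing the equivalence.

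I expect no genuine obstacle here; the statement is essentially definition-chasing. The only point requiring care is the bookkeeping that $\erefz$ (and hence $\eancestor$) silently merges the $\eref$ and $\eself$ cases, which must be matched against the ``(self-)ancestor'' wording used to define $\hbefore$. Because both sides already absorb this distinction into their combined notation, no separate case split between same-node ($\eself$) and cross-node ($\eref$) links is needed. The degenerate short chains (a single $\erefz$-link, corresponding to a direct reference) are simply the base step $m=0$ and follow directly from the proposition, so they require no special treatment.
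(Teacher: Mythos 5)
Your proof is correct, but it takes a genuinely different (and longer) route than the paper, which states this lemma with no proof at all: under the paper's definitions the statement is pure notation-unfolding. The paper defines Happened-Before by ``$v_x \hbefore v_y$ iff $v_x$ is a (self-)ancestor of $v_y$,'' and ``$v_x$ is an ancestor of $v_y$'' is, by the paper's own notational convention, written $v_y \eancestor v_x$; so the lemma is literally the definition restated in symbols, which is presumably why no proof is given. You instead take as your starting point the prose remark that $\hbefore$ is the transitive closure of $\hibefore$, translate each $\hibefore$-link into an $\erefz$-edge via the Happened-Immediate-Before proposition, and reassemble the chain into an $\eancestor$ witness (and back). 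That argument is sound, the direction bookkeeping is right, and the degenerate one-step case is handled; what it buys is something the paper leaves implicit, namely a verification that the paper's two characterizations of happened-before --- the definitional one via (self-)ancestors and the transitive-closure one --- actually coincide, with the proposition supplying the edge-level dictionary. The trade-off is that your proof implicitly promotes the transitive-closure description to the status of definition; read against the paper's stated definition, what you have proved is the equivalence of the two characterizations rather than the lemma itself, but mathematically either reading is fine.
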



	\begin{defn}[concurrent]
		Two event blocks $v_x$ and $v_y$ are said concurrent if neither of them  happened before the other. Denoted by $v_x \concur v_y$.
	\end{defn}

	Given two vertices $v_x$ and $v_y$ both contained in two OPERA chains $G_1$ and $G_2$ on two nodes. We have the following: 
	(1) $v_x \hbefore v_y$ in $G_1$ iff $v_x \hbefore v_y$ in $G_2$; (2)
	$v_x \concur v_y$ in $G_1$ iff $v_x \concur v_y$ in $G_2$.

\subsubsection{Layering Definitions}

For a directed acyclic graph $G$=($V$,$E$), a layering is to assign a layer number to each vertex in $G$.

\dfnn{Layering}{A layering (or levelling) of $G$ is a topological numbering $\phi$ of $G$, $\phi: V \rightarrow Z$,  mapping the  set  of  vertices $V$ of $G$ to  integers  such  that $\phi(v)$ $\geq$ $\phi(u)$ + 1 for every directed edge ($u$, $v$) $\in E$.  If $\phi(v)$=$j$, then $v$ is a layer-$j$ vertex and $V_j= \phi^{-1}(j)$ is the jth layer of $G$.}

A layering $\phi$ of $G$ partitions of the set of vertices $V$ into a finite number $l$ of \emph{non-empty} disjoint subsets (called layers) $V_1$,$V_2$,$\dots$, $V_l$, such that $V$ = $\cup_{i=1}^{l}{V_i}$. Each vertex is assigned to a layer $V_j$, where $1 \leq j \leq l$, such that every edge ($u$,$v$) $\in E$, $u \in V_i$, $v \in V_j$, $1 \leq i < j \leq l$.

\begin{defn}[Hierarchical graph]
	For a layering $\phi$, 
	the produced graph $H$=($V$,$E$,$\phi$) is a \emph{hierarchical graph}, which is also called an $l$-layered directed graph and could be represented as
	$H$=($V_1$,$V_2$,$\dots$,$V_l$;$E$).
\end{defn}

\subsection{Proof of Byzantine Fault Tolerance for $L_\phi$ Algorithm}\label{se:proof}
This section presents a proof of our $L_\phi$. We aim to show that our consensus is Byzantine fault tolerant when at most one-third of participants are compromised. We first provide some definitions, lemmas and theorems. Then we validate the Byzantine fault tolerance. 


We introduce a new notion of H-OPERA chain, which is built on top of the OPERA chain.
By applying a layering $\phi$ on the OPERA chain, one can obtain the hierarchical graph of $G$, which is called H-OPERA chain.

\begin{defn}[H-OPERA chain]
	An H-OPERA chain is the result hierarchical graph $H = (V,E, \phi)$.
\end{defn}

\begin{defn}[Root]
	\label{def:root}
	The leaf event block of a node is a root.
	When an event block $v$ can reach more than $2n/3$ of the roots in the previous frames, $v$ becomes a root.
\end{defn}

\begin{defn}[Root set]
	The set of all first event blocks (leaf events) of all nodes form the first root set $R_1$ ($|R_1|$ = $n$). The root set $R_k$ consists of all roots $r_i$ such that $r_i$ $\not \in $ $R_i$, $\forall$ $i$ = 1..($k$-1) and $r_i$ can reach more than 2n/3 other roots in the current frame, $i$ = 1..($k$-1).  
\end{defn}

\begin{defn}[Root graph] 
	A root graph $G_R$=($V_R$, $E_R$) is a directed graph consisting of vertices as roots and edges represent their reachability.
\end{defn}

In the root graph $G_R$, the set of roots $V_R$ is a subset of $V$. The set of edges $E_R$ is the reduced edges from $E$, in that ($u$,$v$) $\in$ $E_R$ only if $u$ and $v$ are roots and $u$ can reach $v$ following edges in $E$ i.e., $v \hbefore u$.

\begin{defn}[Frame]
Frame $f_i$ is a natural number that separates Root sets. 
\end{defn} 

The root set at frame $f_i$ is denoted by $R_i$.

\begin{defn}[consistent chains]\label{dfn:conchains} OPERA chains $G_1$ and $G_2$ are consistent iff for any event $v$ contained in both chains, $G_1[v] = G_2[v]$. Denoted by $G_1 \sim G_2$.
\end{defn}
When two consistent chains contain the same event $v$, both chains contain the same set of ancestors for $v$, with the same reference and self-ref edges between those ancestors:
\begin{thm}\label{thm:conchains}
	All nodes have consistent OPERA chains.
\end{thm}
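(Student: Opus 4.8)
The plan is to reduce the theorem directly to Definition~\ref{dfn:conchains}: it suffices to fix two honest nodes with OPERA chains $G_1$ and $G_2$, take an arbitrary event $v$ lying in both, and show $G_1[v] = G_2[v]$. I would prove this by well-founded induction on the ancestor relation. This induction is legitimate because each OPERA chain is a finite DAG, so the quantity ``number of ancestors of $v$'' is a finite, strictly decreasing measure as one follows reference edges downward. The base case is a leaf event: by definition a leaf has no outgoing references, so $G_1[v] = \{v\} = G_2[v]$ trivially.

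For the inductive step I would exploit the fact that $v$ carries exactly $k$ reference hashes — one self-ref ($v \eself \cdot$) and $k-1$ refs ($v \eref \cdot$). Since $v$ is the same event block in both chains, its body, and in particular these $k$ hash pointers, are identical in $G_1$ and $G_2$. Invoking collision resistance of the cryptographic hash function, each stored hash determines a unique block, so the referenced events $w_1,\dots,w_k$ with $v \erefz w_i$ coincide across the two chains; moreover, because a node accepts or creates $v$ only after it already holds all $k$ references, each $w_i$ lies in both $G_1$ and $G_2$. Each $w_i$ satisfies $v \eancestor w_i$ and hence has strictly fewer ancestors than $v$, so the induction hypothesis gives $G_1[w_i] = G_2[w_i]$.

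Assembling the pieces, $G[v]$ is the union $\{v\} \cup \bigcup_{i} G[w_i]$ together with the edges $(v,w_i)$; every component on the right is identical in $G_1$ and $G_2$ by the inductive hypothesis, and the edge set attached at $v$ is the same because the reference hashes in $v$'s body are the same. Therefore $G_1[v] = G_2[v]$, which by Definition~\ref{dfn:conchains} establishes $G_1 \sim G_2$ for the two arbitrarily chosen chains, proving the theorem.

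The main obstacle is that the decisive step is not combinatorial but cryptographic: the whole argument hinges on the assumption that equal hash values imply equal referenced blocks, i.e., on collision resistance of the hash function. I would make this assumption explicit and note that the result is conditional on it. A secondary subtlety worth spelling out is the precise meaning of ``the same event $v$'': it must denote bit-identical block content (same signature, timestamp, and reference hashes), since it is exactly this content-equality that the induction propagates downward along the reference edges to conclude equality of the entire ancestor subgraph.
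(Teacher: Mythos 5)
Your proposal is correct and follows essentially the same route as the paper's own proof: both argue that an event common to two chains carries the same $k$ reference hashes, that the acceptance rule forces those referenced blocks to be present, that hash security makes the references identical, and then conclude by induction over ancestors that the whole subgraph $G_1[v]=G_2[v]$ coincides. Your version merely makes explicit what the paper leaves implicit (the well-founded induction measure, the leaf base case, and the collision-resistance assumption), which is a faithful tightening rather than a different approach.
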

\begin{proof}
	 If two nodes have OPERA chains containing event $v$, then they have the same $k$ hashes contained within $v$. A node will not accept an event during a sync unless that node already has $k$ references for that event, so both OPERA chains must contain $k$ references for $v$. The cryptographic hashes are assumed to be secure, therefore the references must be the same. By induction, all ancestors of $v$ must be the same. Therefore, the two OPERA chains are consistent.
\end{proof}

\begin{defn}[fork]
	The pair of events ($v_x$, $v_y$) is a fork if $v_x$ and $v_y$ have the same creator, but neither is a self-ancestor of the other. Denoted by $v_x \efork v_y$.
\end{defn}
For example, let $v_z$ be an event in node $n_1$ and two child events $v_x$ and $v_y$ of $v_z$. if $v_x \eself v_z$, $v_y \eself v_z$, $v_x \not \eself v_y$, $v_y \not \eself v_z$, then ($v_x$, $v_y$) is a fork.
The fork relation is symmetric; that is $v_x \efork v_y$ iff $v_y \efork v_x$.
\begin{lem}
	$v_x \efork v_y$ iff $cr(v_x)=cr(v_y)$ and $v_x \concur v_y$.
\end{lem}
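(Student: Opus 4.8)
The plan is to prove the biconditional by unfolding both sides into their defining clauses and then reducing the claim to a single statement about the self-reference structure of one creator's events. Unfolding the fork definition, $v_x \efork v_y$ holds exactly when $cr(v_x)=cr(v_y)$ together with $\neg(v_x \eselfancestor v_y)$ and $\neg(v_y \eselfancestor v_x)$. Unfolding concurrency, the right-hand side holds exactly when $cr(v_x)=cr(v_y)$ together with $\neg(v_x \hbefore v_y)$ and $\neg(v_y \hbefore v_x)$. Both conjunctions already share the clause $cr(v_x)=cr(v_y)$, so after taking contrapositives it suffices to prove, under a common creator, that the self-ancestor disjunction $(v_x \eselfancestor v_y)\vee(v_y \eselfancestor v_x)$ is equivalent to the happened-before disjunction $(v_x \hbefore v_y)\vee(v_y \hbefore v_x)$.

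The forward implication requires no hypothesis on creators, and I would dispatch it first. Every self-reference edge $\eself$ is a special case of a reference edge $\erefz$, so any self-ancestor chain witnessing $v_x \eselfancestor v_y$ is also an ancestor chain witnessing $v_y \eancestor v_x$; by the Happened-Before Lemma this yields $v_x \hbefore v_y$. Thus each disjunct on the self-ancestor side forces the matching disjunct on the happened-before side.

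The real content is the converse, and this is where I expect the main obstacle to sit. Assuming $cr(v_x)=cr(v_y)$ and, without loss of generality, $v_x \hbefore v_y$, I want to conclude $v_x \eselfancestor v_y$. The intended argument threads the creator's events along their unique self-reference links and proceeds by induction down $v_y$'s self-ancestor chain, showing that any earlier event of the same creator that is an ancestor of $v_y$ must already lie on that chain. The delicate case --- the step I would scrutinise most --- is a happened-before witness that leaves the self-reference chain, realising $v_x \hbefore v_y$ through a peer's events via a path $v_y \erefz w \erefz \cdots \erefz v_x$ with $cr(w)\neq cr(v_x)$ rather than through $\eself$ edges alone. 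Excluding this case is exactly what ties the lemma to the fork notion: a creator that never self-forks contributes a single $\eself$-chain already containing every same-creator ancestor it knows, so the two disjunctions collapse onto each other and the biconditional follows. I would therefore make the single-chain structure of a creator's honest events explicit as the lemma on which this converse step rests.
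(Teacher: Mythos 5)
Your unfolding of both sides is correct, and your first direction is sound: every $\eself$ edge is an $\erefz$ edge, so self-ancestry implies happened-before, which gives ``same creator and $v_x \concur v_y$ implies $v_x \efork v_y$'' unconditionally. You have also located exactly the right pressure point in the converse. The gap is in how you propose to close it. Your plan rests on a lemma about a creator that ``never self-forks,'' whose events form a single $\eself$-chain containing all of its same-creator ancestors. But that hypothesis is negated precisely when it is needed: the direction ``$v_x \efork v_y \Rightarrow v_x \concur v_y$'' has content only when the creator \emph{has} forked, and for a forking creator the ``delicate case'' you flag is not excludable --- it is realizable. Concretely, let node $n_1$ create $v_z$ and then $v_x$ with $v_x \eself v_z$; let a peer create $w$ with $w \eref v_x$; let $n_1$ then create $v_y$ with $v_y \eself v_z$ and $v_y \eref w$. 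Then $cr(v_x)=cr(v_y)$ and neither event is a self-ancestor of the other, so $v_x \efork v_y$ by the paper's definition; yet $v_y \eancestor v_x$ through $w$, hence $v_x \hbefore v_y$ and the pair is not concurrent. So under the literal reading you adopt (fork via $\eselfancestor$, concurrency via general $\hbefore$), the equivalence of the two disjunctions is false, no refinement of your induction can establish it, and restricting attention to non-forking creators makes both sides of the biconditional false and the lemma vacuous.

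The paper's own proof takes a much blunter route: it restates the fork definition with $\eancestor$ in place of self-ancestor ($cr(v_x)=cr(v_y)$, $v_x \not\eancestor v_y$ and $v_y \not\eancestor v_x$), leaning on the paper's standing convention that $\eancestor$ is used for both the ancestor and the self-ancestor relation, and then the lemma follows in one line from the Happened-Before Lemma and the definition of $\concur$. In other words, the step you correctly isolate as ``the real content'' is not proved in the paper at all; it is absorbed into notation. Your analysis has the merit of exposing that this identification is a genuine assumption rather than a triviality, but your proposal does not discharge it either, and the construction above shows it cannot be discharged for exactly the creators the fork notion is about. A correct write-up must either adopt the paper's reading of the fork definition (general ancestry, making the lemma definitional) or weaken the statement, rather than appeal to a single-chain lemma whose hypothesis fails whenever a fork exists.
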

\begin{proof}
By definition, ($v_x$, $v_y$) is a fork if $cr(v_x)=cr(v_y)$, $v_x \not \eancestor v_y$ and $v_y \not \eancestor v_x$. Using Happened-Before, the second part means $v_x \not \rightarrow v_y$ and $v_y \not \rightarrow v_x$. By definition of concurrent, we get $v_x \concur v_y$.
\end{proof}

\begin{lem} (fork detection). If there is a fork $v_x \efork  v_y$, then $v_x$ and $v_y$ cannot both be roots on honest nodes.
\end{lem}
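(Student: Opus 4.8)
The plan is to argue by contradiction. By the preceding lemma, $v_x \efork v_y$ is equivalent to $cr(v_x) = cr(v_y)$ (call this common creator $c$) together with $v_x \concur v_y$; in particular, since an honest node never emits two concurrent events with itself as creator, $c$ must be Byzantine. Assume toward a contradiction that both $v_x$ and $v_y$ are roots. I would first dispose of the degenerate case: by Definition~\ref{def:root} a root is either a leaf or an event reaching more than $2n/3$ of the earlier roots, and a node has a single leaf that is a self-ancestor of all of its later events. Hence if either $v_x$ or $v_y$ were the leaf of $c$, one would be a self-ancestor of the other, contradicting $v_x \concur v_y$. So both must be roots of the second kind.

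Next I would extract, for each of $v_x$ and $v_y$, the set of roots it reaches: let $A$ and $B$ be these sets, so $|A| > 2n/3$ and $|B| > 2n/3$. The heart of the argument is a quorum-intersection step. Counting distinct creators rather than roots and using that honest nodes contribute at most one root per frame, I would bound $|A \cap B|$ from below by more than $n/3$ and, since at most $n/3$ of the nodes are Byzantine, conclude that the shared causal history of $v_x$ and $v_y$ is witnessed by a supermajority that necessarily includes honest participants. Combined with the fact --- established in~\cite{fantom18} --- that in a $1/3$-Byzantine network there always exists an honest node that sees a given fork, this is meant to yield an honest node $h$ whose OPERA chain contains both $v_x$ and $v_y$.

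Finally, the fork-detection and removal rule of $L_\phi$ forces $h$ to discard one of the two forked events from its OPERA chain, so $h$ cannot simultaneously retain both $v_x$ and $v_y$ as roots. By Theorem~\ref{thm:conchains} the ancestor set, and hence the root status, of any shared event is identical across honest nodes; therefore it cannot be that $v_x$ and $v_y$ are both roots on honest nodes, which is the desired contradiction. The step I expect to be the main obstacle is exactly the quorum-intersection count: a Byzantine $c$ together with its colluders may manufacture several forked events that are themselves root candidates, so I must argue carefully that the supermajorities associated with $v_x$ and $v_y$ still overlap in an \emph{honest} identifier, and that this honest node genuinely witnesses the fork rather than merely sharing an earlier common ancestor. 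Making this link rigorous --- rather than the routine leaf elimination and case analysis --- is where the real work lies.
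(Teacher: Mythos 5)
Your proposal follows essentially the same route as the paper's own proof: argue by contradiction, observe that each of $v_x$ and $v_y$ (as roots) reaches events created by more than $2n/3$ of the member nodes, intersect the two quorums to obtain more than $n/3$ common creators, and use the assumption of fewer than $n/3$ Byzantine nodes to extract an honest member in the overlap who cannot allow the fork. Your additional steps (eliminating the leaf case, counting distinct creators rather than roots, and invoking chain consistency and fork removal to finish) are refinements of points the paper leaves implicit, but the key quorum-intersection argument and its use are identical to the paper's.
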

\begin{proof}
	Here, we show a proof by contradiction. Any honest node cannot accept a fork so $v_x$ and $v_y$ cannot be roots on the same honest node. Now we prove a more general case. Suppose that both $v_x$ is a root of $n_x$ and $v_y$ is root of $n_y$, where $n_x$ and $n_y$ are honest nodes. Since $v_x$ is a root, it reached events created by more than 2/3 of member nodes. Similary, $v_y$ is a root, it reached events created by  more than 2/3 of member nodes. Thus, there must be an overlap of more than $n$/3 members of those events in both sets. Since we assume less than $n$/3 members are not honest, so there must be at least one honest member in the overlap set. Let $n_m$ be such an honest member. Because $n_m$ is honest, $n_m$ does not allow the fork. This contradicts the assumption. Thus, the lemma is proved.
\end{proof}

Each node $n_i$ has an OPERA chain $G_i$. We define a consistent chain from a sequence of OPERA chain $G_i$.
\begin{defn}[consistent chain] 
	A global consistent chain $G^C$ is a chain if $G^C \sim G_i$ for all $G_i$.
\end{defn}

We denote $G \sqsubseteq G'$ to stand for $G$ is a subgraph of $G'$.
\begin{lem}
	$\forall G_i$ ($G^C \sqsubseteq G_i$).
\end{lem}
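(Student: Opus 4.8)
The plan is to reduce the subgraph claim $G^C \sqsubseteq G_i$ to a vertex-and-edge containment statement and then discharge it using the consistency relation $\sim$. Fix an arbitrary local chain $G_i$. Since $G \sqsubseteq G'$ means exactly that the vertex set and edge set of $G$ are contained in those of $G'$, it suffices to show that every vertex and every edge occurring in $G^C$ also occurs in $G_i$.

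First I would decompose $G^C$ into the union of the ancestor-subgraphs of its own vertices, i.e.\ $G^C = \bigcup_{v \in G^C} G^C[v]$. This identity is immediate from the definition of $G^C[v]$: every vertex $v$ lies in $G^C[v]$, and for every edge $(v,u)$ of $G^C$ the OPERA-chain convention gives $v \erefz u$, so $u$ is an ancestor of $v$ and both endpoints together with the edge itself lie in $G^C[v]$. Hence the union recovers all of $G^C$'s vertices and edges, and since each $G^C[v]$ is a subgraph of $G^C$ the reverse containment is trivial. It therefore suffices to prove $G^C[v] \sqsubseteq G_i$ for each $v \in G^C$. Now fix such a $v$. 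Because $v$ is a globally known event it is contained in $G_i$ as well, so $v$ lies in both $G^C$ and $G_i$; the defining property of $G^C \sim G_i$ then yields $G^C[v] = G_i[v]$. As $G_i[v]$ is by definition an induced subgraph of $G_i$, we obtain $G^C[v] = G_i[v] \sqsubseteq G_i$, and combining with the decomposition gives $G^C = \bigcup_{v} G^C[v] \sqsubseteq G_i$. Since $G_i$ was arbitrary, the lemma follows.

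The main obstacle is precisely the step asserting $v \in G_i$ for every $v \in G^C$: the bare relation $G^C \sim G_i$ constrains only events already present in both chains, so on its own it cannot exclude a vertex of $G^C$ that is missing from some $G_i$. To close this gap I would invoke the intended meaning of the global consistent chain, namely that $G^C$ is assembled solely from events common to all local chains, which is underwritten by Theorem~\ref{thm:conchains} establishing that all honest nodes hold mutually consistent OPERA chains. A fully internal alternative is induction on the topological (layer / Lamport) order of $G^C$: the leaf events of $G^C$ are the genesis blocks held by every node, and once a vertex $v$ and all of its ancestors are known to lie in $G_i$, consistency forces the reference structure around $v$ to coincide in the two chains, propagating membership upward through the order. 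Either route secures vertex membership, after which the union argument above finishes the proof routinely.
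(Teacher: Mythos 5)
The paper itself offers no proof of this lemma---it is asserted, both in Section~\ref{se:consen} and in the appendix, as a bare property of $G^C$---so your proposal can only be judged on its own terms, and on those terms the gap you yourself flag is fatal rather than incidental. Your reduction is fine as far as it goes: the decomposition $G^C = \bigcup_{v \in G^C} G^C[v]$ is valid in a reference DAG, and once $v \in G_i$ is known, $G^C \sim G_i$ yields $G^C[v] = G_i[v] \sqsubseteq G_i$. But the membership claim ``every $v \in G^C$ lies in $G_i$'' is the entire content of the lemma, and neither of your proposed repairs establishes it. Appealing to the ``intended meaning'' of $G^C$ as assembled from events common to all chains is circular: it assumes precisely the containment to be proved, and Theorem~\ref{thm:conchains} cannot underwrite it, since that theorem asserts only pairwise consistency of local chains, which constrains the ancestor structure of events already shared and says nothing about which events a chain must contain. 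Your induction route fails at its inductive step for the same reason: from ``all ancestors of $v$ lie in $G_i$'' plus consistency you cannot conclude $v \in G_i$; if node $i$ simply has not yet received $v$, then the consistency condition at $v$ is vacuous and nothing propagates membership upward.

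Worse, under the paper's literal definition (``$G^C$ is a chain such that $G^C \sim G_i$ for all $G_i$'') the lemma is not provable at all: the union of all local chains is also consistent with every $G_i$ (pairwise consistency of the $G_i$ makes the ancestor subgraphs of shared events agree), yet it is generally not a subgraph of any single $G_i$. The statement is true only for the intended construction of $G^C$ as the \emph{common} part of all local chains---e.g., the intersection of the $G_i$, or the union of the subgraphs $G[v]$ over finalized events $v$, as the paper itself describes later (``$G^C$ consists of the finalised events'')---and under that reading the containment holds essentially by construction, after which your union-of-$G^C[v]$ argument is the right way to handle edges. So your proof does not go through as written, but you have correctly located the exact point where the paper's formalism is too weak to support its own claim; the fix is to strengthen the definition of $G^C$, not to look for a cleverer argument from the consistency relation alone.
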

\begin{lem}
	$\forall v \in G^C$ $\forall G_i$ ($G^C[v] \sqsubseteq G_i[v]$).
\end{lem}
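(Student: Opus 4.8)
The plan is to obtain the statement directly from the immediately preceding lemma, $\forall G_i\,(G^C \sqsubseteq G_i)$, together with the defining property of the global consistent chain, namely that $G^C \sim G_i$ for every local chain $G_i$.

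First I would fix an arbitrary vertex $v \in G^C$ and an arbitrary local chain $G_i$. The preceding lemma asserts $G^C \sqsubseteq G_i$, i.e. $G^C$ is a subgraph of $G_i$; in particular every vertex of $G^C$ is a vertex of $G_i$, so $v$ itself lies in $G_i$. Hence $v$ is an event contained in both $G^C$ and $G_i$, which is exactly the hypothesis needed to apply the definition of consistent chains.

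Next, since $G^C$ is by construction a global consistent chain, we have $G^C \sim G_i$. Applying the definition of consistency to the shared vertex $v$ then gives $G^C[v] = G_i[v]$. This is in fact stronger than the claim: because the subgraph order $\sqsubseteq$ is reflexive, the equality $G^C[v] = G_i[v]$ yields $G^C[v] \sqsubseteq G_i[v]$ at once, completing the argument.

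I do not expect a genuine obstacle here. The only point needing attention is verifying the common-vertex hypothesis of the consistency definition, and this is supplied precisely by the vertex-inclusion half of the preceding lemma; the remainder is a one-line consequence of reflexivity of $\sqsubseteq$. If one prefers not to invoke consistency as a black box, an equivalent route argues directly from the theorem that all nodes have consistent OPERA chains: the ancestors of $v$ recorded in $G^C$ and in $G_i$, together with their ref and self-ref edges, coincide by the hash-security induction, so the induced subgraphs $G^C[v]$ and $G_i[v]$ are in fact identical rather than merely nested.
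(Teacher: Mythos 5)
Your proposal is correct and matches the paper's (implicit) argument: the paper states this lemma without a separate proof, as an immediate consequence of the definition of the global consistent chain ($G^C \sim G_i$ for all $G_i$) together with the consistent-chains definition, which yields $G^C[v] = G_i[v]$ for any shared vertex $v$ — exactly the equality you derive before weakening it to $\sqsubseteq$. Your use of the preceding lemma to supply the common-vertex hypothesis is the right (and only needed) supporting step.
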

\begin{lem}
	($\forall v_c \in G^C$) ($\forall v_p \in G_i$) (($v_p \hbefore v_c) \Rightarrow v_p \in G^C$).
\end{lem}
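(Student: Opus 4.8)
The plan is to reduce the statement to the consistency relation $G^C \sim G_i$ that has already been established, leaning on the two subgraph lemmas that immediately precede it. The core observation is that whenever $v_p$ happened-before $v_c$, the vertex $v_p$ already lives inside the induced subgraph $G_i[v_c]$; consistency then forces this subgraph to coincide with $G^C[v_c]$, which by construction sits inside $G^C$. So membership transfers from $G_i$ back into $G^C$ for free.

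First I would make the common membership explicit. Since $v_c \in G^C$ and, by the preceding lemma, $\forall G_i\,(G^C \sqsubseteq G_i)$, we get $v_c \in G_i$, so $v_c$ is an event contained in both chains. Because $G^C \sim G_i$, the definition of consistent chains (Definition~\ref{dfn:conchains}) then licenses the equality $G^C[v_c] = G_i[v_c]$.

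Next I would translate the hypothesis into the language of induced subgraphs. By the Happened-Before Lemma, $v_p \hbefore v_c$ is equivalent to $v_c \eancestor v_p$, i.e. $v_p$ is an ancestor of $v_c$. By the definition of $G_i[v_c]$ as the induced subgraph on all ancestors of $v_c$ together with $v_c$ itself, this gives $v_p \in G_i[v_c]$. Combining with the equality above yields $v_p \in G^C[v_c]$, and since $G^C[v_c]$ is by construction a subgraph of $G^C$, we conclude $v_p \in G^C$, as required.

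The argument is short precisely because the heavy lifting was done in establishing $G^C \sim G_i$ (Theorem~\ref{thm:conchains}) and the subgraph lemmas. The one point that genuinely requires care — and which I expect is the only real obstacle — is that the equality $G^C[v_c] = G_i[v_c]$ is only valid when $v_c$ belongs to \emph{both} chains, which is exactly why the lemma $G^C \sqsubseteq G_i$ must be invoked first rather than taken for granted. A secondary subtlety is orienting the ancestor correspondence correctly: one must apply the Happened-Before Lemma in the form $v_c \eancestor v_p$, not its reverse, so that $v_p$ is placed among the ancestors of $v_c$ and hence inside $G_i[v_c]$.
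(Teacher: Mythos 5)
Your proof is correct. Note that the paper states this lemma bare, with no proof at all (it is one of three unproved properties of $G^C$ listed after the definition of the global consistent chain), so there is no authorial argument to compare against; your derivation --- $v_c \in G_i$ via $G^C \sqsubseteq G_i$, the equality $G^C[v_c] = G_i[v_c]$ from the definition of consistent chains, placing $v_p$ inside $G_i[v_c]$ by reading $v_p \hbefore v_c$ as $v_c \eancestor v_p$, and then transferring membership back into $G^C$ --- uses exactly the definitions and preceding lemmas the paper provides, and fills the gap in the way the authors evidently intended.
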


Now we state the following important propositions.
\begin{defn}[consistent root]
	Two chains $G_1$ and $G_2$ are root consistent, if for every $v$ contained in both chains, and $v$ is a root of $j$-th frame in $G_1$, then $v$ is a root of $j$-th frame in $G_2$.
\end{defn}
\begin{prop}
	If $G_1 \sim G_2$, then $G_1$ and $G_2$ are root consistent.
\end{prop}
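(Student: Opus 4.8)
The plan is to prove the proposition by induction on the frame index $j$, exploiting the fact that whether a vertex qualifies as a root of a given frame depends only on the subgraph reachable from it. The crucial observation is that if $u \hbefore v$ then $u \in G[v]$, so the entire recursive test defining roothood and frame membership for $v$ is evaluated entirely inside $G[v]$. Since $G_1 \sim G_2$ guarantees $G_1[v] = G_2[v]$ for every $v$ common to both chains (Definition~\ref{dfn:conchains}), the identical subgraphs must yield the identical verdict about $v$ in both chains. I would isolate this as the central sub-claim: the predicate ``$v$ is a root of frame $j$'' is a function of $G[v]$ alone.

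First I would establish the base case $j=1$. By Definition~\ref{def:root}, the roots of the first frame $R_1$ are exactly the leaf event blocks, and a vertex $v$ is a leaf precisely when it has no self-ref inside $G[v]$. Because $G_1[v] = G_2[v]$, the leaf status of any common $v$ is identical in both chains, so $v \in R_1$ in $G_1$ iff $v \in R_1$ in $G_2$.

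Next I would carry out the inductive step using strong induction. Assume root consistency holds for every frame up to $k$: any common vertex that is a frame-$i$ root for some $i \leq k$ in $G_1$ is a frame-$i$ root in $G_2$, and conversely. Suppose $v$ is common to both chains and is a frame-$(k+1)$ root in $G_1$. By Definition~\ref{def:root}, $v$ reaches more than $2n/3$ of the roots of the previous frames in $G_1$; let $S$ be such a witness set of roots through frame $k$. Each $u \in S$ satisfies $u \hbefore v$, so $u \in G_1[v] = G_2[v]$, whence $u$ is common to both chains and, by the induction hypothesis, carries the same root/frame label in $G_2$. Since the reachability $u \hbefore v$ is witnessed by a path lying inside $G[v]$, the same set $S$ certifies that $v$ reaches more than $2n/3$ of the prior-frame roots in $G_2$, so $v$ is a frame-$(k+1)$ root in $G_2$. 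Symmetry of the argument (swapping $G_1$ and $G_2$) delivers the converse, completing the induction.

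The main obstacle will be making the recursive definition of frame membership rigorous: ``root of frame $k+1$'' is defined relative to the roots of earlier frames, which are themselves only pinned down by the same inductive process. One must therefore verify that the frame labels of all roots inside $G[v]$ are already consistent \emph{before} invoking them as the threshold population for $v$ — which is exactly what the strong induction hypothesis over all frames $i \leq k$ supplies. A secondary point requiring care is that the count ``more than $2n/3$ roots'' must be measured against the same underlying root set in both chains; this follows once the labels agree, but it is worth stating explicitly that the witness set $S$ transferred from $G_1$ is literally reusable in $G_2$, so no recounting against a different population is needed.
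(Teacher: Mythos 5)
Your proposal is correct and follows essentially the same route as the paper's own proof: induction on the frame index, with the base case given by the leaf event blocks and the inductive step transferring the witness set $S$ of prior-frame roots from $G_1$ to $G_2$ via the equality $G_1[v] = G_2[v]$ guaranteed by chain consistency. Your version is somewhat more careful than the paper's --- isolating the sub-claim that frame membership is a function of $G[v]$ alone, noting that the reachability paths lie inside $G[v]$, and handling the converse direction by symmetry --- but the underlying argument is the same.
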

\begin{proof}
	By consistent chains, if $G_1 \sim G_2$ and $v$ belongs to both chains, then $G_1[v]$ = $G_2[v]$.
	We can prove the proposition by induction. For $j$ = 0, the first root set is the same in both $G_1$ and $G_2$. Hence, it holds for $j$ = 0. Suppose that the proposition holds for every $j$ from 0 to $k$. We prove that it also holds for $j$= $k$ + 1.
	 Suppose that $v$ is a root of frame $f_{k+1}$ in $G_1$. 
	Then there exists a set $S$ reaching 2/3 of members in $G_1$ of frame $f_k$ such that $\forall u \in S$ ($u\hbefore v$). As $G_1 \sim G_2$, and $v$ in $G_2$, then $\forall u \in S$ ($u \in G_2$). Since the proposition holds for $j$=$k$, 
	As $u$ is a root of frame $f_{k}$ in $G_1$, $u$ is a root of frame $f_k$ in $G_2$. Hence, the set $S$ of 2/3 members $u$ happens before $v$ in $G_2$. So $v$ belongs to $f_{k+1}$ in $G_2$. The proposition is proved.
\end{proof}

From the above proposition, one can deduce the following:
\begin{lem}
		$G^C$ is root consistent with $G_i$ for all nodes.
 \end{lem}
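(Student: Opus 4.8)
The plan is to derive this lemma directly as a corollary of the preceding proposition, which states that if $G_1 \sim G_2$ then $G_1$ and $G_2$ are root consistent. First I would recall that, by the definition of a global consistent chain, $G^C$ satisfies $G^C \sim G_i$ for every node $i$. This relation is exactly the hypothesis required by the proposition applied to the pair $(G_1, G_2) = (G^C, G_i)$.

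Next I would instantiate the proposition with $G_1 = G^C$ and $G_2 = G_i$. Since $G^C \sim G_i$ holds by definition, the proposition immediately yields that $G^C$ and $G_i$ are root consistent: for any event $v$ contained in both $G^C$ and $G_i$, if $v$ is a root of the $j$-th frame in $G^C$, then $v$ is a root of the $j$-th frame in $G_i$. Because the index $i$ was chosen arbitrarily, this conclusion holds uniformly across all nodes, which is precisely the statement of the lemma.

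The only point that deserves a moment's attention is verifying that the abstract definition of $G^C$ genuinely supplies the consistency relation $\sim$ in the exact form the proposition consumes; but since the definition of $G^C$ builds in $G^C \sim G_i$ by fiat, no additional argument is needed. Consequently there is no real obstacle here: the lemma is a one-line specialization of the proposition, and the induction on frame number that does the actual work has already been discharged in the proof of that proposition. I would therefore keep the proof short, simply citing the proposition together with the definition of the global consistent chain $G^C$.
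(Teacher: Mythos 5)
Your proof is correct and takes exactly the paper's route: the paper derives this lemma immediately from the preceding proposition (consistent chains are root consistent) together with the definition of the global consistent chain, which gives $G^C \sim G_i$ for all $G_i$ by fiat. Your instantiation $G_1 = G^C$, $G_2 = G_i$ is precisely the one-line specialization the paper intends, with the inductive work already done inside the proposition's proof.
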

Thus, all nodes have the same consistent root sets, which are the root sets in $G^C$. Frame numbers are consistent for all nodes.


\begin{defn}[Clotho]
	A root $r_k$ in the frame $f_{a+3}$ can nominate a root $r_a$ as Clotho if more than 2n/3 roots in the frame $f_{a+1}$ Happened-Before $r_a$ and $r_k$ Happened-Before the roots in the frame $f_{a+1}$.
\end{defn} 

\begin{lem}
	\label{lem:root}
	For any root set $R$, all nodes nominate same root into Clotho.
\end{lem}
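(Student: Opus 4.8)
The plan is to reduce the lemma to two consistency facts that are already established in the excerpt: all honest nodes agree on which event blocks are roots together with their frame numbers, and all honest nodes agree on the Happened-Before relation between any two blocks they both hold. Once these are in hand, the Clotho nomination becomes a purely deterministic predicate over consistent data, so it must evaluate identically on every node.

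First I would recall that, by Theorem~\ref{thm:conchains}, any two honest nodes have consistent OPERA chains, $G_i \sim G_j$, and hence by the root-consistency proposition they are root consistent: a block $v$ common to both is a root of frame $f_k$ in $G_i$ if and only if it is a root of the same frame $f_k$ in $G_j$. In particular the frame-indexed root set $R_{a+1}$ and the frame $f_{a+3}$ containing the candidate nominator $r_k$ are the same across nodes. Next I would invoke the stated invariance of Happened-Before across chains, namely $v_x \hbefore v_y$ in $G_i$ iff $v_x \hbefore v_y$ in $G_j$, which follows because consistent chains satisfy $G_i[v] = G_j[v]$ for every common $v$.

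With these two facts I would walk through the Clotho predicate clause by clause. The nomination of $r_a$ by $r_k$ asserts that $r_k$ lies in frame $f_{a+3}$, that more than $2n/3$ of the roots of frame $f_{a+1}$ Happened-Before $r_a$, and that $r_k$ Happened-Before those frame-$f_{a+1}$ roots. The frame memberships are fixed by root consistency, and each Happened-Before test has the same truth value on every node by the invariance above; therefore the count of frame-$f_{a+1}$ roots preceding $r_a$ is the same on every node, the $2n/3$ threshold is crossed on all nodes or none, and the same $r_k$ witnesses (or fails) the nomination everywhere. Hence all honest nodes nominate the same root into Clotho.

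The main obstacle is not the logical equivalence but its timing: the predicate is only meaningful on a node that already holds $r_k$, $r_a$, and all the frame-$f_{a+1}$ roots in its local view. I would close this gap by noting that these blocks, being roots in the common consistent chain $G^C$, are eventually propagated to every honest node, and that as soon as a node holds such a block $v$ its induced ancestor set $G_i[v]$ coincides with $G^C[v]$; thus the set of frame-$f_{a+1}$ roots that Happened-Before $r_a$ is computed over identical data. A secondary subtlety is ruling out ambiguity from forks, since one might worry that two forked copies could both act as the relevant root; this is dispatched by the earlier fork-detection lemma, which guarantees that a fork pair cannot both be roots on honest nodes, so each frame's root set is unambiguous and the nomination target is well defined.
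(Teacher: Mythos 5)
Your proof is correct, but it follows a genuinely different route from the paper's. The paper disposes of this lemma in three lines: it invokes Theorem~\ref{thm:sameopera} (OPERA consistency) --- a result that appears \emph{after} this lemma in the text --- and argues that since all nodes hold an OPERA chain of the same shape, the values in their ``flag tables'' must agree, hence the nominations agree; the flag table is a mechanism carried over from the earlier Lachesis papers and is never actually defined here. You instead use only material established \emph{before} the lemma: Theorem~\ref{thm:conchains} (consistent chains), the root-consistency proposition, and the invariance of Happened-Before across consistent chains, and you then observe that Clotho nomination is a deterministic predicate over frame memberships and Happened-Before tests, so it must evaluate identically on consistent data. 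Your route buys three things: it is self-contained (no forward reference, no appeal to an undefined flag table); it patches the timing gap the paper ignores (a node can only evaluate the predicate once it holds $r_k$, $r_a$, and the frame-$f_{a+1}$ roots, which you resolve via propagation together with $G_i[v] = G^C[v]$, noting also that monotonicity of the $2n/3$ threshold makes late arrivals harmless); and it explicitly rules out ambiguity of the root set via the fork-detection lemma. What the paper's route buys is brevity: Theorem~\ref{thm:sameopera} is a strictly stronger statement (all nodes converge to a single fork-free $G^C$), from which the lemma follows almost trivially --- at the price of a forward dependency and an undefined data structure. The only soft spot in your argument is the appeal to eventual propagation of blocks, which is a liveness assumption of the gossip protocol rather than a proved fact, but the paper's own treatment implicitly assumes the same.
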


\begin{proof}
	Based on Theorem~\ref{thm:sameopera}, each node nominates a root into Clotho via the flag table. If all nodes have an OPERA chain with same shape, the values in flag table should be equal to each other in OPERA chain. Thus, all nodes nominate the same root into Clotho since the OPERA chain of all nodes has same shape.
\end{proof}

\begin{lem}[Fork Lemma]
	\label{lem:fork}
	For any Clotho $v_c$ in an honest node, there does not exist a pair of forked event blocks ($v_i$,$v_j$) under $G[v_c]$.
\end{lem}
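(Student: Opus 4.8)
The plan is to argue by contradiction, reusing the quorum-intersection technique from the preceding fork-detection lemma, but applying it to the \emph{confirming roots} of the Clotho rather than to the forked events themselves. So I would assume there exist two events with $v_i \efork v_j$ and $v_i,v_j \in G[v_c]$, where $v_c$ is a Clotho computed on an honest node, and aim to exhibit an honest node whose local OPERA chain contains this very fork. That contradicts the fork-detection-and-removal guarantee of~\cite{fantom18}, namely that an honest node never retains both events of a fork in its accepted chain.

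First I would unfold the Clotho definition. If $v_c$ sits at frame $f_a$, then more than $2n/3$ of the roots at frame $f_{a+1}$ dominate (reach) $v_c$, i.e.\ $v_c \hbefore r$ for each such confirming root $r$. Since $v_i,v_j \in G[v_c]$, both are proper ancestors of $v_c$ (if one equalled $v_c$, the other would satisfy $\hbefore$ it, contradicting concurrency of a fork). Hence $v_i \hbefore v_c$ and $v_j \hbefore v_c$, and by transitivity of $\eancestor$ every confirming root satisfies $v_i \hbefore r$ and $v_j \hbefore r$. Equivalently, the fork $(v_i,v_j)$ lies inside $G[r]$ for each of these more than $2n/3$ roots.

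Next I would run the Byzantine counting argument. The confirming roots at frame $f_{a+1}$ come from more than $2n/3$ distinct creators, since each honest node contributes at most one root per frame. Because fewer than $n/3$ nodes are faulty, strictly more than $2n/3 - n/3 = n/3 > 0$ of these creators are honest; fix one such honest node $n_m$ with confirming root $r^{\ast}$. Then $G[r^{\ast}]$, being a subgraph of $n_m$'s OPERA chain, contains both $v_i$ and $v_j$, so the honest node $n_m$ holds both events of the fork. This contradicts the honest-node behaviour guaranteed on sync, where the conflict is detected and one event is discarded. Therefore no fork can occur under $G[v_c]$.

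I expect the main obstacle to lie in Steps one and three rather than in the final contradiction. Concretely, I must pin down the direction of the ``Happened-Before''/``dominate'' clause in the Clotho definition: the sensible reading (consistent with the main text) is that the frame-$f_{a+1}$ roots dominate $v_c$, i.e.\ they reach $v_c$, and the literal ``Happened-Before $r_a$'' phrasing appears to be a typo in the reversed direction that I would correct before the transitivity step goes through. I also must justify counting the confirming roots by \emph{distinct honest creators}, so the quorum overlap genuinely yields an honest witness; if multiple roots per creator per frame were permitted, I would instead intersect the honest-reach sets as in the earlier fork-detection lemma. Finally, the step ``an honest node never keeps both events of a fork'' is imported from~\cite{fantom18}; I would either adopt it as a standing assumption or re-derive it from the same overlap of more than $n/3$ honest members, so that the contradiction is fully supported.
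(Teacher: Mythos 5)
Your proposal is sound, but it is not the route the paper takes; the two arguments decompose the problem differently. The paper's proof invokes its preceding fork-detection lemma (forked events cannot both be roots on honest nodes), then picks $r_i$, the lowest-layer root in $G[v_c]$ that knows both $v_i$ and $v_j$ (such a root exists since $v_c$ itself is one), lets $r_j$ be a root nominating $v_c$, and finishes with a two-case analysis (either $r_i$ is disjoint from the fork or $r_i$ is one of the forked events), concluding that the fork is ``detected in $r_i$ and removed in $r_j$.'' It is an operational, detection-and-removal narrative, and frankly a sketchy one: the final case ends in an assertion rather than a crisp contradiction. You instead apply the quorum argument one level up: every one of the more than $2n/3$ confirming roots of $v_c$ at frame $f_{a+1}$ has the fork inside its own subgraph, so a counting argument over creators yields a confirming root $r^{\ast}$ created by an honest node $n_m$; since $G[r^{\ast}]$ must have been present in $n_m$'s local chain at creation time, $n_m$ held both forked events, contradicting the honest fork-removal behaviour imported from~\cite{fantom18}. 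Your version is more self-contained and produces a concrete honest witness, at the price of the distinct-creator assumption you correctly flag (a Byzantine node could in principle place several forked roots in one frame, shrinking the set of distinct creators below $2n/3$); your fallback of intersecting honest-reach sets, or equivalently using the main text's definition of Clotho in terms of being known by more than $2n/3$ \emph{nodes} rather than roots, closes that hole. The paper's route avoids that counting issue by resting on the already-proved fork-detection lemma, but pays for it with the vague terminal step; both proofs ultimately stand on the same BFT foundation, namely that any $>2n/3$ quorum contains an honest party that refuses to hold a fork.
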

\begin{proof}
Assume there is a Clotho $v_c$ in node $n_i$ such that there exists a pair of forked events ($v_i$,$v_j$) under $G[v_c]$.

Recall that from previous Fork detection lemma, if an honest node contains any fork ($v_i$,$v_j$), then they cannot both be roots. One of the forked events was removed and the detected fork was notified across other nodes. Thus, $v_i$ and $v_j$ cannot both be roots. Let $r_i$ denote a root at lowest layer that knows both $v_i$ and $v_j$.
It is obvious that $r_i$ belongs to $G[v_c]$.
Since $v_c$ is a Clotho, let assume there is a root $r_j$ that nominates $v_c$.
There are two cases: (1) $r_i$ is different from both $v_i$ and $v_j$; (2) $r_i$ is one of $v_i$ and $v_j$. If $r_i$ is part of the fork, we can detect in $r_i$ and completely remove the fork in $r_j$.
\end{proof}

\begin{lem}
	For a fork that happened-before a root a root $v$ in OPERA chain, any root $u$ must have seen the fork before nominating $v$ to be a Clotho.
\end{lem}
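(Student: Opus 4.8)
The plan is to reduce the statement to transitivity of the happened-before relation, using the nomination condition in the definition of Clotho to build a reachability chain from the nominating root $u$ down to $v$. First I would make the hypothesis precise: for a forked pair $v_i \efork v_j$ that ``happened-before $v$'', both branches lie in the past cone of $v$, i.e. $v_i \hbefore v$ and $v_j \hbefore v$, so by the subgraph definition both $v_i$ and $v_j$ belong to $G[v]$. This is exactly the sense in which the fork is visible to $v$.

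Next I would unpack the nomination. By the definition of Clotho, if $u$ (the root $r_k$ at frame $f_{a+3}$) nominates $v$ (the root $r_a$ at frame $f_a$), then there are more than $2n/3$ roots at the intermediate frame $f_{a+1}$ that $u$ reaches and that in turn reach $v$. Chaining these reachabilities through $\hbefore$ yields $v \hbefore u$. Since $G[u]$ consists of all ancestors of $u$ (including $u$) by the subgraph definition, and ancestry is transitive, every ancestor of $v$ is an ancestor of $u$; in particular $v \in G[u]$ and the whole past cone of $v$ is contained in $G[u]$.

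The conclusion then follows by transitivity of $\hbefore$: from $v_i \hbefore v \hbefore u$ and $v_j \hbefore v \hbefore u$ we obtain $v_i \hbefore u$ and $v_j \hbefore u$, so both forked events sit in $G[u]$ and $u$ has seen the fork. For the qualifier \emph{before nominating}, I would observe that the act of nominating $v$ is only possible once $u$ reaches $v$, and reaching $v$ already forces $G[u]$ to contain the entire ancestry of $v$ --- in particular the fork. Hence the fork is present in $u$'s local view no later than the moment $u$ becomes able to nominate $v$, which is what the lemma asserts. I can reinforce this with the fork-detection lemma applied to the more than $2n/3$ intermediate roots at $f_{a+1}$, of which at least one creator is honest, so the fork is not merely seen but detected on its way up to $u$.

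The main obstacle I anticipate is not the transitivity step, which is routine, but pinning down the semantics of ``Happened-Before'' inside the Clotho definition, whose wording runs opposite to the frame numbering (a root at the higher frame $f_{a+3}$ cannot literally happen-before roots at the lower frame $f_{a+1}$). I would first fix the intended reading, namely $v = r_a \hbefore (\text{the } 2n/3 \text{ roots at } f_{a+1}) \hbefore r_k = u$, and justify it from the protocol's reachability requirement for promoting an event to a root of the next frame. Once that chain $v \hbefore u$ is established, the remainder of the argument goes through cleanly.
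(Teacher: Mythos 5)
Your proof is correct, but it takes a genuinely different route from the paper's. The paper's own proof is a quorum argument: it observes that the nomination requires the relevant event blocks to be reached by more than $2n/3$ nodes and concludes, implicitly via the same quorum-intersection reasoning as the fork-detection lemma, that ``the OPERA chain can structurally detect the fork before roots become Clotho.'' Your main line is instead purely graph-theoretic: you chain the nomination condition into $v \hbefore u$ and then use transitivity of $\hbefore$ to place both forked events inside $G[u]$, so that $u$'s local view literally contains the fork by the moment nomination becomes possible. What your approach buys is a precise, assumption-free justification of the word ``seen'' (past-cone containment needs no honesty hypothesis); what the paper's approach speaks to is the stronger property of \emph{detection} by honest nodes, which you recover as reinforcement by applying the fork-detection lemma to the more than $2n/3$ intermediate roots at frame $f_{a+1}$. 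You also correctly diagnosed and repaired the directional glitch in the Clotho definition --- note that the paper's own proof inherits the same glitch, writing ``root $v$ happens before $v_x$ and $v_y$,'' i.e., backwards relative to the lemma statement --- and your reading $v \hbefore (\text{roots at } f_{a+1}) \hbefore u$ is the intended one. One small point worth making explicit: the two clauses of the Clotho definition could in principle refer to two different sets of more than $2n/3$ roots at $f_{a+1}$ (those dominating $v$, and those dominated by $u$); since each set exceeds $2n/3$ of the at most $n$ roots in that frame, they intersect, so the chain $v \hbefore w \hbefore u$ still holds for some intermediate root $w$ and your argument goes through unchanged.
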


\begin{proof}
	Suppose that a node creates forked event blocks ($v_x, v_y$). Suppose root $v$ happens before $v_x$ and $v_y$. Suppose there is a root $u$ that nominates $v$ as a Clotho. To create two Clothos that can reach both events, the event blocks should reach by more than 2n/3 nodes. Therefore, the OPERA chain can structurally detect the fork before roots become Clotho.
\end{proof}

\begin{thm}[OPERA consistency]
\label{thm:sameopera}
All nodes grows up into same consistent OPERA chain $G^C$, which is fork-free.
\end{thm}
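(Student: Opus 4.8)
The plan is to decompose the statement into its two assertions---(i) the existence of a single global chain $G^C$ to which every honest node's local view grows, and (ii) the fork-freeness of that chain---and to discharge each using the machinery already in place. I would define $G^C$ explicitly as the chain built from the finalized (Atropos) event blocks together with the subgraphs reachable from them, and then show it is consistent with each local $G_i$.

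First, for the convergence part, I would argue $G^C \sim G_i$ for every node. By Theorem~\ref{thm:conchains}, any event $v$ appearing in two local chains carries identical $k$ references, so by induction on ancestry $G_1[v] = G_2[v]$ whenever $v$ lies in both; the three structural lemmas ($G^C \sqsubseteq G_i$, $G^C[v] \sqsubseteq G_i[v]$, and the ancestor-closure property) then pin down $G^C$ as the common subgraph. The essential point is that finalization is deterministic: by the root-consistency proposition ($G_1 \sim G_2$ implies root consistency, which rests only on Theorem~\ref{thm:conchains}), all honest nodes assign the same frame numbers, and hence nominate the same roots to Clotho and the same Clothos to Atropos. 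Thus the set of Atropos vertices---and therefore $G^C$---is identical across nodes, and under eventual delivery each $G_i$ grows to contain the same $G^C$.

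Second, for fork-freeness, I would invoke the Fork-free Global chain proposition directly. Every vertex of $G^C$ lies in the subgraph $G[v_c]$ under some Clotho $v_c$, and by the Fork Lemma~\ref{lem:fork} each such subgraph contains no forked pair $(v_i, v_j)$ with $v_i \efork v_j$. Processing the Atropos vertices in topological order from lowest to highest layer and appealing to this per-Clotho fork-freeness shows that no fork survives anywhere in $G^C$; the fork-detection lemma underwrites this, since any fork cannot have both events become roots on honest nodes and therefore can never propagate into the finalized region.

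The main obstacle I anticipate is the convergence claim rather than fork-freeness. Fork-freeness reduces cleanly to the already-proved Fork Lemma, but showing that every honest node's chain \emph{actually grows to contain} the full $G^C$ requires an eventual-delivery (weak synchrony) assumption about the asynchronous network---precisely the liveness ingredient the FLP impossibility warns cannot be purely deterministic---so I would state that part under the standard assumption used for Lachesis. I would also take care to avoid the apparent circularity by which Lemma~\ref{lem:root} cites this very theorem: the clean route is to establish root and Clotho agreement from root-consistency alone, and then derive full OPERA-chain agreement as a corollary, placing this theorem at the top of the dependency chain rather than inside a cycle.
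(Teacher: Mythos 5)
Your proposal is correct at the paper's level of rigor, but it takes a genuinely different route. The paper proves Theorem~\ref{thm:sameopera} by a short contradiction argument: it supposes two nodes' chains both contain events $v_x$ and $v_y$ but with different paths between them, declares any such path difference to be a kind of fork attack, and then invokes Lemma~\ref{lem:fork} to conclude that each chain detects and removes the fork before any Clotho is generated, so no discrepancy survives. You instead decompose the claim into convergence and fork-freeness: for convergence you re-derive Clotho/Atropos agreement from Theorem~\ref{thm:conchains} and the root-consistency proposition (both of which are independent of this theorem), and for fork-freeness you apply Lemma~\ref{lem:fork} per-Clotho subgraph and sweep the Atropos vertices in topological order. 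Your version buys three things the paper's proof lacks: it makes explicit the eventual-delivery (liveness) assumption hidden in the phrase ``grows up into,'' which the paper never states; it avoids the paper's hand-wavy identification of ``path difference'' with ``fork'' (under Definition~\ref{dfn:conchains} a path difference between common events already contradicts $G_1[v]=G_2[v]$, so the paper is implicitly conflating inconsistency with forking); and it cleans up the dependency order, since in the paper the same-Clothos/Atropos claims appear only downstream of this theorem, whereas you establish them upstream from root consistency alone. The paper's proof, in exchange, is shorter and needs no appeal to finalization determinism. One caution: your worry about circularity is only apparent, as you note---Lemma~\ref{lem:root} cites this theorem, but this theorem's proof cites only Lemma~\ref{lem:fork} and the fork-detection lemma, so the paper's dependency chain is acyclic even without your restructuring.
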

\begin{proof}
Suppose that there are two event blocks $v_x$ and $v_y$ contained in both $G_1$ and $G_2$, and their path between $v_x$ and $v_y$ in $G_1$ is not equal to that in $G_2$. We can consider that the path difference between the nodes is a kind of fork attack. Based on Lemma~\ref{lem:fork}, if an attacker forks an event block, each chain of $G_1$ and $G_2$ can detect and remove the fork before the Clotho is generated. Thus, any two nodes have consistent OPERA chain. 
\end{proof}

\begin{thm}[H-OPERA consistency]
	\label{thm:samehopera}
	All nodes grows up into same consistent H-OPERA chain $H^C$.
\end{thm}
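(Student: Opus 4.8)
The plan is to reduce H-OPERA consistency to the already-established OPERA consistency together with the fact that the layering is a deterministic function of the underlying DAG. First I would invoke Theorem~\ref{thm:sameopera}, which guarantees that every node converges to one common, fork-free consistent OPERA chain $G^C$ with $G^C \sim G_i$ for all $i$. Since an H-OPERA chain is by definition the triple $H=(V,E,\phi)$ obtained by layering its OPERA chain, it suffices to show that the layer map $\phi$ agrees on every vertex common to two nodes; the common H-OPERA chain is then $H^C=(V^C,E^C,\phi)$ built on top of $G^C$, and consistency of H-OPERA chains is exactly the statement that $\phi^{G_i}(v)=\phi^{G_j}(v)$ on shared vertices.

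The crux is that $\phi_{LPL}(v)$ depends only on the induced subgraph $G[v]$. I would establish this from the recursive characterization underlying the longest-path rule, namely $\phi(v)=\max\{\phi(u)\mid u\in V^{+}(v)\}+1$ with leaves at layer $1$, arguing by induction on the height of $v$ within $G[v]$ that this value equals the length of the longest reference chain from $v$ down to a leaf, a quantity entirely determined by the ancestors of $v$ and the edges among them. Consistency of chains ($G_i\sim G_j$) means precisely $G_i[v]=G_j[v]$ for any shared $v$, so the inductive values coincide and $\phi^{G_i}(v)=\phi^{G_j}(v)$. This is exactly the $\phi_{LPL}$ layering-consistency theorem proved earlier in Section~\ref{se:consen}, which I would cite to close this step. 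Combining it with $G^C\sim G_i$ yields $\phi^{G^C}(v)=\phi^{G_i}(v)$ for every shared $v$, so all nodes assign identical layers on the common chain.

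The delicate point, and the main obstacle, is that nodes compute layers incrementally via O-LPL rather than on the static graph, so I must confirm the online assignment is independent of the order in which vertices and edges arrive. I would verify that the update rule $\phi(v)\gets\max\{\phi(u)\mid u\in V^{+}_{new}(v)\}+1$ executed by O-LPL is exactly the static recursion restricted to already-layered out-neighbours: whenever $v$ is selected its entire reference set is present in $Z$, so its assigned layer matches the static $\phi_{LPL}(v)$. An induction on topological depth then shows the online and static layerings coincide, giving the needed confluence. A secondary concern is transient forks appearing during synchronization, which could momentarily perturb reachability or layer widths; these are dispatched by appealing to the fork-freeness of $G^C$ from Theorem~\ref{thm:sameopera}, which ensures the final consensus chain on which $H^C$ is defined carries no forks, so the layering on $G^C$ is well defined and uniform across all honest nodes.
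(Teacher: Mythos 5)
Your proposal is correct and takes essentially the same route as the paper's own proof: invoke Theorem~\ref{thm:sameopera} to get the common consistent OPERA chain $G^C$, then use the determinism/locality of the LPL layering (the consistent-layering theorem of Section~\ref{se:consen}, i.e.\ that $\phi_{LPL}(v)$ depends only on $G[v]$) to conclude the induced layerings agree across all nodes. Your extra step checking that the online O-LPL assignment coincides with the static recursion fills in a detail the paper's terse proof leaves implicit, but the underlying decomposition is identical.
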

\begin{proof}
	The layering algorithm in $L_\phi$ when applies to the same graph $G$, produces the same layering information. Since all nodes builds into the same consistent OPERA chain $G^C$. The layering on $G^C$ is consistent across the nodes.
\end{proof}

\begin{prop}
	All nodes have the same Clothos and Atropos.
\end{prop}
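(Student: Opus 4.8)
The plan is to obtain this proposition as a corollary of the consistency results already established, reducing equality of the Clotho and Atropos sets to two ingredients: consistency of the underlying hierarchical chain, and determinism of the selection rules. First I would invoke Theorem~\ref{thm:samehopera} (H-OPERA consistency), which gives that at finality all honest nodes share the same consistent chain $H^C$ with identical layer assignment $\phi$ on every common vertex. Combining this with root consistency (the earlier proposition that $G_1 \sim G_2$ implies the two chains agree on which vertices are roots of which frame, and the lemma that $G^C$ is root consistent with every $G_i$) yields that the root graph $G_R$, the frame numbers $\phi_R$, and the $\hbefore$ relation restricted to $G^C$ are all identical across nodes.

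The second step is Clotho agreement. By definition a root $r_a$ of frame $f_{a}$ is nominated as a Clotho exactly when there is a nominator root $r_k$ of frame $f_{a+3}$ such that more than $2n/3$ roots of frame $f_{a+1}$ Happened-Before $r_a$ and $r_k$ Happened-Before those frame-$f_{a+1}$ roots. Every ingredient of this predicate — the set of roots, their frames, and the $\hbefore$ relation — is a function of $H^C$ alone, which all nodes share, so the predicate evaluates identically everywhere. This is precisely Lemma~\ref{lem:root}, which I would cite directly: all nodes nominate the same root into a Clotho. The Fork Lemma (Lemma~\ref{lem:fork}) guarantees that the subgraph under any Clotho is fork-free, so no later-arriving event can retroactively disturb a Clotho decision once it is made; this is what makes ``the same Clothos'' well defined across asynchronously evolving local views.

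The third step is Atropos agreement. An Atropos is a Clotho equipped with a consensus time, and that time is produced by the deterministic topological procedure \textsc{TopoSort}, which orders vertices by layer, then Lamport timestamp, then hash. All three keys are determined by $H^C$ and therefore agree across nodes; since the set of Clothos already coincides and the ordering function is deterministic on the common chain, each Clotho receives the same consensus time on every node and is promoted to Atropos accordingly. Hence the Atropos sets coincide as well.

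I expect the delicate point to be the stability argument: that a Clotho (and hence Atropos) decision depends only on the common sub-chain $G^C$ and not on the extra, node-specific events present in a particular local view at a given instant. Making this rigorous requires arguing that the quantities entering the selection predicate are downward closed under $\hbefore$, so they are fixed once the relevant roots lie in $G^C$, and that the Fork Lemma forbids a fork from later altering the reachability counts used in the $2n/3$ thresholds. Once this monotonicity is in place, the remaining steps are routine appeals to the determinism of layering, frame assignment, and \textsc{TopoSort}.
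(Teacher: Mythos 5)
Your proposal is correct and follows essentially the same route as the paper: the paper's (very terse) argument is exactly that all nodes grow into the same consistent chain $G^C$/$H^C$ (Theorems~\ref{thm:sameopera} and~\ref{thm:samehopera}, with Lemma~\ref{lem:root} and the root-consistency proposition supplying agreement on roots and Clotho nominations), and then the deterministic selection and ordering rules force identical Clotho and Atropos sets. Your write-up merely makes explicit the details the paper leaves implicit (the downward-closure/stability point via the Fork Lemma, and the determinism of \textsc{TopoSort} for consensus times), which matches the paper's own Finality theorem~\ref{thm:finality}.
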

\begin{thm}
	All nodes have consistent Main chains.
\end{thm}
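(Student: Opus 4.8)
The plan is to assemble the final claim from the chain of consistency results already established, treating the Main chain as a deterministic function of data that every honest node computes identically. Recall that the Main chain is the subgraph of the OPERA chain storing the Atropos vertices together with their assigned consensus times and the topological order produced by Algorithm~\ref{algo:topoordering}. Consistency of Main chains therefore reduces to three facts: (i) all nodes agree on the set of Atropos vertices; (ii) all nodes compute the same topological ordering over those vertices and their reachable subgraphs; and (iii) all nodes assign each Atropos the same consensus time.

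First I would invoke the immediately preceding proposition, which states that all nodes have the same Clothos and Atropos, to settle (i): the set $A$ of Atropos vertices fed into \textsc{TopoSort} is identical on every honest node. For (ii), I would appeal to Theorem~\ref{thm:sameopera} and Theorem~\ref{thm:samehopera}. Since every node grows into the same fork-free consistent global chain $G^C$ and the same consistent H-OPERA chain $H^C$, for each Atropos $a$ the subgraph $G[a]$ and its layer assignment $\phi$ coincide across nodes. The sorting key used by \textsc{SortVertexByLayer}---layer, then Lamport timestamp, then event-block hash---depends only on this shared, reproducible data: layers are identical by H-OPERA consistency, Lamport timestamps are determined by the structure of the consistent OPERA chain, and the hash of a block is cryptographically fixed. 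Hence the ordered list $S$ obtained by peeling the Atropos subgraphs is identical everywhere. Then (iii) follows because the consensus time is a deterministic function of this common ordering, so each Atropos receives the same time on every node; combined with (i) and (ii), this yields consistent Main chains.

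I expect the main obstacle to be step (ii): carefully arguing that the peeling order over the Atropos subgraphs is genuinely well-defined and tie-free. The potential gap is a tie in both layer and Lamport timestamp, which must be broken by the block hash; I would need to establish that the hashes induce a strict total order on distinct event blocks (no collisions, under the cryptographic assumption already used in Theorem~\ref{thm:conchains}) so that \textsc{SortVertexByLayer} is deterministic and the already-processed set $U$ evolves identically across nodes as the Atropos are processed from lower to higher layer. A secondary point worth making explicit is that the set $V_u = V_a \setminus U$ removed at each step is the same on all nodes, which again reduces to the shared subgraphs $G[a]$ guaranteed by $G^C$.

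Once the determinism of the ordering is pinned down, the remaining steps are routine compositions of the earlier theorems, and I would phrase the argument as an induction over the Atropos processed in sorted order: the base case is the lowest-layer Atropos, whose subgraph and ordering coincide by $G^C$-consistency, and the inductive step uses that $U$ after processing the first $m$ Atropos is identical across nodes, so the $(m{+}1)$-st contributes the same newly ordered vertices everywhere. This establishes that $S$, the consensus times, and hence the Main chains agree on all honest nodes.
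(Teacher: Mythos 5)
Your proposal is correct and follows essentially the same route as the paper: the paper's own (very terse) argument likewise just composes the prior consistency results---all nodes grow into the same $G^C$ (Theorem~\ref{thm:sameopera}), hence have the same Clothos and Atropos, hence consistent Main chains---which is exactly your decomposition (i)--(iii). Your version is considerably more careful than the paper's two-sentence sketch, in particular by folding in the determinism of \textsc{TopoSort} and the hash tie-breaking (which the paper defers to its subsequent Finality theorem and never spells out), but it is a refinement of the same argument rather than a different one.
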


Using the global consistent graph, 
all nodes grows into the same $G^C$. Thus, the set of Clothos and Atroposes is consistent in all the nodes.

\begin{thm}[Finality]
	\label{thm:finality}
	All nodes produce the same ordering and consensus time for finalized event blocks.
\end{thm}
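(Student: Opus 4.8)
The plan is to reduce the finality claim to a determinism argument for the topological-ordering procedure of Algorithm~\ref{algo:topoordering}, fed with inputs that the earlier consistency theorems already force to be identical across honest nodes. Concretely, I would show that every input consumed by \textsc{TopoSort} --- the fork-free chain, its layering $\phi$, the frame assignment $\phi_F$, and the Atropos set $A$ --- is a deterministic function of the common chain $G^C$, and that the procedure itself breaks all ties by a globally fixed key. Then two nodes running the same deterministic routine on identical inputs must emit the same ordered list $S$, and hence assign the same consensus times.

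First I would invoke Theorem~\ref{thm:sameopera}: all honest nodes converge to the same fork-free consistent chain $G^C$, so for any finalized $v$ the subgraph $G^C[v]$ is identical on every node. Next, Theorem~\ref{thm:samehopera} gives that the layering of $G^C$ is identical, so $\phi$ and therefore $\phi_F$ agree across nodes (root and frame consistency having already been established). The preceding proposition then yields that the Clotho and Atropos sets coincide, so the argument $A$ passed to \textsc{TopoSort} is the same on all nodes. At this point the outer control flow is pinned down: $Q = \textsc{SortVertexByLayer}(A)$ is identical, each Atropos $a \in Q$ is processed in the same order, and since $G[a] = (V_a, E_a)$ is determined by the shared chain while the processed set $U$ evolves identically, the peeled sets $V_u = V_a \setminus U$ coincide at every step.

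The main obstacle is showing that \textsc{SortVertexByLayer} really imposes a single global total order rather than merely a consistent partial one, so that ``the same ordering'' is exact and not just up to ties. The sort key is the triple of layer, Lamport timestamp, and hash: the layer is identical by H-OPERA consistency; the Lamport timestamp of $v$ is a deterministic function of $G^C[v]$ (the happened-before closure fixed by the shared chain), hence also identical; and the cryptographic hash is, by the collision-resistance assumption underlying Theorem~\ref{thm:conchains}, unique to each distinct event block. Thus no two distinct vertices share the full triple, the comparison is a strict total order, and its value on each pair depends on $G^C$ alone, so the sorted output is forced to agree everywhere. Appending these identical blocks of ordered vertices produces the same final list $S$, and assigning consensus time as a deterministic function of position within $S$ yields identical consensus times on all honest nodes, establishing Theorem~\ref{thm:finality}.
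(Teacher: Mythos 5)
Your proposal is correct and follows essentially the same route as the paper's proof: both reduce finality to the consistency of the inputs to the topological sort (the common chain via Theorem~\ref{thm:sameopera}, the layering via Theorem~\ref{thm:samehopera}, and the shared Clotho/Atropos sets) plus the determinism of \textsc{TopoSort}. The only difference is one of rigor, not of approach --- you explicitly verify that the (layer, Lamport timestamp, hash) key yields a strict total order and that the peeled sets $V_u$ evolve identically, details the paper's terse proof leaves implicit.
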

\begin{proof}
Given the set of Clotho is consistent and the layering is consistent across the nodes, the topological sorting of Clotho based on layering is consistent. Thus, the ordering of vertices in each the subgraphs under a Clotho is the same across the nodes. Hence, the consensus time of final event blocks which is assigned based on the topological ordering, is consistent.
\end{proof}

\subsection{Semantics of Lachesis protocol}\label{sec:semantics} 

This section gives the formal semantics of Lachesis consensus protocol.
We use CCK model \cite{cck92} of an asynchronous system as the base of the semantics of our Lachesis protocol. Events are ordered based on Lamport's happens-before relation. 
In particular, we use Lamport’s theory to describe global states of an asynchronous system.

We present notations and concepts, which are important for Lachesis protocol. In several places, we adapt the notations and concepts of CCK paper to suit our Lachesis protocol. 

An asynchronous system consists of the following:
\begin{defn}[process]
	A process $p_i$ represents a machine or a node. The process identifier of $p_i$ is $i$. A set $P$ = \{1,...,$n$\} denotes the set of process identifiers.
\end{defn}
\begin{defn}[channel]
	A process $i$ can send messages to process $j$ if there is a channel ($i$,$j$). Let $C$ $\subseteq$ \{($i$,$j$) s.t. $i,j \in P$\} denote the set of channels.
\end{defn}
\begin{defn}[state]
	A local state of a process $i$ is denoted by $s_j^i$.
\end{defn}
A local state consists of a sequence of event blocks $s_j^i = v_0^i, v_1^i, \dots, v_j^i$. 

In a DAG-based protocol, each $v_j^i$ event block is valid only the reference blocks exist exist before it. From a local state $s_j^i$, one can reconstruct a unique DAG. That is, the mapping from a local state  $s_j^i$ into a DAG is \emph{injective} or one-to-one. 
Thus, for Lachesis, we can simply denote the $j$-th local state of a process $i$ by the OPERA chain $g_j^i$ (often we simply use $G_i$ to denote the current local state of a process $i$).

\begin{defn}[action]
	An action is a function from one local state to another local state.
\end{defn}
Generally speaking, an action can be either: a $send(m)$ action where $m$ is a message, a $receive(m)$ action, and an internal action. A message $m$ is a triple $\langle i,j,B \rangle$ where $i \in P$ is the sender of the message, $j \in P$ is the message recipient, and $B$ is the body of the message. Let $M$ denote the set of messages. 
In Lachesis protocol, $B$ consists of the content of an event block $v$. 
Semantics-wise, in Lachesis, there are  two actions that can change a process's local state: creating a new event and receiving an event from another process.

\begin{defn}[event] An event is a tuple $\langle  s,\alpha,s' \rangle$ consisting of a state, an action, and a state.
\end{defn}

Sometimes, the event can be represented by the end state $s'$.
The $j$-th event in history $h_i$ of process $i$ is $\langle  s_{j-1}^i,\alpha,s_j^i \rangle$, denoted by $v_j^i$.

\begin{defn}[local history] A local history $h_i$ of process $i$ is a (possibly infinite) sequence of alternating local states  --- beginning with a distinguished initial state. A set $H_i$ of possible local histories for each process $i$ in $P$.
\end{defn}

The state of a process can be obtained from its initial state and the sequence of actions or events that have occurred up to the current state. 
In Lachesis protocol, we use append-only sematics. The local history may be equivalently described as either of the following:
$$h_i = s_0^i,\alpha_1^i,\alpha_2^i, \alpha_3^i \dots $$
$$h_i = s_0^i, v_1^i,v_2^i, v_3^i \dots $$
$$h_i = s_0^i, s_1^i, s_2^i, s_3^i, \dots$$

In Lachesis, a local history is equivalently expressed as:
$$h_i = g_0^i, g_1^i, g_2^i, g_3^i, \dots$$
where $g_j^i$ is the $j$-th local OPERA chain (local state) of the process $i$.

\begin{defn}[run] Each asynchronous run is a vector of local histories. Denoted by
	$\sigma = \langle h_1,h_2,h_3,...h_N \rangle$.
\end{defn}

Let $\Sigma$ denote the set of asynchronous runs.

We can now use Lamport’s theory to talk about global states of an asynchronous system.
A global state of run $\sigma$ is an $n$-vector of prefixes of local histories of $\sigma$, one prefix per process.
The happens-before relation can be used to define a consistent global state, often termed a consistent cut, as follows.

\begin{defn}[Consistent cut] A consistent cut of a run $\sigma$ is any global state such that if $v_x^i \rightarrow v_y^j$ and $v_y^j$ is in the global state, then $v_x^i$ is also in the global state. Denoted by $\vec{c}(\sigma)$.
\end{defn}

By Theorem~\ref{thm:conchains}, all nodes have consistent local OPERA chains. The concept of consistent cut formalizes such a global state of a run. A consistent cut consists of all consistent OPERA chains. A received event block exists in the global state implies the existence of the original event block.
Note that a consistent cut is simply a vector of local states; we will use the notation $\vec{c}(\sigma)[i]$ to indicate the local state of $i$ in cut $\vec{c}$ of run $\sigma$.



The formal semantics of an asynchronous system is given via  the satisfaction relation $\vdash$. Intuitively $\vec{c}(\sigma) \vdash \phi$, ``$\vec{c}(\sigma)$ satisfies $\phi$,'' if fact $\phi$ is true in cut $\vec{c}$ of run $\sigma$. 
We assume that we are given a function $\pi$ that assigns a truth value to each primitive proposition $p$. The truth of a primitive proposition $p$ in $\vec{c}(\sigma)$ is determined by $\pi$ and $\vec{c}$. This defines $\vec{c}(\sigma) \vdash p$.

\begin{defn}[equivalent cuts]
	Two cuts $\vec{c}(\sigma)$ and $\vec{c'}(\sigma')$ are equivalent  with respect to $i$ if: $$\vec{c}(\sigma) \sim_i \vec{c'}(\sigma') \Leftrightarrow \vec{c}(\sigma)[i] = \vec{c'}(\sigma')[i]$$
\end{defn}

We introduce two families of modal operators, denoted by $K_i$ and $P_i$, respectively. Each family indexed by process identifiers. 
Given a fact $\phi$, the modal operators are defined as follows:
\begin{defn}[$i$ knows $\phi$]
	$K_i(\phi)$ represents the statement ``$\phi$ is true in all possible consistent global states that include $i$’s local state''. 
		$$\vec{c}(\sigma) \vdash K_i(\phi) \Leftrightarrow \forall \vec{c'}(\sigma')   (\vec{c'}(\sigma') \sim_i \vec{c}(\sigma) \ \Rightarrow\ \vec{c'}(\sigma') \vdash \phi) $$
\end{defn}

\begin{defn}[$i$ partially knows $\phi$]
	$P_i(\phi)$ represents the statement ``there is some consistent global state in this run that includes $i$’s local state, in which $\phi$ is true.''
		$$\vec{c}(\sigma) \vdash P_i(\phi) \Leftrightarrow \exists \vec{c'}(\sigma) ( \vec{c'}(\sigma) \sim_i \vec{c}(\sigma) \ \wedge\ \vec{c'}(\sigma) \vdash \phi )$$ 
\end{defn}

The next modal operator is written $M^C$ and stands for ``majority concurrently knows.'' This is adapted from the ``everyone concurrently knows'' in CCK paper~\cite{cck92}.
The definition of $M^C(\phi)$ is as follows.

\begin{defn}[majority concurrently knows]
	$$M^C(\phi) =_{def} \bigwedge_{i \in S} K_i P_i(\phi), $$ where $S \subseteq P$ and $|S| > 2n/3$.	
\end{defn}

In the presence of one-third of faulty nodes, the original operator ``everyone concurrently knows'' is sometimes not feasible.
Our modal operator $M^C(\phi)$ fits precisely the semantics for BFT systems, in which unreliable processes may exist.

The last modal operator is concurrent common knowledge (CCK), denoted by $C^C$.
\begin{defn}[concurrent common knowledge]
	$C^C(\phi)$ is defined as a fixed point of $M^C(\phi \wedge X)$
\end{defn}
CCK defines a state of process knowledge that implies that all processes are in that same state of knowledge, with respect to $\phi$, along some cut of the run. In other words, we want a state of knowledge $X$ satisfying: $X = M^C(\phi \wedge X)$.	
$C^C$ will be defined semantically as the weakest such fixed point, namely as the greatest fixed-point of $M^C(\phi \wedge X)$.
It therefore satisfies:
$$C^C(\phi) \Leftrightarrow  M^C(\phi \wedge C^C(\phi))$$

Thus, $P_i(\phi)$ states that there is some cut in the same asynchronous run $\sigma$ including $i$’s local state, such that $\phi$ is true in that cut.

Note that $\phi$ implies $P_i(\phi)$. But it is not the case, in general, that $P_i(\phi)$ implies $\phi$ or even that $M^C(\phi)$ implies $\phi$. The truth of $M^C(\phi)$ is determined with respect to some cut $\vec{c}(\sigma)$. A process cannot distinguish which cut, of the perhaps many cuts that are in the run and consistent with its local state, satisfies $\phi$; it can only know the existence of such a cut. 

\begin{defn}[global fact]
	Fact $\phi$ is valid in system $\Sigma$, denoted by $\Sigma \vdash \phi$, if $\phi$ is true in all cuts of all runs of $\Sigma$.	
	$$\Sigma \vdash \phi 
	\Leftrightarrow (\forall \sigma \in \Sigma)(\forall\vec{c}) (\vec{c}(a) \vdash \phi)$$
\end{defn} 

\begin{defn}
	Fact $\phi$ is valid, denoted $\vdash \phi$, if $\phi$ is valid in all systems, i.e. 
	$(\forall \Sigma) (\Sigma \vdash \phi)$.
\end{defn}

\begin{defn}[local fact]
	 A fact $\phi$ is local to process $i$ in system $\Sigma$ if
	 $\Sigma \vdash (\phi \Rightarrow K_i \phi)$
\end{defn}

\begin{thm} If $\phi$ is local to process $i$ in system $\Sigma$, then $\Sigma \vdash (P_i(\phi) \Rightarrow \phi)$.	
\end{thm}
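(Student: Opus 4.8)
The plan is to unfold, in turn, the definition of validity $\Sigma \vdash$, of $P_i$, of locality, and of $K_i$, and then chain them together. Fix an arbitrary run $\sigma$ and an arbitrary cut $\vec{c}(\sigma)$; since $\Sigma \vdash \psi$ means $\psi$ holds at every cut of every run, it suffices to establish $\vec{c}(\sigma) \vdash (P_i(\phi) \Rightarrow \phi)$, i.e.\ to assume $\vec{c}(\sigma) \vdash P_i(\phi)$ and derive $\vec{c}(\sigma) \vdash \phi$.

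First I would invoke the definition of $P_i$: from $\vec{c}(\sigma) \vdash P_i(\phi)$ there exists a cut $\vec{c'}(\sigma)$ \emph{in the same run} with $\vec{c'}(\sigma) \sim_i \vec{c}(\sigma)$ and $\vec{c'}(\sigma) \vdash \phi$. Next, because $\phi$ is local to $i$, the hypothesis $\Sigma \vdash (\phi \Rightarrow K_i\phi)$, instantiated at the cut $\vec{c'}(\sigma)$, upgrades $\vec{c'}(\sigma) \vdash \phi$ to $\vec{c'}(\sigma) \vdash K_i\phi$. Finally, unfolding $K_i$ gives that $\phi$ holds at every cut that is $i$-equivalent to $\vec{c'}(\sigma)$; since $\vec{c}(\sigma) \sim_i \vec{c'}(\sigma)$, we conclude $\vec{c}(\sigma) \vdash \phi$, as required. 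The result then follows by universal generalization over $\sigma$ and $\vec{c}$.

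The one point that needs care — and which I regard as the only real obstacle — is the interplay between the existential quantifier in $P_i$, which ranges over cuts of the \emph{same} run $\sigma$, and the universal quantifier in $K_i$, which ranges over cuts of \emph{all} runs $\sigma'$. The argument goes through precisely because the witness cut $\vec{c'}$ supplied by $P_i$ lives in a run, and $K_i\phi$ at that witness then quantifies over \emph{all} $i$-equivalent cuts, a set that trivially contains our original $\vec{c}(\sigma)$. I would therefore make explicit that $\sim_i$ is symmetric: it is defined by the equality of local states $\vec{c}(\sigma)[i] = \vec{c'}(\sigma')[i]$, so $\vec{c'}(\sigma) \sim_i \vec{c}(\sigma)$ legitimately yields $\vec{c}(\sigma) \sim_i \vec{c'}(\sigma)$, which is exactly the instance of the $K_i$ clause we apply. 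With symmetry in hand the instantiation is immediate, and no combinatorial or graph-theoretic reasoning about the OPERA chain is needed; the statement is a purely logical consequence of the CCK modal definitions.
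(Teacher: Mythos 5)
Your proof is correct. One thing to be aware of: the paper never proves this theorem in-line. It states it bare and later remarks that ``for a proof of the above theorems and lemmas in this Section, we can use similar proofs as described in the original CCK paper,'' so there is no paper-internal argument to compare against; your derivation supplies exactly the argument the citation gestures at, and it is the standard one. Concretely, you instantiate the validity $\Sigma \vdash (\phi \Rightarrow K_i\phi)$ at the witness cut $\vec{c'}(\sigma)$ produced by $P_i$ (legitimate, since $\sigma \in \Sigma$ and $\vec{c'}$ is a cut of $\sigma$), obtain $\vec{c'}(\sigma) \vdash K_i\phi$ by modus ponens at that cut, and then use the universal clause of $K_i$ together with symmetry of $\sim_i$ (it is defined by equality of $i$'s local states, $\vec{c}(\sigma)[i] = \vec{c'}(\sigma)[i]$) to transfer $\phi$ back to the original cut $\vec{c}(\sigma)$. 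Your flagged subtlety --- the existential in $P_i$ ranges over cuts of the \emph{same} run while the universal in $K_i$ ranges over cuts of \emph{all} runs --- is precisely the right point to make explicit, and your resolution (the $K_i$ universal at the witness covers the original cut, since that cut is $i$-equivalent to the witness) is sound. No graph-theoretic input about OPERA chains is needed, as you say; the statement is purely a consequence of the modal definitions.
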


\begin{lem}
	If fact $\phi$ is local to 2/3 of the processes in a system $\Sigma$, then $\Sigma \vdash (M^C(\phi) \Rightarrow \phi)$ and furthermore $\Sigma \vdash (C^C(\phi) \Rightarrow \phi)$.
	\end{lem}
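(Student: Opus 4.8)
The plan is to derive both implications from a single counting observation together with two facts already on hand. First I would record the knowledge axiom $K_i(\psi) \Rightarrow \psi$, which holds in this model because $\sim_i$ is reflexive (one always has $\vec{c}(\sigma) \sim_i \vec{c}(\sigma)$, since $\vec{c}(\sigma)[i] = \vec{c}(\sigma)[i]$). Second, the preceding theorem already supplies $\Sigma \vdash (P_i(\phi) \Rightarrow \phi)$ whenever $\phi$ is local to $i$. Let $T$ denote the set of processes to which $\phi$ is local, so $|T| \geq 2n/3$, and recall that every instance of $M^C$ is witnessed by some $S \subseteq P$ with $|S| > 2n/3$. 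Since $|S| + |T| > 4n/3 > n$, the sets must overlap; I fix any process $i \in S \cap T$ and use it throughout.

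For the first implication I would argue as follows. Suppose $\vec{c}(\sigma) \vdash M^C(\phi)$, so in particular $\vec{c}(\sigma) \vdash K_i P_i(\phi)$ for the chosen $i \in S \cap T$. Applying the knowledge axiom collapses this to $\vec{c}(\sigma) \vdash P_i(\phi)$, and since $i \in T$ the preceding theorem gives $\vec{c}(\sigma) \vdash \phi$. As the cut and run were arbitrary, this establishes $\Sigma \vdash (M^C(\phi) \Rightarrow \phi)$.

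For the second implication I would unfold the fixed point. If $\vec{c}(\sigma) \vdash C^C(\phi)$, then by the defining equivalence $\vec{c}(\sigma) \vdash M^C(\phi \wedge C^C(\phi))$, hence $\vec{c}(\sigma) \vdash K_i P_i(\phi \wedge C^C(\phi))$ for the same $i \in S \cap T$. The knowledge axiom yields $\vec{c}(\sigma) \vdash P_i(\phi \wedge C^C(\phi))$. The key move is that $P_i$ is monotone: any witnessing cut $\vec{c'} \sim_i \vec{c}(\sigma)$ that satisfies $\phi \wedge C^C(\phi)$ in particular satisfies $\phi$, so the same cut witnesses $\vec{c}(\sigma) \vdash P_i(\phi)$. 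Using $i \in T$ and the preceding theorem once more gives $\vec{c}(\sigma) \vdash \phi$, which establishes $\Sigma \vdash (C^C(\phi) \Rightarrow \phi)$.

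The hard part is conceptual rather than computational. The tempting but wrong route in the second implication is to try to show that $\phi \wedge C^C(\phi)$ is itself local to $2n/3$ processes so as to reuse the first implication directly; this locality is neither assumed nor generally true. The trick that sidesteps it is to discard the $C^C(\phi)$ conjunct at the level of $P_i$ by monotonicity, so that only the instance $P_i(\phi) \Rightarrow \phi$ guaranteed by locality of $\phi$ alone is ever invoked. I would also take care to verify that the hypothesis ``local to $2/3$ of the processes'' combines with the strict $> 2n/3$ threshold inside $M^C$ to force $S \cap T \neq \emptyset$, which is exactly the Byzantine quorum-intersection count $|S \cap T| \geq |S| + |T| - n > n/3 > 0$.
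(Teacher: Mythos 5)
Your proof is correct, and it is worth noting that the paper itself never writes one: after stating this lemma the paper only remarks that ``for a proof of the above theorems and lemmas in this Section, we can use similar proofs as described in the original CCK paper.'' Your argument is precisely the natural adaptation of that CCK-style proof, and it supplies the one ingredient that is genuinely new in this BFT setting: in the original ``everyone concurrently knows'' operator the conjunction ranges over all of $P$, so locality of $\phi$ at any single process suffices, whereas here the witnessing set $S$ for $M^C$ and the locality set $T$ are both only of size greater than $2n/3$, and your quorum-intersection count $|S \cap T| \geq |S| + |T| - n > n/3 > 0$ is exactly what licenses picking a process $i$ that lies in both. The three building blocks you use are all sound in this semantics: the knowledge axiom $K_i(\psi) \Rightarrow \psi$ from reflexivity of $\sim_i$, the preceding theorem $\Sigma \vdash (P_i(\phi) \Rightarrow \phi)$ for $\phi$ local to $i$, and monotonicity of $P_i$ (a cut witnessing $\phi \wedge C^C(\phi)$ witnesses $\phi$), which correctly sidesteps the trap of needing $\phi \wedge C^C(\phi)$ itself to be local. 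One small expository caution: the witnessing set $S$ is tied to the particular cut and the particular instance of $M^C$ being unfolded, so the $S$ in your second implication (witnessing $M^C(\phi \wedge C^C(\phi))$) need not equal the $S$ in your first; since each implication chooses its own $i \in S \cap T$ from its own witnessing set, nothing breaks, but the phrase ``the same $i$'' should be read as ``an $i$ chosen the same way,'' not literally the same process across both arguments.
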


\begin{defn}
	A fact $\phi$ is attained in run $\sigma$ if $\exists \vec{c}(\sigma) (\vec{c}(\sigma) \vdash \phi)$.
\end{defn}

Often, we refer to ``knowing'' a fact $\phi$ in a state rather than in a consistent cut, since knowledge is dependent only on the local state of a process.
Formally, $i$ knows $\phi$ in state $s$ is shorthand for
$$\forall \vec{c}(\sigma) (\vec{c}(\sigma)[i] = s \Rightarrow \vec{c}(\sigma) \vdash \phi)$$

For example, if a process in Lachesis protocol knows a fork exists (i.e., $\phi$ is the exsistenc of fork) in its local state $s$ (i.e., $g_j^i$), then a consistent cut contains the state $s$ will know the existence of that fork.

\begin{defn}[$i$ learns $\phi$]
 Process $i$ learns $\phi$ in state $s_j^i$ of run $\sigma$ if $i$ knows $\phi$ in $s_j^i$ and, for all previous states $s_k^i$ in run $\sigma$, $k < j$, $i$ does not know $\phi$.
 \end{defn}

The following theorem says that if $C_C(\phi$ is attained in a run then all processes $i$ learn $P_i C^C(\phi)$ along a single consistent cut.

\begin{thm}[attainment]
	 If $C^C(\phi)$ is attained in a run $\sigma$, then the set of states in which all processes learn $P_i C^C(\phi)$ forms a consistent cut in $\sigma$.	
\end{thm}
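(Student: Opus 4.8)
The plan is to follow the standard attainment argument for concurrent common knowledge, adapted to the majority operator $M^C$. First I would exploit the fixed-point characterization $C^C(\phi) \Leftrightarrow M^C(\phi \wedge C^C(\phi))$ together with the monotonicity of $P_i$ and $K_i$ under logical implication. Since $\phi \wedge C^C(\phi) \Rightarrow C^C(\phi)$, applying $P_i$ and then $K_i$ gives $K_i P_i(\phi \wedge C^C(\phi)) \Rightarrow K_i P_i C^C(\phi)$ for each $i$ in the certifying majority set $S$, so unwinding $M^C$ yields $C^C(\phi) \Rightarrow \bigwedge_{i \in S} K_i P_i C^C(\phi)$. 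By the attainment hypothesis there is a cut $\vec{c}$ with $\vec{c}(\sigma) \vdash C^C(\phi)$, hence every such $i$ knows $P_i C^C(\phi)$ in its local state $\vec{c}(\sigma)[i]$. In particular the set of local states in which $i$ knows $P_i C^C(\phi)$ is nonempty, so the learning state $s_i$ — the earliest such state, per the definition of \emph{learns} — is well defined and satisfies $s_i \hbefore \vec{c}(\sigma)[i]$ or $s_i = \vec{c}(\sigma)[i]$.

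Second, I would record the defining property of each learning state. Because $i$ knows $P_i C^C(\phi)$ in $s_i$, the definition of $K_i$ says every cut through $s_i$ satisfies $P_i C^C(\phi)$; unwinding $P_i$ on one such cut produces a \emph{consistent} cut $\vec{e}^{(i)}$ of the same run with $\vec{e}^{(i)}[i] = s_i$ and $\vec{e}^{(i)} \vdash C^C(\phi)$. Thus each learning state lies on a consistent cut along which the full fact $C^C(\phi)$ holds. This is the technical crux, since it is precisely what lets me transfer knowledge between processes.

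Third, let $\vec{c}^\ast$ be the global state whose $i$-th component is the learning state $s_i$, and argue by contradiction that it is consistent. If it were not, there would be a happened-before edge $v_x^j \hbefore v_y^i$ with $v_y^i$ at or before $s_i$ (so $v_y^i \in \vec{c}^\ast$) but $v_x^j$ strictly after $s_j$ (so $v_x^j \notin \vec{c}^\ast$). I would then invoke the witness cut $\vec{e}^{(j)}$ indexed by the \emph{sender} $j$: since $\vec{e}^{(j)}[j] = s_j$ and $v_x^j$ is later than $s_j$, the event $v_x^j$ is not in $\vec{e}^{(j)}$; consistency of $\vec{e}^{(j)}$ together with $v_x^j \hbefore v_y^i$ then forces $v_y^i$ out of $\vec{e}^{(j)}$ as well, so $\vec{e}^{(j)}[i]$ precedes $v_y^i$ and hence precedes $s_i$ strictly. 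But $\vec{e}^{(j)} \vdash C^C(\phi)$, so by the first step $i$ already knows $P_i C^C(\phi)$ in the strictly earlier state $\vec{e}^{(j)}[i]$, contradicting minimality of $s_i$ as the learning state. Hence no such edge exists and $\vec{c}^\ast$ is a consistent cut.

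I expect the main obstacle to be twofold. The first delicate point is the witness extraction of the second step: one must confirm that knowing $P_i C^C(\phi)$ delivers a consistent cut through $s_i$ on which the \emph{entire} $C^C(\phi)$ holds, not merely $\phi$, which is exactly where the fixed-point equation rather than a single unfolding is needed. The second is choosing the correct witness in the contradiction — the cut $\vec{e}^{(j)}$ anchored at the sender $j$, not the receiver $i$ — so that the inconsistency pushes $i$'s knowledge strictly before $s_i$ rather than merely reproducing a known cut. A secondary issue to settle is the phrase ``all processes'': under $M^C$ the guarantee is only over the majority set $S$, so I would either state the cut over $S$, or appeal to the locality of $P_i C^C(\phi)$ (a fact determined by $i$'s local state alone) to justify the extension, flagging the distinction from the original $E^C$-based formulation.
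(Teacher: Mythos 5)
Your proof is correct, but there is no step-by-step comparison to make: the paper never actually proves this theorem. Immediately after stating it, the paper remarks that proofs of the theorems and lemmas of that section ``can use similar proofs as described in the original CCK paper''~\cite{cck92}, and moves on. What you have written is essentially that missing argument --- the standard CCK attainment proof --- and you get all three load-bearing steps right: (i) the fixed point $C^C(\phi) \Leftrightarrow M^C(\phi \wedge C^C(\phi))$ together with monotonicity of $K_i$ and $P_i$ gives $C^C(\phi) \Rightarrow K_i P_i C^C(\phi)$, so the learning states exist; (ii) knowing $P_i C^C(\phi)$ at the learning state $s_i$ yields a consistent witness cut through $s_i$ satisfying the full fact $C^C(\phi)$, not merely $\phi$; (iii) in the contradiction, the witness cut must be anchored at the sender $j$ (the process whose event escapes the cut of learning states), which forces $\vec{e}^{(j)}[i]$ strictly below $s_i$ while $\vec{e}^{(j)}$ still satisfies $C^C(\phi)$, contradicting minimality of $s_i$. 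Your remark that anchoring the witness at the receiver instead yields no contradiction is exactly right; that is where a naive attempt breaks.

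The caveat you flag about ``all processes'' deserves emphasis, because it is a defect of the theorem as stated in this paper rather than of your proof. The statement is copied from the CCK setting, where the underlying operator is ``everyone concurrently knows''; here $M^C$ quantifies only over some set $S$ with $|S| > 2n/3$, so processes outside $S$ need never learn $P_i C^C(\phi)$ at all, and your step (iii) additionally needs the violating receiver $i$ to lie in the certifying set of the cut $\vec{e}^{(j)}$, which a priori may be a different majority than the one certifying the original cut. The paper is silent on both points; fixing a single certifying set $S$ and asserting the consistent cut over the learning states of $S$ only, as you propose, is the cleanest repair --- and it is more care than the paper itself takes.
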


We have presented a formal semantics of Lachesis protocol based on the concepts and notations of concurrent common knowledge~\cite{cck92}.
For a proof of the above theorems and lemmas in this Section, we can use similar proofs as described in the original CCK paper.

With the formal semantics of Lachesis, the theorems and lemmas described in Section~\ref{se:proof} can be expressed in term of CCK. For example, one can study a fact $\phi$ (or some primitive proposition $p$) in the following forms: `is there any existence of fork?'. One can make Lachesis-specific questions like 'is event block $v$ a root?', 'is $v$ a clotho?', or 'is $v$ a atropos?'. This is a remarkable result, since we are the first that define such a formal semantics for DAG-based protocol.




\clearpage
\section{Reference}\label{se:ref}

\renewcommand\refname{\vskip -1cm}
\bibliographystyle{unsrt}
\bibliography{LCA}

\end{document}